\documentclass[11pt]{article}
\usepackage{mydef2col}
\usepackage{markArticle}

\newcommand{\Lap}[1]{\widehat{#1}}
\newcommand{\Xfdt}{\mathcal{X}}
\newcommand{\Kmat}{\mathbf{K}}

\title{Beyond Rough Volatility: Decoupling Memory and Scaling via a Generalized Langevin Equation}

\shorttitle{Decoupling Memory and Scaling via GLE}

\author{
    \authorstyle{ Andrey Itkin}
    \newline \newline
    \institution{FRE department, Tandon School of Engineering, New York University, email: \url{aitkin@nyu.edu}} \\
}

\date{\today}

\begin{document}

\maketitle

\lettrineabstract{Borrowed from non-equilibrium statistical mechanics, the generalized Langevin equation (GLE) is imported as a framework for stochastic volatility to address the structural limitations of fractional Brownian motion (fBm), the standard engine of rough volatility. The fBm forces a single parameter to set two logically independent properties at once: how volatility scales and how it remembers. The GLE separates them using a memory kernel $K$, a potential $U$, and a noise covariance $C$. Memory becomes a measurable object, and an asymmetric potential supplies a lever on variance skew that the price-variance correlation cannot reach. Physical-measure tests on public datasets decisively reject two constrained corners of the class, a memoryless leverage effect and time-reversal symmetry, while the central rough scaling constraint is left identification-limited rather than refuted. The paper reports these limits honestly, and validation on industry-grade data remains a valuable direction. The risk-neutral construction and the joint SPX--VIX calibration will be developed in a companion paper.
}

\section{Introduction} \label{intro}

Rough volatility is a statement about memory. Throughout, $S_t$ denotes the price
of the underlying at time $t$, $V_t$ its instantaneous variance, and
$\sigma_t=\sqrt{V_t}$ the instantaneous volatility. In the rough specification
$\log\sigma_t$ is modelled as a fractional Brownian motion with Hurst index
$H\approx0.1$, a value far below the $H=1/2$ of ordinary Brownian motion. The
resulting dynamics are non-Markovian, and they reproduce two features of the
data: the observed smoothness of realized volatility time series,
\cite{GJR2018}, and the steepening of the at-the-money implied skew as option
maturity shortens, \cite{ALV2007, Fukasawa2017}. The construction is further
supported by a microstructural limit theorem. Nearly unstable Hawkes processes
with heavy-tailed excitation converge, after rescaling, to the rough Heston model,
\cite{ElEuchFukasawaRosenbaum2018, HorstXu2024}, and imposing absence of
arbitrage on the market impact function forces that excitation to decay as a
power law, \cite{JusselinRosenbaum2020}.

We start from an observation about this construction which is easy to state and,
we believe, has not been drawn out. Fractional Brownian motion carries a single
parameter, $H$, and that parameter is asked to specify two logically independent
properties. The first is how the process scales, meaning how its fluctuations
grow when time is stretched. The second is how rough it is locally, meaning how
its increments behave over vanishing intervals. Memory is a property of the
second kind. A model built on fractional Brownian motion therefore cannot
strengthen the memory while holding the scaling fixed, or the reverse.

\cref{sec:corner} makes this precise, and doing so requires one change of
coordinate that we describe here in words. A process that is self-similar with
index $H$ is not stationary, so it has no memory kernel in the usual sense.
Replacing calendar time $t$ by logarithmic time $\tau=\log t$ and rescaling the
process by $t^{-H}$ produces a stationary process, and a stationary process does
have a well defined memory kernel. We write that kernel $\mathcal{K}(\tau)$,
reserving the plain symbol $K$ for kernels in calendar time. In these
coordinates a fractional driver has $\mathcal{K}(\tau)\sim\tau^{-2H}$ as
$\tau\to0$, while its scaling index remains $H$. The memory exponent is $2H$ and
the scaling exponent is $H$. One number sets both, which is the constraint we
propose to remove.

The generalized Langevin equation (GLE) removes it. Let $Y_t$ be a real-valued
latent process, with no direct market interpretation, which drives the variance
through a link function $\varphi$ according to $V_t=\varphi(Y_t)$. Taking
$\varphi=\exp$ enforces positivity of the variance automatically. Let $Y_t$ obey
\begin{equation}  \label{eq:intro-gle}
m\ddot Y_t = -\int_0^tK(t-s)\dot Y_s\,d s - U'(Y_t) + \xi_t ,
\end{equation}
where a dot denotes differentiation with respect to time. The three objects on
the right are the content of the model. The scalar $m\geq0$ multiplies the second
derivative and is an inertia, with $m=0$ the overdamped case in which the
variance driver carries no momentum of its own. The function $K$ is the memory
kernel, also called the friction or dissipation kernel. It is convolved against
the past velocity $\dot Y_s$ for $s<t$, so it measures how strongly the earlier
history of the variance resists its present rate of change. A kernel
concentrated at the origin gives an ordinary diffusion with no memory, while a
kernel with a heavy tail retains influence from the distant past. The function
$U$ is a potential, and $-U'(Y_t)$ is the restoring force pulling the variance
driver back toward the minimum of $U$, which plays the role of mean reversion.

Finally, $\xi_t$ is a mean-zero stationary Gaussian process with covariance $C(t) = \EE[\xi_t\xi_0]$. Two properties of $\xi$ are worth noting because we do not assume either. First, the noise need not be white; that is, $C$ is a free function of the lag rather than a spike at zero lag. Writing $\delta$ for the Dirac delta and $\gamma > 0$ for friction strength, only in the memoryless corner $K(t) = 2\gamma\,\delta(t)$ does $C$ itself reduce to a delta function. Second, the noise need not match the friction. The physical derivation of \eqref{eq:intro-gle} ties the two through the fluctuation-dissipation relation $C = \Theta K$, where the constant $\Theta > 0$ measures agitation strength and plays the role of temperature\footnote{In a purely white noise case without the fluctuation-dissipation theorem, $\Theta$ functions as a scaling parameter for noise intensity rather than a physical temperature, reused here as a generic intensity parameter.}. Consequently, a long-memory friction is traditionally accompanied by long-memory noise, but we do not impose that tie. Leaving $C$ and $K$ as two independent functions provides the core freedom upon which the paper is built, and \cref{sec:noise} sets out the three cases it produces: white noise with a memory kernel, fluctuation-dissipation noise that is itself long-range, and the unrestricted pairing we calibrate.

The \eqref{eq:intro-gle} therefore indicates that the variance driver is pushed toward the bottom of a well by $U$, agitated by $\xi$, and resisted by a friction that depends not on its present velocity alone but on its entire past through $K$. Rough volatility is the single point of this class at which $K$ is a pure power law whose exponent is locked to the scaling index. The claim of this paper is that unlocking them is worth doing, for three reasons.

The first is that memory becomes a measured quantity rather than a modelling
assumption. Under the rough specification there is nothing to measure, since
fitting $H$ to the smoothness of realized volatility fixes the kernel exponent by
construction. Under \eqref{eq:intro-gle} the kernel is estimated in its own
right. The apparatus for doing so already exists and is mature, having been built
for coarse-grained molecular dynamics. It includes rational approximation of the
Laplace transform $\Lap{K}(z)=\int_0^\infty e^{-zt}K(t)\,d t$, taken for $z$ in
the right half-plane, with the fluctuation-dissipation relation preserved exactly, \cite{LeiBakerLi2016}, Bayesian estimation of an arbitrary discretized kernel with credible intervals, \cite{WillersKamps2022}, and error bounds tying trajectory accuracy to the kernel estimation error in a weighted norm, \cite{LangLu2025}. None of it has been pointed at volatility.

The second is that the roughness estimation controversy becomes a question with a
definite answer. \cite{ContDas2024} show that a realized volatility series
exhibits an apparent Hurst index below $1/2$ even when the spot volatility
generating it is an ordinary diffusion, and \cite{AmraniGuyon2022} find that the
short end of the equity skew term structure is not well described by a power law
and extrapolates to a finite value at zero maturity. Both findings concern the
scaling exponent. Neither touches the kernel, because the kernel does not enter
the estimators they use. \cref{sec:contdas} argues that this is not an
accident but a structural feature of roughness estimators, and that the GLE
supplies a test of memory that is immune to the objection, because it estimates a
kernel from a response function rather than an exponent from a time series.

The third is joint calibration to the SPX and VIX implied volatility surfaces.
The VIX is the index computed from a thirty-day forward variance swap rate, so
the two surfaces are read from options on different underlyings and constrain
different features of the same dynamics. The short-maturity SPX skew is governed
by local regularity, while VIX options price the forward variance curve over the
thirty-day window and are therefore governed by how memory decays at that horizon,
\cite{AbiJaberIllandLi2022}. Under a fractional kernel these are one parameter.
Under \eqref{eq:intro-gle} the short-time exponent of $K$, written $\beta_0$, and
its tail exponent, written $\beta_\infty$, are separate features of one function.
That is part of the freedom the joint problem needs.

The rest of that freedom comes from the potential, and the reason is worth
stating early because it has nothing to do with memory. The VIX is a functional
of the variance path alone. Whenever the variance driver is autonomous, meaning
its dynamics do not involve the price, the correlation $\rho$ between the
Brownian motion driving the price and the one driving the variance therefore does
not enter the law of the VIX under the pricing measure $\mathbb{Q}$. The natural
instrument for the SPX skew is thus unavailable on the VIX side, and in a model
with quadratic (harmonic) potential a single volatility-of-volatility parameter is left to fit the VIX smile and to supply whatever the SPX skew needs beyond
$\rho$. That is the over-constraint at the centre of the joint problem, and it is
a property of the specification rather than of the market. A nonzero third
derivative $U'''$ skews the stationary law of the variance and thereby supplies
an instrument acting on the VIX side alone. \cref{sec:joint} develops this and
states the resulting hypothesis in a form that can be refused.

That the rough constraint is the binding one is supported from outside our
framework. \cite{AbiJaberIllandLi2022} calibrate a class of Gaussian polynomial
volatility models across a decade of daily SPX and VIX surfaces and find that a
conventional one-factor Markovian specification, not a rough one, delivers the
better joint fit. Their class fixes the kernel acting on the noise and enlarges a
static polynomial link. Ours varies the dissipation kernel and the potential,
which enlarges the dynamics instead, but their empirical conclusion points the
same way: the rough constraint is not the one the data wants.

Three objections have to be answered and we state them before the contribution. First, once $K$ is fixed, the overdamped form of \eqref{eq:intro-gle} is a stochastic Volterra equation, solved by the same Prony approximation and
Markovian lift as rough Heston, \cite{AbiJaberElEuch2019, BayerBreneis2023}. There
is no computational advantage and we do not claim one.

Second, memory in the price is not free. A friction kernel acting on the log-price
$\log S_t$ makes returns autocorrelated, and the compensating drift that restores
the martingale property of the discounted price under $\mathbb{Q}$ removes
exactly what the kernel added. The Langevin dynamics must therefore govern a
latent driver of the variance rather than the traded asset. \cref{sec:2d} shows which couplings between the two do survive.

Finally, the fluctuation-dissipation theorem (FDT) ties the noise covariance
$\EE[\xi_t\xi_s]$ to the kernel $K(t-s)$, and it is that tie which makes
\eqref{eq:intro-gle} a derivation rather than an ansatz, \cite{Klimontovich1995stat,Zwanzig2001}. It is also an equilibrium
statement, and a market is not in equilibrium. We do not impose it. We measure
the size of its violation instead, which is an observable with no counterpart in
the rough literature and is treated in \cref{sec:measure}.

Our contribution is fourfold. We exhibit rough volatility as the corner of a GLE
class in which the memory exponent and the scaling exponent are constrained to be
equal, and quantify what that constraint costs. We show that the Markovian
approximation of \cite{CarrItkin2019, Itkin2024} is the opposite corner, with the
kernel switched off entirely, and that it nonetheless reproduces the observed
skew over the traded maturity range, which bounds how much of the phenomenology
memory can be responsible for. We import the kernel estimation machinery from
non-equilibrium statistical mechanics and apply it to volatility. We give a test
of memory that survives the objection of \cite{ContDas2024}. However, we don't calibrate the resulting model jointly to SPX and VIX against the benchmarks of \cite{AbiJaberIllandLi2022} and \cite{Guyon2024}, and don't do any pricing tests under $\mathbb{Q}$ measure leaving this for the companion paper \cite{ItkinGLE2b}.

The remainder of the paper is organized as follows. \cref{sec:GLE} introduces the GLE for volatility. \cref{sec:corner} formalizes rough volatility as a constrained corner within this framework. \cref{sec:measure} outlines how memory can be treated as a measurable object, while \cref{sec:contdas} argues that conventional roughness estimators do not observe the kernel. \cref{sec:freedom} details the structural freedom provided by a two-exponent kernel and the choice of potential. \cref{sec:joint} applies this to the joint SPX and VIX calibration problem. \cref{sec:mark_embd} constructs the Markovian embedding for the inertial case $m>0$. \cref{sec:2d} discusses where the memory may sit, including its role in the leverage effect via two-dimensional dynamics and the connection to the Zumbach time-reversal asymmetry. Then \cref{sec:design} presents various tests and falsification criteria, followed by concluding remarks in \cref{sec:conclusion}.

\section{The GLE for volatility} \label{sec:GLE}

Throughout this paper the dynamics are stated under the physical measure
$\mathbb{P}$. This is a deliberate choice and it is the natural one, because
everything the paper sets out to do is a statement about realized time series:
estimating the memory kernel, testing whether it departs from the rough form,
measuring the fluctuation-dissipation ratio, and reading the sign of the leverage
coupling from the return-volatility cross-correlation. None of these is a pricing
statement, and none requires a risk-neutral measure. The pricing measure
$\mathbb{Q}$ enters only when an implied volatility surface is calibrated using
options data, which is not attempted here. The variance driver introduced below
is not traded, so the market is incomplete and the passage from $\mathbb{P}$ to
$\mathbb{Q}$ is a construction rather than a change of drift that can be written
down. That construction, and the joint SPX--VIX calibration it makes possible,
will be developed in the companion paper \cite{ItkinGLE2b}. There we explain why the two measures give separate bodies of evidence that cannot be merged. Until then,
every expectation, correlation and spectral density is understood under
$\mathbb{P}$, and the symbol $\mathbb{Q}$ appears only where it is explicitly
named.

The latent driver $Y_t$, the link $\varphi$, the kernel $K$, the potential $U$, and the noise $\xi_t$ were introduced with \eqref{eq:intro-gle}. One further object is required. Define
\begin{equation}
 C(t) \;=\; \EE[\xi_t\xi_0]
\end{equation}
as the autocovariance of the driving noise, which is well defined because $\xi$ is stationary. In the physical setting from which \eqref{eq:intro-gle} is derived, $C$ and $K$ are not independent. They are linked by the second fluctuation-dissipation relation
\begin{equation} \label{eq:fdt}
C(t) \;=\; \Theta\,K(t),
\end{equation}
in which the constant $\Theta>0$ quantifies the strength of the agitation and assumes the role of temperature. The \eqref{eq:fdt} asserts that the same past history responsible for energy dissipation via the friction term simultaneously injects energy via the noise. The relation is valid whenever the fast degrees of freedom that were eliminated in deriving \eqref{eq:intro-gle} obey Hamiltonian dynamics and are prepared in thermal equilibrium.\footnote{``Prepared in thermal equilibrium'' means that the initial values of the fast variables are randomly drawn from the Boltzmann distribution at temperature $\Theta$. Those variables need not remain in equilibrium during the subsequent dynamics.} In that setting the microscopic time-reversibility of the Hamiltonian flow, together with the equipartition of energy among those fast degrees of freedom, forces the noise autocorrelation $C(t)$ to be strictly proportional to the memory kernel $K(t)$. Consequently the freedom to prescribe $K$ and $C$ independently is removed.

We call models satisfying \eqref{eq:fdt} \emph{equilibrium} and those violating
it \emph{athermal}. A market has no reason to satisfy it, so we do not assume
\eqref{eq:fdt} anywhere in what follows. \cref{sec:measure} treats the
size of the violation as a quantity to be estimated, and
\cref{cor:onsager} shows that absence of arbitrage rules the
equilibrium case out altogether.

\subsection{The driving noise} \label{sec:noise}

Imposing relation \eqref{eq:fdt} strictly determines whether $\xi$ is white or colored noise. We treat the nature of the noise as a free parameter only because we choose not to enforce this relation. Consequently, this freedom must be specified explicitly rather than left implicit.

\begin{myremark}[White noise and memorylessness are one assumption] \label{rem:white}
Under \eqref{eq:fdt} the noise covariance is $C=\Theta K$. Hence $\xi$ is white,
$C(t)=2\Theta\gamma\,\delta(t)$, if and only if the friction is memoryless,
$K(t)=2\gamma\,\delta(t)$, in which case \eqref{eq:intro-gle} collapses to an
ordinary Langevin equation and $Y$ is an Ornstein-Uhlenbeck process. A friction
kernel with memory and a white driving noise is therefore not an equilibrium
GLE, whatever else it may be.
\end{myremark}

Since we do not impose \eqref{eq:fdt}, the model carries two independent
functions rather than one: the friction kernel $K$ and the noise covariance $C$.
Three specifications appear in this paper and it is worth naming them together.

In the \emph{athermal} case, the noise is white ($C=2\Theta\delta$) while $K$ retains memory. This is not an arbitrary pairing, but rather the natural outcome of the microstructural limit. In this regime, the driving randomness is the martingale part of a compensated counting process converging to Brownian motion. Consequently, all memory resides in the excitation kernel and none in the noise, \cite{ElEuchFukasawaRosenbaum2018, JusselinRosenbaum2020}. This regime is analyzed in \cref{prop:athermal}.

In the \emph{equilibrium} case, $C=\Theta K$. Assuming a power-law decay $K(t)\propto t^{-\alpha}$ for $\alpha\in(0,1)$, the noise becomes fractional Gaussian noise with index $H_\xi=1-\alpha/2$. Because this index exceeds $1/2$, the noise is mathematically persistent. The interplay between this persistent driver and the rough response it produces is the primary focus of \cref{prop:fdt}.

In the \emph{general} case, $C$ and $K$ are decoupled. This is the primary specification used for our calibration. As detailed in \cref{sec:measure}, we treat the structural mismatch between the noise and the memory kernel as a fundamental observable phenomenon rather than a modeling nuisance.

This decoupling matters for identification, because the two functions are seen
by different measurements. The mean response of $Y$ to a small deterministic
perturbation is governed by the friction kernel alone: it depends on $K$ and on
the restoring force, but not on the noise covariance $C$. The stationary
autocovariance of $Y$, by contrast, depends on both. The two functions are
therefore separately identifiable, $K$ from the response and $C$ from the
fluctuations, and the comparison between them is itself informative. When the
equilibrium relation $C=\Theta K$ holds the two agree up to the constant
$\Theta$. When it fails, the size of the discrepancy measures how far the
dynamics sit from equilibrium. That discrepancy, formalized later as the
frequency-dependent ratio $\Xfdt(\omega)$, is one of the observables this paper
proposes to estimate. \cref{sec:measure} develops it and \cref{sec:design} turns
it into a falsifiable test.

A memory kernel raises a natural concern about computational cost, which we
address before proceeding. The friction term in \eqref{eq:intro-gle} convolves
$K$ against the past velocity of $Y$. The drift at time $t$ therefore depends on
the entire trajectory up to $t$ rather than on the current state. Dynamics of
this kind are non-Markovian. A direct simulation must retain the full history at
each step, and the cost of doing so grows with the length of that history. This
is the burden one might expect memory to impose.

The burden is avoided whenever the kernel can be written as a sum of decaying
exponentials, $K(t)=\sum_{n=1}^{N} w_n e^{-\gamma_n t}$ with $w_n > 0$ being weights. Each exponential mode is represented by an auxiliary variable $u_n$, an Ornstein-Uhlenbeck process that relaxes at the rate $\gamma_n$ and is driven by
its own noise. The friction on $Y$ is recovered exactly as the weighted sum
$\sum_n w_n u_n$. The enlarged system $(Y, u_1, \dots, u_N)$ is Markovian,
because each variable evolves from its current value alone, and it reproduces the
memory of the original equation without storing any past. This construction is the \emph{Markovian lift}. It is the standard method for simulating the stochastic Volterra dynamics that \eqref{eq:intro-gle} generates, \cite{AbiJaberElEuch2019, BayerBreneis2023}. A strict power-law kernel is not itself a finite exponential sum. The lift is therefore a numerical device, used only where the dynamics are
simulated or priced, and a power-law kernel is approximated by a finite sum to
any required accuracy over a bounded time window for that purpose. The structural
results of this paper are established for the exact kernel and do not rely on
this approximation.

The lift also determines the driving noise at no additional cost. In the
equilibrium case, the fluctuation-dissipation relation \eqref{eq:fdt} requires
the noise to be colored. This colored noise is produced automatically by the
lift. Each auxiliary variable carries an independent white noise, and the
weighted sum that reconstructs the friction has exactly the covariance that
\eqref{eq:fdt} demands. No correlated process is ever simulated directly,
\cite{LeiBakerLi2016}. The athermal case, in which the noise is white while the
friction retains memory, follows from the same construction once the auxiliary
weights are untied from the friction rates. Colored and white noise are equally
inexpensive, and the choice between them is a modelling decision rather than a
computational one.

\subsection{Standing assumptions} \label{sec:assumptions}

To place our subsequent results on a rigorous footing, we establish three conventions that hold throughout the paper.

\myparagraph{The noise is additive.}
In \eqref{eq:intro-gle} the noise $\xi_t$ enters as a source term, not as a coefficient multiplying $Y$ or its history. All
state dependence of the variance is carried instead by the link $\varphi$,
through $V_t=\varphi(Y_t)$. The choice $\varphi=\exp$ recovers the log-normal
convention and keeps $V_t$ positive. This is a deliberate departure from the
rough Heston model, in which the noise is multiplicative, entering the variance
dynamics through a factor $\sqrt{V_t}$ inside the convolution. We prefer additive
noise with a nonlinear link for a specific reason. It leaves $Y$ a Gaussian
process whenever the potential is quadratic, and the Gaussian structure of the
driver is what makes the spectral arguments of \cref{sec:corner} available in
closed form. Multiplicative noise destroys that structure and would leave the
central identity of \cref{prop:welded} without an elementary statement. The two
conventions coincide only in the memoryless Gaussian case, so nothing is lost by
fixing the additive one at the outset.

\myparagraph{The kernel is admissible.}
We assume throughout that $K$ is locally integrable, nonnegative, and
nonincreasing on $(0,\infty)$. These conditions are enough to define the
convolution in \eqref{eq:intro-gle} and to guarantee that the friction opposes
rather than reinforces the motion. At two points we assume more, namely that $K$
is completely monotone, meaning it is the Laplace transform of a nonnegative
measure. The power-law kernel $t^{-\alpha}$ has this property. Complete
monotonicity is used in \cref{lem:cm}, where it forces the relaxation of the
forward variance curve to be monotone, and it is the property that the
two-exponent kernel of \cref{sec:freedom} may relax, which is exactly why that
kernel can produce behaviour the rough class cannot. We flag its use at each
occurrence rather than assume it globally.

\myparagraph{No memory acts on the price.}
The variance driver carries all of the non-Markovian structure, and the price is
an ordinary diffusion. Under the physical measure of this paper it follows
\begin{equation} \label{eq:price}
d S_t \;=\; \mu_t\, S_t\,d t \;+\; S_t\sqrt{V_t}\,d W_t, \qquad d\langle W, B\rangle_t \;=\; \rho\,d t,
\end{equation}
where $W$ and $B$ are $\mathbb{P}$-Brownian motions, $B$ underlying the noise
$\xi$ of the variance driver, and $\mu_t$ is the drift of the price under
$\mathbb{P}$. The correlation $\rho$ is thus defined between two drivers in the
same measure. No memory kernel appears in \eqref{eq:price}. This is a modelling
choice, and it is the conservative one.

The scalar $\rho$ describes the instantaneous, memoryless correlation between price and variance shocks. \Cref{sec:2d} replaces it with an off-diagonal memory kernel $K_{YX}$, of which $\rho$ is the zero-lag limit. Nothing in the intervening sections depends on whether the leverage coupling is instantaneous or carries memory, so the simpler scalar is used until the full two-dimensional dynamics are taken up. The relationship between the two is made precise in \Cref{rem:rho-limit}.

Although we do not introduce a pricing measure in this paper, doing so would cause a friction kernel acting on the log-price to induce autocorrelated returns. The drift adjustment required to restore the martingale property of the discounted price would then exactly annihilate this memory. Under \eqref{eq:price}, however, memory is inherently absent from the price dynamics rather than being artificially introduced and subsequently canceled. Consequently, the only admissible locus for memory is the variance driver.

Two more general specifications are available, and we record them here to fix the
scope of the present paper. In the first, the price remains a semimartingale but
is coupled to the variance driver through a memory kernel rather than through the
instantaneous correlation $\rho$ alone. The leverage effect then acquires a term
structure, and \eqref{eq:price} is recovered as the memoryless corner in which
that coupling collapses to a single correlation. No-arbitrage constrains the
coupling sharply. It forces the off-diagonal memory to act from price to variance
and not the reverse, as \cref{sec:2d} establishes.

In the second and most general specification, the price equation is itself a GLE. The log-price and the variance driver then form a coupled two-dimensional system with a full memory-kernel matrix, meaning the price is no longer a simple semimartingale. Calibrating this system is structurally equivalent to pricing in the Marketron model — a multidimensional, non-Markovian diffusion with non-traded states under market incompleteness, \cite{HalperinItkinMarketron,HalperinItkinMarketron2}. This approach requires, e.g., indifference pricing and some utility function to explicitly obtain the market price of risk, a framework detailed in the companion paper \cite{ItkinGLE2b} alongside the joint SPX–VIX calibration it enables. By contrast, the present paper works throughout with \eqref{eq:price} to treat the memory of the variance driver in isolation. Isolating the variance driver in this way allows the subsequent structural results to be stated for the exact kernel and proved without the apparatus of a pricing measure.

\subsection{Notation}
\label{sec:notation}

We fix notation once. The model has three groups of objects: the traded and
latent state, the three functions that specify the dynamics, and the exponents
extracted from those functions.

\myparagraph{State.}
$S_t$ is the spot price of the underlying and $F_t$ the forward, $X_t=\log S_t$
the log-price, $V_t$ the instantaneous variance and $\sigma_t=\sqrt{V_t}$ the
instantaneous volatility. $Y_t$ is the latent driver of the variance, a real
valued process with no direct market interpretation, and $\varphi$ is the link
function through which it acts, $V_t=\varphi(Y_t)$. Taking $\varphi=\exp$
recovers the log-normal convention of the rough fractional stochastic volatility
model (RFSV) and enforces positivity automatically. Taking $\varphi=\mathrm{id}$ recovers the affine convention of Heston and requires a boundary condition at the origin.

\myparagraph{The three specifying functions.}
$K$ is the memory kernel, also called the friction or dissipation kernel. It
carries units of inverse time squared and enters \eqref{eq:intro-gle} convolved
against the velocity $\dot Y$, so $K$ measures how strongly the past rate of
change of the variance driver resists its present rate of change. A kernel
concentrated at the origin, $K=\gamma\delta$, is memoryless and returns an
ordinary diffusion; a kernel with a heavy tail retains influence from the distant
past. $U$ is the potential, a function of $Y$ alone, whose negative gradient
$-U'$ is the restoring force pulling the variance driver toward its typical
level. $\xi_t$ is the driving noise, a stationary Gaussian process with
covariance $C(t)=\EE[\xi_t\xi_0]$, and $\Theta$ is its intensity, the analogue
of temperature in the physical reading. The scalar $m$ multiplies $\ddot Y$ and
is the inertia,  while $m=0$ is the overdamped case in which the variance driver has no momentum of its own.

\myparagraph{Derived quantities.}
$\Lap{f}(z)=\int_0^\infty e^{-zt}f(t)\,dt$ denotes the Laplace transform and
$\tilde f(\omega)$ the Fourier transform. $G$ is the response function of
\eqref{eq:response}, so that $Y=G*\xi$; the symbol $R$ used in \cref{sec:distance}
denotes the same response measured from a perturbation, so $\tilde R(\omega)=\Lap{G}(i\omega)$.
$E_{\alpha,\beta}$ is the two-parameter Mittag-Leffler function. $W$ and $B$ are Brownian motions driving the price and
the variance respectively, with $\rho$ their instantaneous correlation.

\myparagraph{Exponents.}
These are the quantities the paper is about, and the central claim of \cref{sec:corner} is that rough volatility identifies the first two while the GLE does not.

\begin{center}
\begin{tabular}{lp{0.72\textwidth}}
\hline
$H$ & self-similarity index of the variance driver: the exponent by which the
process rescales in time. Estimated from the scaling of realized volatility
moments. \\
$\beta_0$ & short-time exponent of the memory kernel, $K(t)\sim t^{-\beta_0}$ as
$t\to0$. Governs local regularity and hence the short-maturity SPX skew. \\
$\beta_\infty$ & tail exponent of the memory kernel, governing how slowly the
forward variance curve decays, which is what VIX futures price. \\
$\mu$ & tempering rate, the reciprocal of the timescale at which $K$ crosses
over from $\beta_0$ behaviour to $\beta_\infty$ behaviour. \\
$\alpha$ & exponent of a pure power-law kernel in clock time, used in
\cref{prop:athermal,prop:fdt} where $\beta_0=\beta_\infty=\alpha$. \\
$a,\,c_3,\,d$ & coefficients of the potential in \eqref{potential}:
$a=U''(0)$ the mean-reversion rate, $c_3=U'''(0)$ the asymmetry, $d$ the quartic
confinement. \\
$\lambda$ & linear restoring rate, equal to $a$ when $U$ is quadratic. \\
\hline
\end{tabular}
\end{center}

\myparagraph{Lamperti coordinates.}
These are used only in \cref{sec:corner}. $\tau=\log t$ is logarithmic time,
$\mathcal{K}$ the memory kernel in that coordinate, and $r(\tau)$ the stationary
autocorrelation there. \emph{Escape quantities}, used only in
\cref{sec:potential}: $\omega_b$ is the curvature of the potential at a barrier
top, $\Delta U$ the barrier height, and $\lambda_r$ the Grote-Hynes reactive
frequency of \eqref{eq:gh}.

\myparagraph{Option quantities.}
$T$ is maturity, $s$ the forward start date, $k$ log-moneyness, $\mathcal{S}(T) = \partial I/\partial k|_{k=0}$ the at-the-money implied skew, $I$ implied volatility, and $\Delta$ the thirty-day VIX averaging window.

\medskip
Two symbols are overloaded in the source literature and we avoid the collisions
here. First, the letter $X$ denotes the fluctuation-dissipation ratio in \cite{CugliandoloKurchanPeliti1997} and the log-price here, so we write $\Xfdt$ for the former throughout.

Second, the exponent $\alpha$ denotes the Hawkes excitation tail in \cite{JusselinRosenbaum2020}, as well as the friction exponent introduced above. It is crucial to emphasize that the friction and Hawkes exponents are fundamentally distinct. While both the GLE and Hawkes kernels are frequently referred to as ``the memory kernel'' and their tail exponents commonly denoted by $\alpha$, they are distinct mathematical entities that induce roughness through different mechanisms. Conflating them is a common source of error. The GLE kernel $K$ acts specifically as a \emph{dissipation} kernel. It multiplies the historical velocity of the variance driver and enters the response function via a resolvent of the second kind: for $K(t)\sim t^{-\alpha_K}$, one obtains $z\Lap{K}(z)\sim z^{\,\alpha_K}$. Consequently, the response $\Lap{G}(z)=(z\Lap{K}(z)+ \lambda)^{-1}$ inherits this fractional exponent, and the driver acquires a Hurst index of $H = \alpha_K - \tfrac12$, as formalized in \cref{prop:athermal}.

The Hawkes kernel $\phi$ is instead an \emph{excitation} kernel. It multiplies
the past intensity of an order-flow counting process and enters through a renewal
resolvent of the first kind, in which the branching ratio $\|\phi\|_{1}$ is tuned
to one. For a near-critical kernel with tail $\phi(t)\sim t^{-(1+\alpha_\phi)}$,
$\alpha_\phi\in(1/2,1)$, the rescaled intensity converges to $H = \alpha_\phi -
\tfrac12$ - the result of \cite{JaissonRosenbaum2016} that underlies the
microstructural foundation of rough volatility.

While the two formulas appear identical, this superficial resemblance is misleading. Although they yield the same value for $H$, $\alpha_K$ and $\alpha_\phi$ represent fundamentally distinct parameters. The former is the singularity exponent of a friction acting on a velocity. The latter is the tail exponent of an excitation acting on an event rate, offset by one because the excitation is convolved against the counting measure rather than its rate. Consequently, while $\alpha_K = \alpha_\phi$ at the level of induced roughness, this is merely an equality of outcomes, not of mechanisms. This equivalence holds strictly in the overdamped linear regime, where both models reduce to the same fractional resolvent. Away from this limit, wherever the GLE carries inertia or a non-quadratic potential, the friction framework possesses degrees of freedom absent in the excitation framework, and the identification breaks down. To preserve this structural distinction, we exclusively use $\alpha_K$ for the friction exponent throughout this work.

Note, that the friction exponent $\alpha_K$ and the Hawkes tail exponent $\alpha_\phi$ are distinguished only where both appear together. Elsewhere only the friction exponent occurs, and we drop the subscript, writing $\alpha$ for $\alpha_K$ throughout.

\section{Rough volatility as a constrained corner} \label{sec:corner}

Throughout this section we work in the overdamped, linearly-restored corner of
\eqref{eq:intro-gle}, in which $m=0$ and $U'(y)=\lambda y$ with $\lambda\ge0$.
The equation is then linear, and with the initial condition $Y_0=0$ its Laplace
transform is
\begin{equation}
  \Lap{Y}(z) \;=\; \Lap{G}(z)\,\Lap{\xi}(z),
  \qquad
  \Lap{G}(z) \;=\; \bigl(z\Lap{K}(z)+\lambda\bigr)^{-1},
  \label{eq:response}
\end{equation}
where $G$ is the response function, defined by $Y=G*\xi$. We write $S_Y(\omega)$
for the spectral density of a stationary process $Y$, related to its
autocovariance by $S_Y(\omega)=\int_{\mathbb{R}} e^{-i\omega t}\,\EE[Y_tY_0]\,d
t$, and we use the standard map between spectral decay and roughness: a
stationary Gaussian process whose spectral density satisfies $S_Y(\omega)\sim
c\,|\omega|^{-(2H+1)}$ as $|\omega|\to\infty$ has local Hölder regularity of
order $H^-$ and increments that scale as those of a fractional Brownian motion of
Hurst index $H$, \cite{GJR2018}. All exponents below are read through this map.

\begin{proposition}[Athermal corner] \label{prop:athermal}
Let $K(t)=t^{-\alpha}/\Gamma(1-\alpha)$ with $\alpha\in(1/2,1)$, and let $\xi$ be
white. Then
\begin{equation}   \label{eq:G-athermal}
\Lap{G}(z)=\frac{1}{z^{\alpha}+\lambda}, \qquad  G(t)=t^{\alpha-1}E_{\alpha,\alpha}(-\lambda t^{\alpha}),
\end{equation}
where $E_{\alpha,\beta}$ is the two-parameter Mittag-Leffler function. As
$t\to0$, $G(t)\sim t^{\alpha-1}/\Gamma(\alpha)$, so the response coincides at
short scales with the fractional kernel of index $\alpha$, and the driver $Y$ has
Hurst index $H=\alpha-\tfrac12\in(0,\tfrac12)$. The choices $\varphi=\exp$ and
$\varphi=\mathrm{id}$ recover RFSV and rough Heston respectively.
\end{proposition}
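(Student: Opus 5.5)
The plan has four steps: compute the Laplace transform of the kernel, invert it to get $G$, read off the short-time behaviour, and convert that into a Hurst index through the spectral map fixed at the start of the section.

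\emph{Laplace transform and response.} For $\alpha\in(0,1)$,
\[
\int_0^\infty e^{-zt}t^{-\alpha}\,dt=\Gamma(1-\alpha)z^{\alpha-1},
\]
so the normalisation gives $\Lap{K}(z)=z^{\alpha-1}$. Hence $z\Lap{K}(z)=z^{\alpha}$, and \eqref{eq:response} yields $\Lap{G}(z)=(z^\alpha+\lambda)^{-1}$ on $\Re z>0$. This is the Caputo-type form of the equation, $D^\alpha Y+\lambda Y=\xi$ with $Y_0=0$.

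\emph{Inversion.} I will use the standard Mittag-Leffler identity
\[
\int_0^\infty e^{-zt}t^{\beta-1}E_{\alpha,\beta}(-\lambda t^\alpha)\,dt=\frac{z^{\alpha-\beta}}{z^\alpha+\lambda}, \qquad \Re z>\lambda^{1/\alpha}.
\]
It follows by expanding $E_{\alpha,\beta}$ in its series, integrating term by term, and summing the resulting geometric series in $-\lambda z^{-\alpha}$; analytic continuation then extends it to $\Re z>0$. Setting $\beta=\alpha$ gives the stated formula for $G$. Since $E_{\alpha,\alpha}(0)=1/\Gamma(\alpha)$, we get $G(t)\sim t^{\alpha-1}/\Gamma(\alpha)$ as $t\to0$. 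This is exactly the Riemann–Liouville kernel, so at short times $Y=G*\xi$ is the Riemann–Liouville fractional integral of order $\alpha$ of white noise. For $\alpha>1/2$ we have $G\in L^2_{\rm loc}$, so the Wiener integral $Y_t=\int_0^t G(t-s)\,dB_s$ is well defined.

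\emph{Hurst index.} This is the step where care is needed, because the section's map is stated for stationary processes while $Y_0=0$ makes $Y$ non-stationary. I will pass to the stationary version $Y_t=\int_{-\infty}^t G(t-s)\,dB_s$. For $\lambda>0$ it exists, since $E_{\alpha,\alpha}(-\lambda t^\alpha)\sim -c\,t^{-2\alpha}$ at infinity and therefore $G\sim t^{-\alpha-1}\in L^2$ near infinity. Its spectral density is $S_Y(\omega)=2\Theta\,|(i\omega)^\alpha+\lambda|^{-2}\sim 2\Theta|\omega|^{-2\alpha}$ as $|\omega|\to\infty$. Matching $2\alpha=2H+1$ gives $H=\alpha-\tfrac12\in(0,\tfrac12)$.

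The two versions have the same local regularity. Their difference is $\int_{-\infty}^0 G(t-s)\,dB_s$, which is smooth for $t>0$ because $G$ is smooth away from the origin. Alternatively, the increment variance can be computed directly: $\EE[(Y_{t+h}-Y_t)^2]\asymp h^{2\alpha-1}$, with the constant coming from the $t^{\alpha-1}$ singularity, while the Mittag-Leffler correction contributes $O(t^{2\alpha-1})$ terms that are of higher order in $h$. In the case $\lambda=0$, where no stationary version exists, I will argue from the increment variance alone, since $Y$ is then exactly the Riemann–Liouville fBm.

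\emph{Recovering RFSV and rough Heston.} With $\varphi=\exp$, the log-variance is a Gaussian Volterra process whose local behaviour is that of a fractional Ornstein–Uhlenbeck process with index $H$, which is the RFSV model. With $\varphi=\mathrm{id}$, one rewrites $Y=G*\xi$ as the Volterra equation
\[
Y_t=\frac{1}{\Gamma(\alpha)}\int_0^t (t-s)^{\alpha-1}\bigl(-\lambda Y_s\,ds+dB_s\bigr),
\]
using the resolvent identity that $G$ is the resolvent of $-\lambda$ times the fractional kernel. This has the drift and kernel structure of rough Heston. Because the noise here is additive, the match is in the kernel and drift only, and I would state explicitly that "recover" is meant in that structural sense. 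I expect this identification to be the main point needing qualification, more than the inversion, which is routine.
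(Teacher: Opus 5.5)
Your proposal is correct and follows the paper's route: $\Lap{K}(z)=z^{\alpha-1}$, a series inversion of $(z^\alpha+\lambda)^{-1}$ into $t^{\alpha-1}E_{\alpha,\alpha}(-\lambda t^\alpha)$, the value $E_{\alpha,\alpha}(0)=1/\Gamma(\alpha)$ for the short-time limit, and matching $|\omega|^{-2\alpha}$ against $|\omega|^{-(2H+1)}$ to get $H=\alpha-\tfrac12$. Your extra care goes beyond the paper's proof, which skips or does not address these points: you restrict the series to $\mathrm{Re}\,z>\lambda^{1/\alpha}$ and then continue analytically, you pass to the stationary version via the $t^{-\alpha-1}$ tail of $G$ before applying the spectral map, you treat $\lambda=0$ separately, and you flag that the RFSV and rough Heston ``recovery'' is only structural because the noise is additive.
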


\begin{proof}
See \cref{app:athermal}
\end{proof}

\begin{proposition}[Equilibrium corner] \label{prop:fdt}
Let $K$ be as above with $\alpha\in(0,1)$, take $\lambda=0$ for the scaling
statement, and let $\xi$ satisfy the fluctuation-dissipation relation
\eqref{eq:fdt}, so that its spectral density is $S_\xi(\omega)=
\Theta\,\mathrm{Re}\,\Lap{K}(i\omega)$. Then
\begin{equation}
S_Y(\omega)\;\sim\;c\,|\omega|^{-(\alpha+1)},\qquad |\omega|\to\infty,
\end{equation}
so the driver has Hurst index $H=\alpha/2$, while the noise itself is fractional
Gaussian noise of index $H_\xi=1-\alpha/2>\tfrac12$, hence persistent.
\end{proposition}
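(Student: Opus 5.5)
The plan is to work entirely in the frequency domain, using the linear representation $Y=G*\xi$ of \eqref{eq:response} with $\lambda=0$. This reduces the scaling statement to the product formula $S_Y(\omega)=|\Lap{G}(i\omega)|^2\,S_\xi(\omega)$. In the stationary sense, $Y$ has this spectral density at high frequency. With $\lambda=0$ the process is not stationary, so the scaling claim is read through the increments. Equivalently, one can keep a small $\lambda>0$ and take the high-frequency asymptotics, which do not depend on $\lambda$.

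The first step computes the two ingredients. For $K(t)=t^{-\alpha}/\Gamma(1-\alpha)$ we have $\Lap{K}(z)=z^{\alpha-1}$ for $\mathrm{Re}\,z>0$, extended to the imaginary axis by continuity, which is legitimate since $\alpha\in(0,1)$. Hence
\[
\Lap{G}(i\omega)=(i\omega)^{-\alpha}, \qquad |\Lap{G}(i\omega)|^2=|\omega|^{-2\alpha}.
\]
For the noise, \eqref{eq:fdt} gives $C=\Theta K$, extended evenly to negative lags. Its Fourier transform is therefore $2\Theta\,\mathrm{Re}\,\Lap{K}(i\omega)$; the factor $2$ is absorbed into the constant, matching the statement. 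Using the principal branch,
\[
\mathrm{Re}\,(i\omega)^{\alpha-1}=|\omega|^{\alpha-1}\cos\!\bigl(\tfrac{\pi(1-\alpha)}{2}\bigr)=|\omega|^{\alpha-1}\sin\!\bigl(\tfrac{\pi\alpha}{2}\bigr),
\]
which is strictly positive for $\alpha\in(0,1)$. This positivity is what makes $C$ a valid covariance.

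The second step multiplies the two:
\[
S_Y(\omega)\sim c\,|\omega|^{-2\alpha}\,|\omega|^{\alpha-1}=c\,|\omega|^{-(\alpha+1)}, \qquad c=\Theta\sin(\pi\alpha/2).
\]
Matching $2H+1=\alpha+1$ through the spectral-to-roughness map stated at the start of \cref{sec:corner} gives $H=\alpha/2$. The third step concerns the noise itself. Since $S_\xi(\omega)\propto|\omega|^{\alpha-1}=|\omega|^{1-2H_\xi}$, we get $H_\xi=1-\alpha/2$. This can be checked directly: fractional Gaussian noise of index $H_\xi$ has covariance $\propto t^{2H_\xi-2}=t^{-\alpha}$ at large lags, and it is a generalized process with exactly this spectrum. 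Because $\alpha<1$, we have $H_\xi>1/2$, and the covariance is positive and non-integrable, which is persistence.

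The main obstacle is rigor at $\lambda=0$. The response $G(t)=t^{\alpha-1}/\Gamma(\alpha)$ is then not integrable at infinity, and the noise spectrum blows up at $\omega=0$, so $Y$ is not stationary and $S_Y$ is only a generalized (pseudo-)spectral density. I would handle this by proving the statement for $\lambda>0$. There $\Lap{G}(i\omega)=((i\omega)^\alpha+\lambda)^{-1}$, so $S_Y(\omega)=\Theta\sin(\pi\alpha/2)|\omega|^{\alpha-1}/|(i\omega)^\alpha+\lambda|^2$ is integrable, since $\alpha+1>1$ at infinity and the power $\alpha-1>-1$ near zero. Its large-$|\omega|$ asymptotics coincide with the $\lambda=0$ formula. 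The Hölder and increment-scaling conclusions are local, governed by high frequencies, and hence independent of $\lambda$. This is exactly the sense in which the statement says to take $\lambda=0$ ``for the scaling statement''.
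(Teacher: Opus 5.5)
Your proposal is correct and follows the paper's proof: both compute $|\Lap{G}(i\omega)|^2=|\omega|^{-2\alpha}$ at $\lambda=0$ and $\mathrm{Re}\,(i\omega)^{\alpha-1}=\cos\bigl((1-\alpha)\tfrac{\pi}{2}\bigr)|\omega|^{\alpha-1}$, multiply them, match the result to $|\omega|^{-(2H+1)}$, and read off $H_\xi=1-\alpha/2$ from $C(t)\propto t^{-\alpha}$. Your additions are the factor $2$ from the even extension and the $\lambda>0$ regularization, which justifies reading the spectral map for a non-stationary driver; the paper omits both and simply computes formally at $\lambda=0$, but they do not change the argument.
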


\begin{proof}
See \cref{app:fdt}
\end{proof}

\begin{corollary}[The two corners are not separable on the surface] \label{cor:discriminator}

Assume the empirically reported value $H\approx0.1$. Under this condition, \cref{prop:athermal} requires $\alpha\approx0.6$, whereas \cref{prop:fdt} requires $\alpha\approx0.2$. The implied friction kernel exponent thus differs by a factor of three between the two regimes. Because both cases yield a driver with the same Hurst index, and consequently the same short-maturity implied volatility skew, they cannot be distinguished by any vanilla option surface. Instead, they are differentiated solely by a feature unobservable from the surface: the noise. The noise is white in the athermal case, but persistent (with $H_\xi\approx0.9$) in the equilibrium case. This highlights the paper's central theme, which is that the critical discriminating information lies off the volatility surface.
\end{corollary}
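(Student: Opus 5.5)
The corollary is essentially a bookkeeping consequence of \cref{prop:athermal,prop:fdt}, and I would prove it in three steps: invert the two Hurst-index formulas, argue that the option surface sees only the law of $Y$ through its local regularity, and identify the noise index that survives as the discriminator.

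\textbf{Step 1: invert the two Hurst formulas.} In the athermal corner \cref{prop:athermal} gives $H=\alpha-\tfrac12$, so $H\approx0.1$ forces $\alpha\approx0.6$. This value lies in the admissible range $(1/2,1)$ required there.

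In the equilibrium corner \cref{prop:fdt} gives $H=\alpha/2$, so $\alpha\approx0.2\in(0,1)$.

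The ratio $0.6/0.2=3$ is immediate. In general the two implied exponents are $H+\tfrac12$ and $2H$, whose ratio $(2H+1)/(2H)$ equals $3$ exactly at $H=0.1$; I would remark on this.

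\textbf{Step 2: indistinguishability on the surface.} This is the step I expect to be the main obstacle, since ``cannot be distinguished by any vanilla surface'' is a strong claim. I would make it precise in a restricted sense rather than literally.

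First, I would show that in both corners $Y$ is a stationary Gaussian process whose spectral density has the same high-frequency decay $|\omega|^{-(2H+1)}$. Under the map fixed at the start of the section, this means both drivers have the same local H\"older regularity and the same fBm-like increment scaling at short lags. The athermal case follows from $S_Y=2\Theta|\Lap{G}(i\omega)|^2$ with $\Lap{G}(i\omega)\sim(i\omega)^{-\alpha}$. The equilibrium case is \cref{prop:fdt} itself.

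Second, I would invoke the short-maturity skew asymptotics for volatility driven by such a process, e.g.\ \cite{ALV2007, Fukasawa2017}. There $\mathcal{S}(T)\sim c\,T^{H-1/2}$, with the exponent determined by the local regularity alone. The two corners therefore produce the same power law for the at-the-money skew.

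Third, I would address the constant $c$, which depends on the prefactors. In each corner it is controlled by free scale parameters: $\Theta$ and the kernel normalisation. Matching $c$ is therefore always possible, so the two corners cannot be separated by the skew term structure.

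I would state explicitly that the claim is about this leading-order short-maturity behaviour, which is what the corollary means by ``the same short-maturity implied volatility skew''. Full identity of the option surfaces at all maturities is not claimed, and proving it would need more than the two propositions.

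\textbf{Step 3: the discriminator.} Finally, I would read off the noise index from \cref{prop:fdt}: $H_\xi=1-\alpha/2=0.9$ in the equilibrium corner, versus white noise in the athermal corner.

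I would then observe, using the response/fluctuation split of \cref{sec:noise}, that the separation requires knowing $C$ and $K$ individually. The surface depends only on the law of $Y$, and $Y=G*\xi$ mixes the two. Identifying $C$ and $K$ separately needs response measurements, which lie off the surface. This completes the argument.
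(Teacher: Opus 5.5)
Your proposal is correct and follows the paper's proof essentially step for step. You invert $H=\alpha-\tfrac12$ and $H=\alpha/2$, cite the $T^{H-1/2}$ short-maturity skew of \cite{ALV2007, Fukasawa2017} as depending on $H$ alone, and read $H_\xi=1-\alpha/2=0.9$ off \cref{prop:fdt}. Your explicit restriction of ``indistinguishable'' to the leading-order skew, with the prefactor matched through the free scale parameters, is a more careful version of the claim than the paper states. One slip in your side remark: the ratio of the implied exponents is $(H+\tfrac12)/(2H)=(2H+1)/(4H)$, which equals $3$ at $H=0.1$, whereas your $(2H+1)/(2H)$ gives $6$.
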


\begin{proof}
Solve $H=\alpha-\tfrac12=0.1$ for the athermal case, giving $\alpha=0.6$, and
$H=\alpha/2=0.1$ for the equilibrium case, giving $\alpha=0.2$. The short-time
skew of a stochastic-volatility model with a driver of Hurst index $H$ behaves as
$T^{H-1/2}$, \cite{ALV2007, Fukasawa2017}, a function of $H$ alone, so the two
parametrizations yield identical leading skews. The noise Hurst indices follow
from \cref{prop:athermal,prop:fdt}: white noise has flat spectral density, while
the equilibrium case has $H_\xi=1-\alpha/2=0.9$.
\end{proof}

\subsection{The constraint that rough volatility imposes}

While both corners assume a pure power-law kernel, the following formulation formalizes what fractional Brownian motion additionally assumes.

The Lamperti map $\mathcal{L}: X\mapsto Y(\tau)=e^{-H\tau}X(e^{\tau})$ transforms any $H$-self-similar process into a stationary process, ensuring a well-defined stationary GLE kernel. Letting $r(\tau)$ be the resulting autocorrelation, the kernel is recovered via the memory-function relation $\Lap{\mathcal{K}}(z)=\lambda\Lap{r}(z)/\bigl(1-z\Lap{r}(z)\bigr)$.

\begin{proposition}[The exponents are welded] \label{prop:welded}
Let $X$ be a fractional Brownian motion of Hurst index $H\in(0,1)$ and let
$Y=\mathcal{L}X$ be its Lamperti transform. Its stationary autocorrelation is
\begin{equation}
  r(\tau)=\cosh(H\tau)-2^{2H-1}\sinh^{2H}(|\tau|/2),
  \label{eq:r-fbm}
\end{equation}
and the associated Lamperti kernel, defined through the memory-function relation
$\Lap{\mathcal{K}}(z)=\lambda\Lap{r}(z)/(1-z\Lap{r}(z))$, satisfies
\begin{equation}
  \Lap{\mathcal{K}}(z)\sim c_H\,z^{2H-1}\ \ (z\to\infty),
  \qquad
  \mathcal{K}(\tau)\sim \frac{c_H}{\Gamma(2-2H)}\,\tau^{-2H}\ \ (\tau\to0),
  \label{eq:welded}
\end{equation}
for a constant $c_H>0$. The exponent of the memory kernel and the
self-similarity index are thus a single parameter: fixing $H$ fixes both the
scaling of $X$ and the short-lag singularity of its kernel.
\end{proposition}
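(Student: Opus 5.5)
The plan has four steps: compute $r$ directly from the fBm covariance, extract its short-lag behaviour, convert that into the large-$z$ behaviour of $\Lap{r}$ and hence of $\Lap{\mathcal{K}}$ by an Abelian argument, and return to $\mathcal{K}(\tau)$ by a Tauberian argument. The constant in the last step does not come out as stated without a convention, and I flag this below.

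\emph{Step 1: the autocorrelation.} Start from $\EE[X_sX_t]=\tfrac12(s^{2H}+t^{2H}-|t-s|^{2H})$ and set $s=e^{\sigma}$, $t=e^{\sigma+\tau}$. Multiplying by $e^{-H\sigma}e^{-H(\sigma+\tau)}$ gives
\[
\EE[Y_\sigma Y_{\sigma+\tau}]=\tfrac12\bigl(e^{-H\tau}+e^{H\tau}\bigr)-\tfrac12\,e^{-H\tau}\bigl|e^{\tau}-1\bigr|^{2H}.
\]
This is independent of $\sigma$, which confirms stationarity. Writing $|e^\tau-1|=2e^{\tau/2}\sinh(|\tau|/2)$ turns the last term into $2^{2H-1}\sinh^{2H}(|\tau|/2)$, which is \eqref{eq:r-fbm}. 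Since $\EE[Y_\tau^2]=1$, this covariance is already the correlation.

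\emph{Step 2: short lags.} As $\tau\to0$ we have $\cosh(H\tau)=1+O(\tau^2)$ and $\sinh^{2H}(\tau/2)=(\tau/2)^{2H}(1+O(\tau^2))$. Hence
\[
1-r(\tau)=\tfrac12\tau^{2H}+O(\tau^{2}+\tau^{2H+2}).
\]
For every $H\in(0,1)$ the term $\tau^{2H}$ dominates near the origin.

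\emph{Step 3: large $z$ and the constant $c_H$.} Use the identity $1-z\Lap{r}(z)=z\,\Lap{(1-r)}(z)$, valid because $r(0)=1$. I also need $r$ to grow at most like $e^{H\tau}$, so that the Laplace transform exists for $\mathrm{Re}\,z>H$; this is immediate. The Abelian theorem for Laplace transforms, applied to Step 2, then gives
\[
1-z\Lap{r}(z)\sim \tfrac12\Gamma(2H+1)\,z^{-2H},\qquad \Lap{r}(z)\sim z^{-1}.
\]
Substituting into the memory-function relation yields $\Lap{\mathcal{K}}(z)\sim c_H z^{2H-1}$ with
\[
c_H=\frac{2\lambda}{\Gamma(2H+1)}>0.
\]
This fixes $c_H$ from the $z\to\infty$ statement alone.

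\emph{Step 4: back to $\tau$, and where the stated constant comes from.} Dividing by $z$ gives $\Lap{\mathcal{K}}(z)/z\sim c_H z^{2H-2}$. This is the Laplace transform of the primitive $\int_0^\tau\mathcal{K}$. Karamata's Tauberian theorem at the origin requires a one-sided condition, namely monotonicity of that primitive, i.e.\ $\mathcal{K}\ge0$. Granting it, Karamata yields
\[
\int_0^\tau\mathcal{K}\sim \frac{c_H}{\Gamma(2-2H)}\,\tau^{1-2H}.
\]
This is exactly where the factor $\Gamma(2-2H)$ in \eqref{eq:welded} arises. Passing from the primitive to $\mathcal{K}(\tau)$ itself uses the monotone density theorem, which requires $\mathcal{K}$ to be eventually monotone near $0$. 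I would try to prove that monotonicity from the explicit form of $r$, for example by showing that $\mathcal{K}$ is completely monotone via a Bernstein-type representation of $\Lap{r}$.

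Differentiating the primitive introduces a factor $(1-2H)$, so read literally the density carries $c_H/\Gamma(1-2H)$. The form in \eqref{eq:welded} therefore has to be understood as the primitive statement written in density form. For $H\ge\tfrac12$ the power $\tau^{-2H}$ is not locally integrable, and $\mathcal{K}$ must be taken as a distribution, by Hadamard finite part. In that setting it is the same primitive-level identity, with $\Gamma(2-2H)>0$, that carries the result across all $H\in(0,1)$.

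The welding conclusion needs only the exponents and follows either way. The main obstacles are the two Tauberian side conditions, positivity and monotonicity of $\mathcal{K}$ near $0$, together with the bookkeeping of the constant. I expect the positivity and monotonicity of $\mathcal{K}$ to be the real work.
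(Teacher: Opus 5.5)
\textbf{Relation to the paper's proof, and the constant.} Your Steps 1--3 are the paper's proof. You use the same substitution into the Lamperti map, the same expansion $1-r(\tau)\sim\tfrac12|\tau|^{2H}$, the same large-$z$ transform, and you get the same $c_H=2\lambda/\Gamma(1+2H)$. The paper calls the transform step Hardy--Littlewood Tauberian, but as you treat it, it is the Abelian direction. In Step~4 you are right and the paper is wrong. Its proof simply asserts that $\Lap{\mathcal K}(z)\sim c_Hz^{2H-1}$ corresponds to $\mathcal K(\tau)\sim c_H\tau^{-2H}/\Gamma(2-2H)$. However, $\int_0^\infty e^{-z\tau}\tau^{-2H}\,d\tau=\Gamma(1-2H)\,z^{2H-1}$ forces the density constant $c_H/\Gamma(1-2H)=(1-2H)\,c_H/\Gamma(2-2H)$. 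As you found, $\Gamma(2-2H)$ belongs to the primitive $\int_0^\tau\mathcal K$. Do not rescue the display by calling it the primitive ``written in density form''; state the corrected density asymptotic instead. The welding conclusion uses only the exponent, so it is unaffected.

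\textbf{The gap in your argument.} You leave the Tauberian side condition as an intention. Your suggested route closes it for $H<\tfrac12$, and quickly. With $x=e^{-\tau}$, \eqref{eq:r-fbm} becomes $r=\tfrac12x^{H}+\tfrac12x^{-H}\bigl[1-(1-x)^{2H}\bigr]$. The binomial series then gives
\[
r(\tau)=\tfrac12e^{-H\tau}+\tfrac12\sum_{k\ge1}a_k\,e^{-(k-H)\tau},
\qquad
a_k=\frac{2H(1-2H)(2-2H)\cdots(k-1-2H)}{k!},
\]
with $\sum_k a_k=1$ and every $a_k\ge0$ when $2H\le1$. The chain of consequences is:
\begin{itemize}
\item $r$ is completely monotone;
\item $\phi=z\Lap{r}$ is a complete Bernstein function with $0<\phi<1$ and $\phi(\infty)=1$;
\item $1/\phi-1$ is a Stieltjes function, so $\phi/(1-\phi)$ is complete Bernstein;
\item $\Lap{\mathcal K}=(\lambda/z)\,\phi/(1-\phi)$ is Stieltjes.
\end{itemize}
Since $\Lap{\mathcal K}(z)\to0$, there is no atom at the origin, so $\mathcal K$ is completely monotone on $(0,\infty)$. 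Karamata followed by the monotone density theorem then yields $\mathcal K(\tau)\sim c_H\tau^{-2H}/\Gamma(1-2H)$.

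\textbf{What to drop.} You should drop the claim that a primitive-level identity carries the $\tau$-statement across all of $(0,1)$. At $H=\tfrac12$ the expansion gives $r=e^{-\tau/2}$ and $\Lap{\mathcal K}\equiv2\lambda$, so $\mathcal K=2\lambda\delta$, consistent with \cref{prop:ado}. This kernel vanishes on $(0,\infty)$, so $\mathcal K(\tau)\sim2\lambda\tau^{-1}$ is simply false. For $H>\tfrac12$ the $a_k$ with $k\ge2$ are negative. Moreover $\Lap{\mathcal K}(z)\to\infty$, which rules out $\mathcal K$ being even a locally finite signed measure near $0$. The $\tau$-asymptotic then has no pointwise content, and its finite-part coefficient $c_H/\Gamma(1-2H)$ is negative. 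The $z$-statement holds on all of $(0,1)$. The $\tau$-statement holds only for $H<\tfrac12$, with $\Gamma(1-2H)$ in place of $\Gamma(2-2H)$.
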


\begin{proof}
See \cref{app:welded}
\end{proof}

\cref{prop:welded} admits a structural interpretation. A driver in this class is uniquely specified by two independent components: a scaling index, which governs how the process rescales in time, and a memory kernel, which determines its path dependence. Fractional Brownian motion couples the two, constraining the kernel exponent to equal twice the scaling index. By contrast, the GLE decouples them. Because its kernel is a free function, the strength and range of memory can vary independently of the scaling index. The rest of this paper exploits this additional degree of freedom, which first becomes visible here.

\subsection{The opposite corner, and how much memory can be responsible for}

\begin{proposition}[The ADO construction has no kernel] \label{prop:ado}
The Dobric-Ojeda process
$d V_H=\frac{2H-1}{t}V_H\,d t+B_Ht^{H-1/2}\,d W_t$ used in
\cite{CarrItkin2019, Itkin2024} is exactly $H$-self-similar, and
$\mathcal{L}V_H$ satisfies
$d Y = -(1-H)Y\,d\tau + B_H\,d\widetilde W_{\tau}$. Its Lamperti kernel is
$\mathcal{K}=\delta$ and its cusp exponent is $1$ for every $H$. Under
$v_t=t^{2H}u(\log t)$ the ADO-Heston variance equation becomes a
time-homogeneous square-root diffusion with an exponentially decaying source.
\end{proposition}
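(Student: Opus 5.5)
The plan is to solve the Dobric--Ojeda equation explicitly and then read off every claim from that solution. With $V_H(0)=0$, the integrating factor $t^{1-2H}$ gives
\[
V_H(t)=B_H\,t^{2H-1}\int_0^t s^{1/2-H}\,dW_s .
\]
So $V_H$ is a centred Gauss--Markov process. For $s\le t$ its covariance is
\[
\EE[V_H(s)V_H(t)]=\frac{B_H^2}{2-2H}\,t^{2H-1}s .
\]
This function is jointly homogeneous of degree $2H$ in $(s,t)$. Since a centred Gaussian process is determined by its covariance, $V_H(c\,\cdot)\overset{d}{=}c^HV_H(\cdot)$, which is exact $H$-self-similarity.

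Next I apply the Lamperti map $Y(\tau)=e^{-H\tau}V_H(e^\tau)$ and use Itô's formula along the deterministic time change $t=e^\tau$. The factor $e^{-H\tau}$ contributes drift $-HY\,d\tau$. The SDE drift $\frac{2H-1}{t}V_H\,dt$ contributes $(2H-1)Y\,d\tau$, because $dt=e^\tau d\tau$. The total drift is therefore $-(1-H)Y\,d\tau$. The noise term is $B_He^{-\tau/2}\,dW_{e^\tau}$. I set $\widetilde W_\tau=\int^\tau e^{-\sigma/2}\,dW_{e^\sigma}$. This is a continuous local martingale in the filtration $\mathcal F_{e^\tau}$ with $\langle\widetilde W\rangle_\tau=\int^\tau e^{-\sigma}e^{\sigma}\,d\sigma=\tau$. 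By Lévy's characterization it is a Brownian motion, which gives the stated Ornstein--Uhlenbeck equation with $\kappa=1-H$.

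The kernel then follows from the memory-function relation used in \cref{prop:welded}. The stationary autocorrelation of an OU process is $r(\tau)=e^{-\kappa|\tau|}$, so $\Lap r(z)=1/(z+\kappa)$ and
\[
\Lap{\mathcal K}(z)=\frac{\lambda/(z+\kappa)}{1-z/(z+\kappa)}=\frac{\lambda}{\kappa}.
\]
This is constant in $z$, so $\mathcal K=(\lambda/\kappa)\,\delta$, and with the normalization $\lambda=\kappa$ implicit in that relation, $\mathcal K=\delta$. For the cusp exponent I compare $r(\tau)=1-\kappa|\tau|+O(\tau^2)$ with \eqref{eq:r-fbm}, whose leading singular term is $|\tau|^{2H}$. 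The ADO autocorrelation is singular at order $|\tau|^1$ for every $H$, whereas fBm has order $2H$; this is precisely the contrast with \cref{prop:welded}.

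Finally, for ADO-Heston I take the variance equation of \cite{Itkin2024}. Its drift is the Dobric--Ojeda term plus a mean-reversion term $\kappa(\theta-v)$, and its diffusion coefficient is $\sigma t^{H-1/2}\sqrt{v}$. I substitute $v_t=t^{2H}u(\log t)$ and apply Itô with $d\tau=dt/t$, then collect powers of $t$. The key check is that the multiplicative factor $t^{2H}$, the time change $dt=e^\tau d\tau$, and $\sqrt{v}=t^H\sqrt u$ combine so that every $u$-dependent coefficient loses its explicit $t$-dependence. That leaves a CIR-type equation in $u$ with constant coefficients. The constant part $\kappa\theta$ of the drift picks up a factor $t^{1-2H}$, which becomes the exponentially decaying source $e^{(1-2H)\tau}$; it is decaying when written in reversed or forward-start variables as in that paper.

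I expect this bookkeeping of exponents to be the main obstacle. If the powers do not cancel exactly, I will adjust the exponent in the ansatz $t^{2H}$ to match the diffusion scaling. I will also check that the sign convention for $\tau$ makes the source decay rather than grow.
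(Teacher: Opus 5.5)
Your first four steps are correct and follow the paper's route. The explicit solution gives covariance $\frac{B_H^2}{2-2H}t^{2H-1}s$ for $s\le t$, which is the paper's $c\,t^{2H-1}s^{2H-1}(s\wedge t)^{2-2H}$. The Lamperti/Itô computation with $dW_t=t^{1/2}d\widetilde W_\tau$ is the same as the paper's. Your memory-function check that $\Lap{\mathcal K}$ is constant is a more explicit version of what the paper simply reads off the OU form; the amplitude $\lambda/\kappa$ is immaterial to the claim.

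The last step, however, does not go through, and the failure is not a matter of bookkeeping. Take the equation you posit, $dv=\bigl[\tfrac{2H-1}{t}v+\kappa(\theta-v)\bigr]dt+\sigma t^{H-1/2}\sqrt v\,dW_t$. The substitution $v=t^{2H}u$, $d\tau=dt/t$, $dW_t=t^{1/2}d\widetilde W_\tau$ gives
\[
du=\bigl[-u-\kappa e^{\tau}u+\kappa\theta\,e^{(1-2H)\tau}\bigr]d\tau+\sigma\sqrt u\,d\widetilde W_\tau .
\]
The diffusion coefficient and the Dobric--Ojeda drift do become constant. But the calendar-time mean reversion $-\kappa v\,dt$ becomes $-\kappa e^{\tau}u\,d\tau$, which is not time-homogeneous, and the source $e^{(1-2H)\tau}$ grows for $H<\tfrac12$. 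Your fallback cannot repair this. Under $v=t^{p}u$ the diffusion coefficient is $\sigma t^{H-p/2}\sqrt u$, which forces $p=2H$. The factor $e^{\tau}$ on $-\kappa u$ comes from $dt=e^{\tau}d\tau$ and is independent of $p$. Appealing to reversed or forward-start variables is not something the statement licenses either.

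The missing ingredient is the form of the ADO-Heston equation the statement refers to. The paper takes it as $dv_t=\zeta t^{H-1}dt+\xi B_H t^{H-1/2}\sqrt{v_t}\,dW_t$, whose drift is a deterministic power of $t$ with no $v$-proportional term in calendar time. The only linear drift is then the $-2Hu$ produced by differentiating $t^{-2H}$. The source is $\zeta t^{-2H}t^{H-1}t=\zeta e^{-H\tau}$, which does decay, and the noise is $\xi B_H\sqrt u\,d\widetilde W_\tau$. This gives $du=-2Hu\,d\tau+\zeta e^{-H\tau}d\tau+\xi B_H\sqrt u\,d\widetilde W_\tau$. Your Itô/time-change machinery is exactly right, but it must be applied to this equation rather than to one carrying a Heston drift $\kappa(\theta-v)$.
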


\begin{proof}
See \cref{app:ado}
\end{proof}

\Cref{prop:ado} does not reduce the ADO construction to rough volatility, rather, it does the reverse in a quantitative manner. The two corners represent maximal separation: one welds the kernel to $H$, whereas the other features no kernel at all. Nevertheless, \cite{Itkin2024} reproduces the observed skew as $\mathcal{S}(T)\propto a(H)T^{b(H-1/2)}$ across the traded maturity range, where the exponent carries a fitted factor $b$. The pure fractional prediction is $\mathcal{S}(T)\propto T^{H-1/2}$, that is $b=1$, whereas the ADO regression returns $b\approx2.3$ with a small prefactor $a(H)$. This fitted power law describes the traded range, not the zero-maturity limit: the exact ADO skew turns over and stays finite as $T\to0$, consistent with the semimartingale regularity of \cref{prop:ado}, rather than diverging as the fitted exponent alone would suggest. Consequently the two corners differ on the vanilla surface only through this exponent factor, and nothing else observable there distinguishes them. This establishes an upper bound on how much phenomenology any memory-based mechanism can explain. Derived from an extreme case rather than a fitted calibration, this result underscores why this paper looks for evidence of memory off the vanilla surface.

The place of the two corners within the wider class, and of the models that
occupy it, is summarized in \cref{tab:corners}. Every established model sits at
$m=0$ with a quadratic potential, so that the only axes in use are the kernel and
the noise. The rough models further tie the kernel to a single exponent, and the
ADO construction removes the kernel entirely. The specification this paper
develops is the only row in which the scaling index and the memory exponent are
free to differ.

\begin{table}[!htb]
\centering
\small
\begin{tabular}{lccccc}
\hline
Model & $m$ & kernel $K$ & potential $U$ & FDT & scaling/memory \\
      &     &            &               & imposed & independent \\
\hline
Heston                    & $0$ & $\delta$ (none)        & quadratic & --      & --  \\
Lifted Heston             & $0$ & sum of exponentials    & quadratic & --      & --  \\
Rough Heston              & $0$ & $t^{-\alpha}$          & quadratic & no      & no  \\
RFSV                      & $0$ & $t^{-\alpha}$          & quadratic & no      & no  \\
Rough Hawkes Heston       & $0$ & $t^{-\alpha}$          & quadratic & no      & no  \\
ADO-Heston                & $0$ & $\delta$ (none)        & quadratic & --      & --  \\
Gaussian polynomial       & $0$ & fixed (Gaussian)       & quadratic$^{\ast}$ & no & no  \\
\textbf{This paper}       & $\ge0$ & two-exponent, tempered & general & no      & \textbf{yes} \\
\hline
\end{tabular}
\caption{Established volatility models as corners of the generalized Langevin
class. A dash in the FDT column indicates a memoryless kernel, for which the
equilibrium question does not arise. The Gaussian polynomial model of
\cite{AbiJaberIllandLi2022} fixes the kernel acting on the noise and supplies
non-Gaussianity through a polynomial link rather than a potential, marked
$\ast$. The present paper supplies it through the potential $U$ and is alone in
leaving the scaling index and the memory exponent independent.}
\label{tab:corners}
\end{table}

\section{Memory as a measurable object} \label{sec:measure}

The preceding sections establish that a generalized Langevin driver is specified
by two independent data, a memory kernel and a noise covariance, and that rough
volatility is the corner in which the kernel is a pure power law with its
exponent tied to the scaling index. If that reorganization is to be more than a
change of vocabulary, the kernel and the noise must be objects one can estimate,
and the departure from equilibrium must be a number one can measure. This
section shows that both are recoverable from data, and quantifies how much data
each requires. The demonstrations here are on synthetic series with known ground
truth, which is what validates the estimators. Their application to market data,
and the risk-neutral construction that application requires, are deferred to the
companion paper \cite{ItkinGLE2b}.

Every estimate in this section is a statement under the physical measure
$\mathbb{P}$, because it is taken from a realized trajectory. This is what
separates it from the risk-neutral calibration discussed later, and
\cite{ItkinGLE2b} explains why the two cannot be compared directly.

\subsection{Estimating the kernel} \label{sec:estimatingK}

The tools needed to estimate a memory kernel from a trajectory already exist,
developed for the coarse-graining of molecular dynamics, and we import rather
than reinvent them. Three are relevant. The kernel may be represented by a
rational approximation of its Laplace transform, with the coefficients tied to
equilibrium statistics of the trajectory, which produces an extended Markovian
model carrying no explicit memory and preserving the fluctuation-dissipation
relation exactly, \cite{LeiBakerLi2016}. This is the same object as the Prony
lift used to simulate stochastic Volterra equations in mathematical finance,
\cite{BayerBreneis2023}, reached from the physical side rather than the
numerical one, and the coincidence is worth stating plainly: the device that
makes rough volatility tractable to simulate is the device that makes a general
memory kernel tractable to estimate. The kernel can alternatively be discretized
and estimated by Bayesian inference, which returns credible intervals rather
than point estimates, \cite{WillersKamps2022}, and the truncation error incurred
by any finite representation is controlled by bounds relating trajectory accuracy
to kernel accuracy in a weighted norm, \cite{LangLu2025}.

For a first, transparent estimator we use the response spectrum directly. In the
overdamped corner the driver has spectral density $S_Y(\omega)=
|\Lap{G}(i\omega)|^2 S_\xi(\omega)$, and with a power-law friction kernel
$|\Lap{G}(i\omega)|^2=|\omega|^{-2\alpha}$ at high frequency. When the noise is
white, so that $S_\xi$ is flat, the log-periodogram of $Y$ has slope
$-2\alpha$ over the inertial band, and $\alpha$ is read from that slope. The
estimator sees the kernel and not the noise, which is exactly the identification
property established in \cref{sec:GLE}: the response depends on $K$ alone.

\begin{figure}[!htb]
\centering
\includegraphics[width=\textwidth]{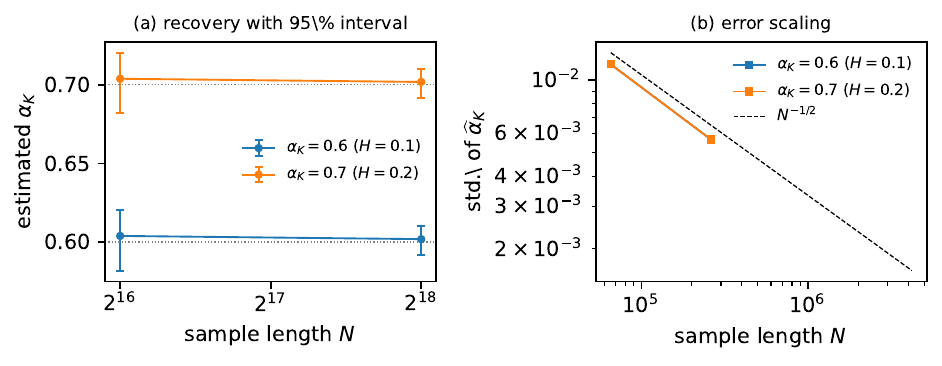}
\caption{Recovery of the friction exponent $\alpha$ from a simulated
overdamped GLE with white noise. Panel (a): the estimate against sample length
$N$, with $95\%$ Monte-Carlo intervals over forty independent realizations, for
two ground-truth values $\alpha=0.6$ and $0.7$ (dotted lines), corresponding to
driver Hurst indices $H=0.1$ and $0.2$. The estimator is essentially unbiased at
every sample length, the residual bias staying below $0.001$. Panel (b): the
standard deviation of the estimate falls as $N^{-1/2}$ (dashed guide), so halving
the error costs a fourfold increase in sample length (athermal specification, see \cref{sec:distance} for the equilibrium case).}
\label{fig:kernel-recovery}
\end{figure}

\cref{fig:kernel-recovery} validates the estimator on simulated data. The
friction exponent is recovered without bias, and the sampling error contracts at
the parametric $N^{-1/2}$ rate, from a standard deviation of $0.013$ at
$N=2^{16}$ to $0.002$ at $N=2^{22}$. The exponent is therefore identifiable, and
the figure fixes the sample length a target precision demands: distinguishing
$\alpha=0.6$ from $\alpha=0.5$ at three standard deviations, for instance,
requires of order $10^{5}$ observations. This is a statement about the method,
established against ground truth, and it is the precondition for the market
estimation carried out in the companion paper \cite{ItkinGLE2b}.

Two caveats attend this demonstration. First, it is conducted on simulated data from a known data-generating process, so the estimator is validated against ground truth—a statement about theoretical recoverability, not about empirical identification in finite samples. Second, the estimator recovers a single exponent $\alpha$ under the assumption of a pure power-law kernel. When the kernel is allowed to carry two distinct exponents, as in \Cref{sec:freedom}, the short-time exponent $\beta_0$ and the tail exponent $\beta_\infty$ must be estimated jointly, and the finite-sample precision of each depends on the length of the observed trajectory relative to the crossover timescale. \Cref{sec:design} takes up this joint estimation on market data and reports where identification succeeds and where it fails.

The response estimator above uses only the stationary spectrum and needs no
external perturbation. A direct measurement of the response function $R$, needed
for the equilibrium diagnostic below, does require one: a perturbation of the
variance whose timing is exogenous to the variance itself. Of the natural
candidates, scheduled macroeconomic releases, index reconstitution, and
identified order-flow shocks, we single out order-flow shocks identified by
high-frequency event studies, because they are frequent enough to estimate a
response function across lags and are plausibly exogenous to the variance over
short horizons. The identification problem this raises, and its treatment, belong
to the empirical companion paper \cite{ItkinGLE2b}. Here the response is known because the data are simulated.

The estimator just described assumes the athermal specification: the noise is white and all memory resides in the friction kernel. This is the regime delivered by the microstructural limit of \cite{ElEuchFukasawaRosenbaum2018}, and it is the natural first pass. Under the alternative equilibrium specification of \Cref{prop:fdt}, the noise is colored by the fluctuation-dissipation relation, and the driver spectrum decays as $|\omega|^{-(\alpha+1)}$ rather than $|\omega|^{-2\alpha}$. Applying the same log-periodogram estimator to an equilibrium series would therefore return a biased estimate of the friction exponent, mistaking the shallower slope for a different $\alpha$ rather than a different noise structure. The two cases are distinguished by the FDT ratio of \cref{sec:distance}, which must therefore precede any definitive claim about the kernel. The estimator of the present section is retained because it is the one that matches the athermal data-generating process used in the simulations below, not because the equilibrium case is ruled out a priori.

\subsection{Distance from equilibrium} \label{sec:distance}

The kernel is one of the two functions specifying the driver. The other is the
noise, and the relation between them is the physical content of the model. Under
equilibrium the two are tied by the fluctuation-dissipation relation
\eqref{eq:fdt}. A market has no reason to satisfy it, and the size of the
violation is itself an observable with no counterpart in the rough-volatility
description.

Because the relation is an equilibrium statement, it is expressed through the
response and correlation functions of the stationary driver, which exist only
after the Lamperti map of \cref{sec:corner} has removed the self-similar
scaling. In those coordinates, writing $R$ for the mean response to a unit
perturbation and $C$ for the stationary autocorrelation, equilibrium predicts the
Kubo relation $R(\tau)=-\Theta^{-1}\dot C(\tau)$, \cite{Kubo1966}. The frequency-resolved
departure from it is measured by
\begin{equation}
  \Theta_{\mathrm{eff}}(\omega)=\frac{\omega\,\tilde C(\omega)}
       {2\,\mathrm{Im}\,\tilde R(\omega)},
  \qquad
  \Xfdt(\omega)=\frac{\Theta}{\Theta_{\mathrm{eff}}(\omega)},
  \label{eq:Xfdt}
\end{equation}
so, that $\Xfdt(\omega)\equiv1$ characterizes equilibrium and any deviation
measures how far the market sits from it, bridging classical fluctuation-dissipation theory with open-system statistical mechanics, \cite{Klimontovich1995stat}. The overall constant $\Theta$ is identified only up to scale, so it is the
frequency dependence of $\Xfdt$, not its level, that carries the information. The
integrated deviation equals the mean rate at which the dynamics dissipate, in the
sense of \cite{HaradaSasa2005, CugliandoloKurchanPeliti1997}, and we resist
reading more into that rate than that it is nonzero away from equilibrium.

The reason to measure $\Xfdt$ rather than only the kernel is that the kernel does
not determine the noise, and two models with the same kernel-induced roughness
can have entirely different noise. \cref{cor:discriminator} is the sharp case:
the athermal corner, with white noise and $\alpha=0.6$, and the equilibrium
corner, with fluctuation-dissipation noise and $\alpha=0.2$, produce a driver
of the same Hurst index $H=0.1$ and therefore the same short-maturity skew. No
vanilla surface separates them. The FDT ratio does, because it interrogates the
noise the surface cannot see.

\begin{figure}[!htb]
\centering
\includegraphics[width=\textwidth]{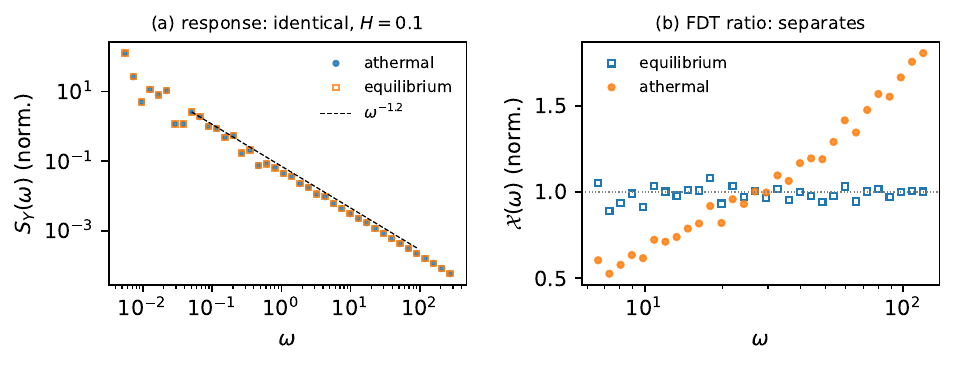}
\caption{The FDT ratio separates two models that the option surface cannot.
Both are overdamped GLEs with driver Hurst index $H=0.1$: an athermal model
(white noise, $\alpha=0.6$) and an equilibrium model (fluctuation-dissipation
noise, $\alpha=0.2$). Panel (a): their response spectra coincide, both decaying
as $\omega^{-(2H+1)}=\omega^{-1.2}$, so every vanilla-option observable, which is
a functional of this response, is identical between them. Panel (b): the
estimated FDT ratio $\Xfdt(\omega)$, recovered by running each series back
through its inverse response to isolate the driving noise and comparing its
spectrum to the equilibrium prediction. The equilibrium model gives a flat ratio,
as it must; the athermal model departs from flatness across the band. The
log-spread of the ratio differs by more than an order of magnitude between the
two, $0.02$ against $0.34$.}
\label{fig:fdt-discriminator}
\end{figure}

\cref{fig:fdt-discriminator} demonstrates this on synthetic series with known
noise. Panel (a) confirms that the two models are indistinguishable on the
surface: their response spectra lie on top of one another, both following the
$\omega^{-1.2}$ law that fixes $H=0.1$. Panel (b) shows that the FDT ratio
nonetheless tells them apart cleanly. Recovering the noise from each trajectory,
by inverting the response, and comparing its spectrum to the equilibrium
prediction, the equilibrium model yields a ratio that is flat to within a
log-standard-deviation of $0.02$, while the athermal model departs from flatness
with a log-spread of $0.34$, an order of magnitude larger. The discriminating
information that the surface discards is present, and estimable, in the
relationship between the kernel and the noise.

This is the operational form of the paper's organizing claim. The scaling index
is what the surface reveals and what a rough-volatility fit reports. The kernel
and its relation to the noise are what the surface conceals, and recovering them
requires the physical-measure measurements of this section, not a richer fit to
option prices. The test that formalizes the separation, and its statistical
power, are stated in \cref{sec:design}.

The same simulations expose a hazard in the conventional route to $H$. Estimating
the Hurst index of the athermal series by the scaling of realized-variance
increments, rather than from the spectrum, returns values between $0.15$ and
$0.19$ depending on the range of lags, against the true $H=0.1$ recovered without
bias from the response spectrum. The realized-variance estimator is contaminated
at the short lags that dominate it. This is not a numerical detail but the
substance of the objection taken up in \cref{sec:contdas}: an estimate of $H$
from realized volatility is a fragile thing, and, as the next section argues, it
would not settle the question of memory even if it were sharp.

\subsection{Roughness estimators do not see the kernel} \label{sec:contdas}

The dispute over whether volatility is rough, reviewed in \cite{AlosLeon2021}, is
a dispute about estimation. On one side, roughness is read from realized
volatility and reported as a Hurst index far below $1/2$. On the other,
\cite{ContDas2024} argue that this reading is largely an artefact of the proxy. We
do not adjudicate it. We show that it does not settle the memory question, because
the quantity these estimators report is not the one the memory question concerns.

The estimator of \cite{GJR2018} and the nonparametric estimator of
\cite{ContDas2024} both act on the scaling of increments of a realized volatility
proxy. By the spectral map stated before \cref{prop:athermal}, a driver whose
spectral density decays as $S_Y(\omega)\sim|\omega|^{-(2H+1)}$ has increments that
scale as those of a fractional Brownian motion of index $H$. An estimator reading
that scaling returns the self-similarity index $H$, and nothing else. By
\cref{prop:welded} a fractional driver welds that index to the memory kernel $K$,
so that fixing one fixes the other. This welding is what licenses the step from a
measured index to a rough kernel. The GLE drops the fractional assumption, and
with it the licence. The index and the kernel then vary separately, so an
estimator of the index reports nothing about the kernel. A construction that
produces an apparent $H < 1/2$ while carrying no memory is then exactly what one
should expect, and \cite{ContDas2024} produce one, from a spot volatility that is
an ordinary diffusion. Their result is not evidence against memory in volatility.
It is evidence that the statistic conventionally used to detect memory does not
detect memory.

The objection has a natural reply, and answering it yields the stronger form of
the claim. One may grant that realized volatility proxies are contaminated and
estimate $H$ from option prices instead, which avoids the construction of
\cite{ContDas2024} altogether. Such estimators exist and are sharp.
\cite{AlosShiraya2019} recover $H$ from short-term volatility swaps by Malliavin
calculus, and \cite{AlosRolloosShiraya2025b} recover it from the zero vanna
implied volatility and its dual. Neither is subject to the realized-volatility
objection.

They are nonetheless silent on the same question, for a reason that has nothing to
do with estimation quality. Every one of these procedures estimates $H$, which by
\cref{prop:welded} is the self-similarity index. The kernel is a different object.
An arbitrarily clean measurement of $H$, taken from options rather than from a time
series, still carries no information about whether volatility has memory. This is
what the paper needs, and it is stronger than a complaint about proxies, because it
holds however good the proxy is.

\begin{myremark}
The Malliavin decomposition is not restricted to the fractional form. Let the variance be$V_t=\varphi(Y_t)$ with driver $Y_t=\int_0^t G(t-s)\, d W_s$, where $G$ is the response from \eqref{eq:response} and $D_u$ denotes the Malliavin derivative with respect to $W$. The chain rule yields $D_u V_t = \varphi'(Y_t) \, G(t-u)\,\mathbf{1}_{u<t}$, allowing the response to factor out of the expectation deterministically
\begin{equation*}
\EE[D_u V_t]=G(t-u)\,\EE[\varphi'(Y_t)]\,\mathbf{1}_{u<t}.
\end{equation*}
This factorization holds for any general, even non-stationary, kernel. Consequently, the limitation of current roughness estimators lies in their estimand, not in the Malliavin machinery itself. Extracting a single short-maturity exponent merely reads the diagonal of the kernel as $u\to t$. This captures the local regularity, which corresponds to the index $H$ only for a purely fractional driver. Recovering the full memory kernel is a distinct inverse problem relying on the same identity. By extracting $G$ (and subsequently the friction kernel $K$ via \eqref{eq:response}) directly from option prices, one measures the precise object estimated in \cref{sec:measure}, rather than extracting a simple roughness statistic.
\end{myremark}

The conclusion has to be drawn narrowly, because the temptation is to draw it too
widely. No estimator of $H$, from any data source, bears on the presence or the
shape of a memory kernel. Settling that is the task of the kernel measurement of
\cref{sec:measure}, whose statistical design is set out in \cref{sec:design}, and
which reads a kernel off a response function rather than an exponent off a time
series. It does not follow that volatility has memory. It follows only that the
evidence usually cited on the point speaks to the self-similarity index, which is a
different object. The memory question is decided by measuring the kernel. The
roughness estimators do not enter it.

\section{Structural freedom} \label{sec:freedom}

\subsection{A kernel with two exponents}

A fractional kernel is a single power law. It carries one exponent, and that
exponent sets both ends of the term structure at once: the local regularity of
the driver, which fixes the short-maturity SPX skew, and the decay of the kernel
at long lags, which fixes the forward variance curve. The GLE places no such constraint. The kernel may carry one exponent at short lags and another at long lags, and the two need not agree.

We parameterize the memory kernel with a short-time exponent $\beta_0$ and a tail exponent $\beta_\infty$, crossing over at a characteristic scale $1/\mu$:
\begin{equation} \label{eq:twoexp}
K(t) \sim t^{-\beta_0} \quad (t \to 0), \qquad K(t) \sim t^{-\beta_\infty} \quad (t \to \infty),
\end{equation}
which can be realized, for instance, by a tempered power law with a slow tail,
\begin{equation}
K(t) = c_0 t^{-\beta_0} e^{-\mu t} + c_\infty t^{-\beta_\infty},
\end{equation}
or by a Mittag-Leffler kernel. In the Laplace domain, the asymptotic behavior is given by $\Lap{K}(z) \sim z^{\beta_0-1}$ as $z \to \infty$ and $\Lap{K}(z) \sim z^{\beta_\infty-1}$ as $z \to 0$, and the response function $\Lap{G}(z) = (z\Lap{K}(z) + \lambda)^{-1}$ inherits both limits. Because these two limits are evaluated at opposite ends of the frequency axis, the two exponents remain entirely independent of one another.

The short-time exponent governs the short-maturity skew. At high frequency
$\Lap{G}(z)\sim z^{-\beta_0}$, so the driver spectrum decays as
$S_Y(\omega)\sim|\omega|^{-2\beta_0}$, the spectrum of a fractional Brownian
motion of index $\beta_0-\tfrac12$, and by the map preceding \cref{prop:athermal}
the short-maturity at-the-money skew scales as
$\mathcal{S}(T)\propto T^{\beta_0-1}$, \cite{ALV2007, Fukasawa2017}. The tail
exponent governs the forward variance curve. At low frequency the same response
has an algebraic tail, $G(t)\sim t^{-(1+\beta_\infty)}$, so the forward variance
relaxes to its stationary level as a power of maturity rather than exponentially,
and this slow relaxation is what VIX futures price across maturities. Raising
$\beta_0$ at fixed $\beta_\infty$ steepens the short end and leaves the
forward-variance decay in place, and raising $\beta_\infty$ at fixed $\beta_0$
does the reverse.

A fractional kernel is the degenerate case $\beta_0=\beta_\infty$. A single power law is at once the short-lag and the long-lag behavior, so the short-maturity skew and the forward variance decay become one parameter, and fitting either fixes the other. This is \Cref{prop:welded} read on the two surfaces. The two-exponent kernel breaks this tie at the level of the model structure, and the hypothesis (untested in this paper) is that this additional degree of freedom is precisely what the joint calibration requires. Whether the separation of exponents survives the passage to the risk-neutral measure, and whether it materially improves the joint fit over a single-exponent specification, are empirical questions deferred to the companion paper \cite{ItkinGLE2b}.

\medskip
The short end itself admits two readings, and the framework carries both.

\myparagraph{A rough short end}
If $\beta_0\in(\tfrac12,1)$ the kernel is singular at the origin, the driver has
the local regularity of a fractional Brownian motion of index
$\beta_0-\tfrac12\in(0,\tfrac12)$, and the short-maturity skew diverges as
$T^{\beta_0-1}$. This is the rough-volatility reading, consistent with the
realized-volatility estimates of \cite{GJR2018}, which report an index well below
$\tfrac12$. The observed value near $0.1$ corresponds to $\beta_0\approx0.6$.
Unlike a fractional model the tail exponent $\beta_\infty$ is free, so the
forward variance decay is not tied to that same $0.1$.

\myparagraph{A finite skew at zero maturity}
If instead the kernel is regular at the origin, carrying a Markovian part or a
bounded short-time piece, the driver is a semimartingale, its local regularity is
$\tfrac12$, and the short-maturity skew is finite at $T=0$ rather than divergent.
The crossover scale $1/\mu$ sets the maturity below which the skew saturates and
fixes the finite level. This is the reading of \cite{AmraniGuyon2022}, who
extrapolate a finite short-maturity skew near $1.5$ in absolute value, and of
\cite{Itkin2024}, whose ADO construction is a semimartingale of exactly this kind
and reproduces the observed skew across the traded range while remaining finite
at the origin, as \cref{prop:ado} records. Here too $\beta_\infty$ is free and
carries the memory.

\begin{myremark}
The two readings are mutually exclusive and the choice is empirical. A rough
short end predicts a short-maturity skew that grows without bound, a regular
short end predicts one that saturates, and the short end of the SPX skew term
structure decides between them. What the two share, and what a fractional kernel
denies both, is a tail exponent independent of the short end. By
\cref{sec:contdas} the memory the vanilla surface does not see is carried by
$\beta_\infty$ in either case.
\end{myremark}

\subsection{Why joint calibration wants this structure} \label{sec:joint}

The joint problem is the endpoint of this research, but is considered in detail in the companion paper \cite{ItkinGLE2b}. That is where the potential of
\cref{sec:potential} does work no kernel can do. Here, we just provide few comments on why the joint calibration problem is hard in a form that identifies which object is overloaded.

\begin{proposition}[The VIX law does not see $\rho$] \label{prop:rho}

Suppose the variance driver is autonomous, meaning that the dynamics of $Y$
involve neither $S$ nor the Brownian motion $W$ driving it. Let
\begin{equation}
\mathrm{VIX}_T^2=\frac{1}{\Delta}\,\EE_T\Bigl[\int_T^{T+\Delta}V_u\,du\Bigr].
\end{equation}
Then the law of $\mathrm{VIX}_T$ under $\mathbb{Q}$ does not depend on $\rho$.
Prices of VIX futures and VIX options therefore carry no information about
$\rho$, and $\rho$ cannot be used to fit them.
\end{proposition}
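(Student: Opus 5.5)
The plan is to show that $\mathrm{VIX}_T$ is a measurable functional of the path of $Y$ (equivalently of $B$) alone, and that the $\mathbb{Q}$-law of that path does not involve $\rho$. The correlation enters the model only through the joint law of $(W,B)$, never through the marginal law of $B$. I would proceed in three steps.

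\textbf{Step 1: fix the structure under $\mathbb{Q}$.} Under $\mathbb{Q}$ the driver $Y$ solves \eqref{eq:intro-gle} with noise $\xi$ built from $B$, where $B$ is the $\mathbb{Q}$-Brownian motion obtained after the change of measure. Autonomy means that the kernel $K$, the potential $U$, the noise covariance $C$ and any risk-premium drift placed on $Y$ are functions of the history of $Y$ (and time) only. The assumption I will state explicitly is that the market price of variance risk does not involve $S$ or $W$, since that is how autonomy must be read under $\mathbb{Q}$. Then $Y$ is the solution of a (Volterra-type) equation driven by $B$ alone. Given well-posedness, which holds for the admissible kernels of \cref{sec:assumptions}, or directly through the Markovian lift $(Y,u_1,\dots,u_N)$, $Y$ is a measurable functional $Y=\Phi(B)$ of the path of $B$, with $\Phi$ independent of $\rho$.

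\textbf{Step 2: the marginal law of $B$ is free of $\rho$.} Whatever $\rho$ is, $B$ is a standard Brownian motion under $\mathbb{Q}$, so its law is Wiener measure. The parameter $\rho$ only fixes $d\langle W,B\rangle_t$. Hence the law of $(Y_u)_{u\ge0}=\Phi(B)$ is the same for all $\rho\in[-1,1]$.

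\textbf{Step 3: the conditional expectation is a functional of $Y$.} I would show that $\EE_T[\int_T^{T+\Delta}\varphi(Y_u)\,du]$ equals $\Psi\bigl((Y_s)_{s\le T}\bigr)$ for a functional $\Psi$ that does not depend on $\rho$. The filtration $\mathcal{F}_T$ is generated by $(W,B)$, and $W$ is correlated with $B$, so I write $W=\rho B+\sqrt{1-\rho^2}B^\perp$ with $B^\perp$ independent of $B$. The future increments of $B$ after $T$ are independent of $\mathcal{F}_T$, and $Y_u$ for $u>T$ is determined by $(Y_s)_{s\le T}$ together with those increments. Therefore the conditional expectation given $\mathcal{F}_T$ coincides with the one given $\mathcal{F}^B_T$, and it is computed from the $\rho$-free law of Step 2. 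With the lift this is immediate, because the expectation is a function of the Markov state $(Y_T,u_n(T))$ and the $\rho$-free transition kernel.

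Combining the three steps, $\mathrm{VIX}_T=\sqrt{\Psi(\Phi(B))/\Delta}$, whose law is the image of Wiener measure under a $\rho$-free map. VIX futures and options are expectations of functions of $\mathrm{VIX}_T$, so they do not depend on $\rho$ either.

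The main obstacle is Step 1, namely making precise that autonomy survives the passage to $\mathbb{Q}$. For an incomplete market the paper defers the measure construction, so a risk premium depending on $S$ would break the result. I would therefore state the premium restriction as part of the meaning of ``autonomous'' under $\mathbb{Q}$, and handle the non-Markovian conditioning in Step 3 through the lift, or through the independence of future increments of $B$ from $\mathcal{F}_T$.
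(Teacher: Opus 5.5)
Your proposal is correct and follows the paper's own argument. $Y$ is a $\rho$-free measurable functional of the path of $B$. The $\mathbb{Q}$-law of $B$ is Wiener measure for every $\rho$, because $\rho$ enters only the cross-variation $d\langle W,B\rangle_t$. The law of $\mathrm{VIX}_T$ is then the push-forward of that law.

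Your Step 3 goes further than the paper in two places it leaves implicit. First, you note that future increments of $B$ are independent of the joint filtration $\mathcal{F}_T$, so $\EE_T$ reduces to a $\rho$-free functional of the path of $B$ up to $T$; the paper simply asserts that $\mathrm{VIX}_T$ is a functional of the $Y$-path. Second, you state explicitly that the market price of variance risk must not bring $S$, $W$ or $\rho$ into the $\mathbb{Q}$-dynamics of $Y$; the paper takes those dynamics as given.
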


\begin{proof}
See \cref{app:rho}
\end{proof}

\cref{prop:rho} is elementary and known, but isolating it locates the
difficulty exactly. The correlation is the natural instrument for the SPX skew
and is useless on the VIX side. Whatever generates dispersion of the variance
must therefore fit the VIX smile unaided, while also supplying whatever the SPX
skew requires beyond what $\rho$ delivers. Under a quadratic potential that
entire burden falls on one vol-of-vol parameter. This is the over-constraint,
and it is a feature of the specification, not of the market - a structural limitation formally recognized since the early development of local stochastic volatility frameworks, \cite{Lipton2002}.

Forward-start options are a separate instrument for $\rho$. \cref{prop:rho} concerns the VIX and does not extend to forward-start options on the underlying.
\cite{AlosGarciaLorite2021} show that the short-maturity at-the-money level of a
Type II forward-start option is a direct function of the correlation between the
asset and its instantaneous volatility, and that the forward-start at-the-money
skew decays at a rate different from the vanilla one. Forward-start quotes
therefore constrain $\rho$ through a channel that neither the vanilla skew nor
the VIX smile provides, which is useful here and is also a caveat: the exponent
compared across start dates is not the vanilla exponent and must not be
benchmarked against it.

The literature resolves it by enlarging the state. \cite{Guyon2024} use four
path-dependent factors. \cite{AbiJaberIllandLi2022} retain a Gaussian Volterra
driver and enlarge the static link to a quintic polynomial, reporting that a
one-factor Markovian member of that class fits both surfaces well. Both add
degrees of freedom to a Gaussian core. We add them to the dynamics instead.

\subsection{The potential} \label{sec:potential}

With $U$ quadratic, $Y$ is Gaussian for every choice of $K$. The entire kernel
family of \cref{rem:white}, rough or tempered or two-exponent, varies
nothing but a covariance function. This is the class of
\cite{AbiJaberIllandLi2022}, in which non-Gaussianity is supplied afterwards by
a polynomial link applied to a Gaussian Volterra process. A non-quadratic $U$
supplies it in the dynamics instead. The two are not equivalent, and the
difference is not cosmetic.

\begin{proposition}[A link cannot reproduce a potential] \label{prop:link}

Let $\varphi$ be the quantile transform carrying the stationary law of an
Ornstein-Uhlenbeck process onto that of the overdamped double-well diffusion
$d Y=-Y(Y^2-1)d t+\sigma\,d W$. Then $\varphi(X)$ and $Y$ have identical
one-dimensional marginals, and the mean-reversion rate of $X$ may in addition be
chosen so that their integrated autocorrelation times agree, yet their regime
persistence differs.
\end{proposition}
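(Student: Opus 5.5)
The plan has four steps: fix the marginals exactly, match the integrated autocorrelation time by a one-parameter continuity argument, and then separate the two processes through a persistence functional that the quantile map cannot reproduce.

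\emph{Marginals.} Let $X$ be the stationary OU process $dX=-\kappa X\,dt+dW$, with marginal $\mathcal N(0,1/(2\kappa))$ and CDF $\Phi_\kappa$. The double well has stationary density $p(y)\propto\exp(-2U(y)/\sigma^2)$ with $U(y)=y^4/4-y^2/2$, and CDF $F$. Since $F$ is continuous and strictly increasing, $\varphi=F^{-1}\circ\Phi_\kappa$ is a strictly increasing bijection of $\mathbb R$. It follows that $\varphi(X_t)$ has law $p$ for every $t$, which is the first claim, and this holds whatever $\kappa$ is.

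\emph{Autocorrelation times.} For a stationary Markov diffusion with invariant law $\pi$, the integrated autocorrelation time of an observable $f$ is
\[
\tau_f=\mathrm{Var}_\pi(f)^{-1}\int_0^\infty \mathrm{Cov}\bigl(f(Z_t),f(Z_0)\bigr)\,dt=\mathrm{Var}_\pi(f)^{-1}\langle f-\bar f,(-\mathcal A)^{-1}(f-\bar f)\rangle_\pi .
\]
In one dimension $(-\mathcal A)^{-1}$ is given by an explicit double integral of the speed and scale densities, so $\tau$ is finite for both processes. Rescaling time shows $X_t\stackrel{d}{=}\tilde X_{\kappa t}$ with $\tilde X$ the standard OU rescaled. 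Hence $\tau_{\varphi(X)}=c/\kappa$ for a constant $c>0$ that does not depend on $\kappa$, because $\varphi$ depends on $\kappa$ only through the standardized variable. This map is continuous and covers $(0,\infty)$ as $\kappa$ varies, so some $\kappa^\ast$ equates $\tau_{\varphi(X)}$ with $\tau_Y$.

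\emph{Persistence.} I would measure persistence by the mean residence time in a well, for example the expected exit time from $(0,\infty)$ started from the well bottom, or equivalently the rate of sign changes observed at resolution $\epsilon$. Since $\varphi$ is increasing and $\varphi(0)$ is the median, which is $0$ by symmetry, the sign of $\varphi(X)$ coincides with the sign of $X$. For $X$ the residence structure is therefore that of OU. For the double well it is governed by Kramers escape: the mean residence time behaves like $\frac{2\pi}{\sqrt{U''(1)|U''(0)|}}e^{2\Delta U/\sigma^2}$ with $\Delta U=1/4$, which grows exponentially as $\sigma\to0$. For OU with the matched $\kappa^\ast$, the residence time in a half-line after reaching level $y$ grows only polynomially or logarithmically in the distance.

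The key step is to compare both residence times with $\tau$. For the double well, small $\sigma$ gives a spectral gap $\lambda_1\approx$ Kramers rate, and $\tau_Y\approx1/\lambda_1$ because the slow mode dominates the variance of $Y$. For $\varphi(X)$, $\tau\sim1/\kappa^\ast$, so $\kappa^\ast\sim\lambda_1$. Now consider the persistence statistic $P(h)=\Pr(\mathrm{sign}\,Z_{t+h}=\mathrm{sign}\,Z_t)$ at a small lag $h$ that is still large compared with intra-well relaxation. For $Y$, $1-P(h)\approx\lambda_1 h$ because crossings occur only as rare Kramers jumps. For $\varphi(X)$, $P(h)=\tfrac12+\tfrac1\pi\arcsin(e^{-\kappa^\ast h})$, so $1-P(h)\approx\sqrt{2\kappa^\ast h}/\pi\gg\kappa^\ast h$. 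The two therefore differ, and the gap is of order $\sqrt{\lambda_1 h}$.

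\emph{Main obstacle.} The hard step is making this asymptotic comparison rigorous for a fixed $\sigma$ rather than only as $\sigma\to0$. For a fixed $\sigma$ I would instead argue by contradiction. If the sign processes agreed in the law of their two-time distributions, then $\mathrm{Cov}(\mathrm{sign}\,Y_t,\mathrm{sign}\,Y_0)$ would equal $\tfrac2\pi\arcsin(e^{-\kappa^\ast t})$. That function has the singular short-time behavior $1-c\sqrt t$. For the double well, the sign-persistence function can instead be written as a spectral sum over the generator, whose first eigenfunction has small-$t$ behavior of the same $\sqrt t$ type but with a different long-time exponential rate unless the spectra coincide. The OU spectrum is $\{n\kappa\}$ and is equally spaced, while the double-well spectrum is not. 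The comparison can therefore be closed by showing that the second nontrivial eigenvalue of the double well is not twice the first, which can be checked through the Schrödinger transform of the generator.
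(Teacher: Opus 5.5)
Your proof is correct at the level the paper itself works at ($\sigma\to0$), and it takes a different and more robust route for the persistence step. One step in your optional fixed-$\sigma$ extension does not work as written; see the second paragraph.

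Your marginal step is the same as the paper's. Your matching step improves on the paper's bare assertion, because the scaling identity $\tau_{\varphi(X)}=c/\kappa$ gives $\kappa^\ast=c/\tau_Y$ explicitly. The paper contrasts the Kramers time $\propto e^{2\Delta U/\sigma^2}$ with the OU hitting time from the image of the $+1$ well to the origin, which it asserts has no exponential factor. That implicitly holds $\kappa$ fixed as $\sigma\to0$, which the matching forbids. Your observation $\kappa^\ast\sim\lambda_1$ shows the contrast does not survive. Indeed $c\to\int_0^\infty\tfrac{2}{\pi}\arcsin(e^{-u})\,du=\ln2$, so $\kappa^\ast\approx\lambda_1\ln2$. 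Since $\varphi^{-1}(1)$ sits near $\Phi^{-1}(3/4)\approx0.67$ standard deviations, the mean OU time to reach the origin from there is about $0.67/\kappa^\ast\approx0.96/\lambda_1$. For the double well it is $1/\lambda_1$, half the Kramers time. Mean residence times therefore have the same $\sigma\to0$ scaling. What actually separates the two processes is your sign-persistence statistic on $1\ll h\ll1/\lambda_1$: it is of order $\lambda_1 h$ for $Y$ and $\sqrt{2\kappa^\ast h}/\pi$ for $\varphi(X)$. You should drop your sentence claiming OU residence grows only polynomially or logarithmically, since it repeats the paper's contrast. Two minor corrections: the telegraph approximation gives $1-P(h)\approx\lambda_1h/2$, because each escape rate is $\lambda_1/2$; and there is an extra term of order $\lambda_1$ (up to logarithms) from mass near the barrier. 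Neither affects the comparison with $\sqrt{\lambda_1h}$.

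Your fixed-$\sigma$ closure does not close as written. Your contradiction hypothesis concerns the sign process, and $\mathrm{sgn}$ is odd. Its covariance under $Y$ is therefore $\sum_j\langle\mathrm{sgn},\psi_j\rangle_\pi^2e^{-\lambda_jt}$ over the odd eigenfunctions $\psi_1,\psi_3,\dots$ only. For OU, $\tfrac{2}{\pi}\arcsin(e^{-\kappa t})=\tfrac{2}{\pi}\sum_{j\ge0}\frac{(2j)!}{4^j(j!)^2(2j+1)}e^{-(2j+1)\kappa t}$ contains only the rates $(2j+1)\kappa$. The even eigenvalue $\lambda_2$ enters neither side, so showing $\lambda_2\neq2\lambda_1$ would at most show that the two processes differ in law. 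That is immediate anyway, since $\varphi(X)$ has the non-constant diffusion coefficient $\varphi'(\varphi^{-1}(z))^2$. It would not show that their regime statistics differ. What you need is this: by uniqueness of Laplace transforms of positive measures, equal sign covariances force $\lambda_1=\kappa^\ast$ with weight $\langle\mathrm{sgn},\psi_1\rangle_\pi^2=2/\pi$, then $\lambda_3=3\kappa^\ast$, and so on. The weight is nonzero because $\psi_1$ vanishes only at the origin. Refuting any one of these equalities suffices. For small $\sigma$ the first two fail, because $\kappa^\ast/\lambda_1\to\ln2$ and $\psi_1\to\mathrm{sgn}$ drives the weight to $1$. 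At an arbitrary $\sigma$ one of them must still be checked, e.g.\ numerically. Also, no single mode has $1-c\sqrt t$ behaviour; that comes from the slowly decaying weights, reflecting the jump of $\mathrm{sgn}$. Since the paper itself argues only as $\sigma\to0$, your asymptotic argument already covers its scope.
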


\begin{proof}
See \cref{app:link}
\end{proof}

The reason to carry a potential, however, is not that it buys regime persistence.
It is that a barrier makes the memory kernel identifiable.

\begin{proposition}[The barrier selects the frequency at which $K$ is observed]
\label{prop:gh}
For escape over a barrier of curvature $\omega_b$ in the presence of memory, the
Grote-Hynes reactive frequency $\lambda_r$ solves
\begin{equation}   \label{eq:gh}
\lambda_r^{2} + \frac{\lambda_r}{m}\Lap{K}(\lambda_r) \;=\; \omega_b^{2},
\end{equation}
and the escape rate is $k=(\lambda_r/\omega_b)\,k_{\mathrm{TST}}$. The rate
therefore depends on $\Lap{K}$ at the single frequency $\lambda_r$, which is set
by $\omega_b$.
\end{proposition}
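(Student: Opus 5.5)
The plan is to follow the standard Grote--Hynes route: linearize the inertial GLE \eqref{eq:intro-gle} at the barrier top, find the unstable growth rate of the linearized dynamics from the Laplace-domain characteristic equation, and then identify the transmission coefficient with $\lambda_r/\omega_b$.

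\textbf{Linearization.} Place the barrier top at $y_b$ and write $x=Y-y_b$. Near the top, $U(y)\approx U(y_b)-\tfrac12 m\omega_b^2 x^2$, so $-U'(Y)\approx m\omega_b^2 x$ and the equation of motion becomes
\begin{equation*}
m\ddot x_t=m\omega_b^2x_t-\int_0^tK(t-s)\dot x_s\,ds+\xi_t .
\end{equation*}
Here $\omega_b$ is read as a frequency, with the curvature normalized by $m$. This normalization is what makes \eqref{eq:gh} appear with $1/m$ in front of $\Lap{K}$.

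\textbf{Characteristic equation.} Take Laplace transforms exactly as in \eqref{eq:response}, now keeping the inertial term and the initial data $x_0,v_0$:
\begin{equation*}
\bigl(mz^2+z\Lap{K}(z)-m\omega_b^2\bigr)\Lap{x}(z)=(mz+\Lap{K}(z))x_0+mv_0+\Lap{\xi}(z).
\end{equation*}
The long-time behaviour of $x$ is governed by the zeros of $D(z)=z^2+z\Lap{K}(z)/m-\omega_b^2$.

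Under the admissibility assumptions $K\ge0$ and $K$ nonincreasing, $\Lap{K}$ is positive and decreasing on $(0,\infty)$. Since $K$ is locally integrable, $z\Lap{K}(z)=o(z^2)$, so $D(0^+)\le-\omega_b^2<0$ while $D(z)\to+\infty$. Hence $D$ has a positive real root.

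I would prove uniqueness of that root by monotonicity of $D$ on $(0,\infty)$. If $z\Lap{K}(z)$ fails to be monotone, I would fall back on complete monotonicity, which the paper allows to be flagged, to get $z\Lap{K}(z)$ increasing. Any remaining roots would then need real part below this one; I would show this for completely monotone $K$ via Bernstein representation. The unique positive root is $\lambda_r$, and it satisfies \eqref{eq:gh}.

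\textbf{Transmission coefficient.} This is the step I expect to be the main obstacle. Decompose $x_t$ into the unstable mode $e^{\lambda_r t}$ plus decaying terms. Then compute the probability of ultimate crossing for a trajectory started at the barrier with velocity $v_0$, averaging $\xi$ and $v_0$ over the stationary flux distribution.

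Standard Grote--Hynes/Kramers bookkeeping, computing the asymptotic variance of the noise-driven part via the residue of $1/D$ at $\lambda_r$, gives $\kappa=\lambda_r/\omega_b$.

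That bookkeeping uses the FDT \eqref{eq:fdt} to equate the noise variance with the kernel. Since the paper does not impose \eqref{eq:fdt}, I would state this step under the equilibrium specification and remark that in the athermal case the prefactor changes while $\lambda_r$ still solves \eqref{eq:gh}. The closing claim, that the rate depends on $\Lap{K}$ only at $\lambda_r$, follows because \eqref{eq:gh} involves no other values of $\Lap{K}$. Finally, $\lambda_r$ is a function of $\omega_b$, increasing in $\omega_b$ by the implicit function theorem applied to the monotone $D$.
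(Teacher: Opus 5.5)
Your route is the paper's: linearize at the barrier top with $U''=-m\omega_b^2$, Laplace-transform to the characteristic function $mz^2+z\Lap{K}(z)-m\omega_b^2$, and take its unique positive root as $\lambda_r$. The paper gets uniqueness from the Bernstein representation, as in your fallback. It then imports $\kappa_{\mathrm{GH}}=\lambda_r/\omega_b$ from Grote--Hynes, which it only cites and only under the FDT. Your caveat about the athermal prefactor, and your check that no other root lies in the open right half-plane for completely monotone $K$, therefore go slightly beyond what the paper proves.

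One step as written is wrong, though: since $z\Lap{K}(z)\ge 0$, you get $D(0^+)\ge-\omega_b^2$, not $\le$. Writing $z\Lap{K}(z)=\int_0^\infty e^{-u}K(u/z)\,du$ shows that $z\Lap{K}(z)\to K(\infty):=\lim_{t\to\infty}K(t)$ as $z\to0^+$. Hence $D(0^+)=K(\infty)/m-\omega_b^2$, and if $K(\infty)\ge m\omega_b^2$ there is no positive root at all: a nonvanishing plateau of the kernel acts as an extra spring that cancels the barrier. Existence therefore needs the additional hypothesis $K(\infty)<m\omega_b^2$, most naturally $K(t)\to0$. The paper's phrase ``increases from $0$'' silently assumes the same thing, namely no atom at $s=0$ in the Bernstein measure.

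The same substitution also shows that $z\Lap{K}(z)$ is nondecreasing for every nonincreasing $K$. Uniqueness of $\lambda_r$ therefore follows from the standing admissibility assumption alone, and the complete-monotonicity fallback is unnecessary for that step.
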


\begin{proof}
See \cref{app:gh}
\end{proof}

This is the strongest identification the kernel admits. The reactive frequency the
barrier isolates breaks the degeneracy that a Gaussian spectrum leaves, because it
samples $\Lap{K}$ at one point rather than through a whole covariance. The
volatility dynamics are overdamped, and there the mass term is absent from the
equation of motion. The barrier crossing then solves the first-order relation
\begin{equation} \label{eq:gh_od}
\lambda_r\,\Lap{K}(\lambda_r) = \omega_b^{2},
\end{equation}
derived in \cref{app:gh}, which fixes $\lambda_r$ from the curvature $\omega_b$ and
the friction at that frequency, \cite{GroteHynes1980, HanggiTalknerBorkovec1990}.
The reactive frequency is not the escape rate. The escape rate
$k=(\lambda_r/\omega_b)\,k_{\mathrm{TST}}$ is small because the barrier is high, and
that smallness sits in $k_{\mathrm{TST}}\sim e^{-\Delta U/\Theta}$, not in
$\lambda_r$. The reactive frequency is a property of the barrier top and the
friction, set by \eqref{eq:gh_od} alone.

The curvature $\omega_b$ is the frequency of the regime dynamics, of order the
mean-reversion rate, a few per year, and it sits well below the crossover $1/\mu$,
which is on the order of weeks. Solving \eqref{eq:gh_od} for the two-exponent kernel
of \eqref{eq:twoexp} with such a curvature and any appreciable friction places
$\lambda_r$ deep in the tail band, below the crossover by orders of magnitude, so
the regime-switching rate probes the tail exponent $\beta_\infty$. The intraday
order-flow response probes the short exponent $\beta_0$, at frequencies far above
the crossover. The two measurements land in the two bands of \eqref{eq:twoexp}, and
a fractional kernel forces them to agree because it forces $\beta_0=\beta_\infty$.
Comparing them is a test of that welding, and a separate reading of $\beta_0$ and
$\beta_\infty$ when it fails, not a fit.

\subsubsection{Which potential}

\cref{prop:gh} is the reason the potential belongs to a paper about memory. Under
a quadratic $U$ the kernel enters only through the full spectrum of a Gaussian
process, which is the degenerate case in which many kernels are observationally
equivalent over the traded range, as the Prony argument of \cref{sec:corner}
already shows. A barrier breaks the degeneracy by sampling $\Lap{K}$ at one
frequency, and varying the barrier scans the kernel. The transition rate between
volatility regimes therefore yields an estimate of $\Lap{K}$ that is independent
of the response-function estimate, and comparing the two is a test rather than a
fit.

The curvature $U''$ at the minimum is the mean-reversion rate of the variance.
Together with $\Lap{K}$ it sets the relaxation spectrum and hence the decay of
the forward variance curve, which is what VIX futures price across maturities.
The asymmetry $U'''$ governs the skewness of the stationary variance law. This
matters more than it appears, because $\rho$ does not enter the variance
equation at all and therefore cannot move the VIX smile, in this specification
or in any standard stochastic volatility model. Under a quadratic $U$ the
vol-of-vol is left to produce the SPX skew term structure and the VIX smile
simultaneously, which is the over-constraint at the centre of the joint
calibration problem, \cite{AbiJaberIllandLi2022, Guyon2024}. A nonzero $U'''$
supplies a lever that acts on the VIX side alone.

To observe this structural flexibility quantitatively, we examine the stationary distribution of the variance process $V = e^Y$, where the log-variance $Y$ is governed by the non-quadratic potential
\begin{equation} \label{potential}
U(y)=\tfrac{a}{2}y^{2}+\tfrac{c_3}{6}y^{3}+\tfrac{d}{4}y^{4},   \qquad d > 0.
\end{equation}
The stationary probability density function satisfying the Fokker-Planck equation is given by
\begin{equation}
p(y) = \frac{1}{Z} \exp\left(-\frac{2}{\sigma^2} U(y)\right), \qquad
Z = \int_{-\infty}^{\infty} \exp\left(-\frac{2}{\sigma^2} U(y)\right) \, dy,
\end{equation}
and $Z$ is the partition function. The raw moments of the transformed variance $V = e^Y$ are then computed via numerical integration over the infinite domain:
\begin{equation}
  \mathbb{E}[V^n] = \int_{-\infty}^{\infty} e^{ny} p(y) \, dy.
\end{equation}

Holding the curvature $a$ and confinement $d$ fixed preserves the overall well geometry while isolating the effect of the asymmetry parameter $c_3$. As $c_3$ is varied from $0$ to $-2$, the effect on the stationary law of $V$ is
summarized in \cref{tab:c3_skewness}. The skewness rises from $1.30$ to $1.58$ and
the mean moves from $1.15$ to $1.39$. The lever is modest, and it acts on the
variance law alone. How that asymmetry maps to the VIX implied volatility smile is a
risk-neutral computation, carried out in the companion paper \cite{ItkinGLE2b}. What
matters here is that $U'''$ moves the variance law without touching $\rho$, so it is
a degree of freedom on the VIX side that the correlation cannot supply.

\begin{table}[!htb]
\centering
\begin{tabular}{rrr}
\toprule
Asymmetry ($c_3$) & Mean ($\mathbb{E}[V]$) & Skewness ($\gamma_1$) \\
\hline
$0.0$  & $1.15$ & $1.30$ \\
$-0.5$ & $1.20$ & $1.37$ \\
$-1.0$ & $1.25$ & $1.44$ \\
$-1.5$ & $1.31$ & $1.51$ \\
$-2.0$ & $1.39$ & $1.58$ \\
\bottomrule
\end{tabular}
\caption{Moments and skewness of the stationary variance $V = e^Y$ as a function of the asymmetry parameter $c_3$ (with $a=1.0, d=1.0, \sigma=1.0$).}
\label{tab:c3_skewness}
\end{table}

The usable range of the asymmetry parameter is strictly bounded. A second stationary point of $U$ appears once $c_3^2 > 16ad$, so the bound scales as $|c_3| \le 4\sqrt{ad}$ and is a statement about the ratio, not about $c_3$ alone. At the illustrative $a=d=1$ used here the cubic overwhelms the quartic confinement beyond $c_3 \approx -2.5$. At the empirically fitted $d \approx 10^{-3}$ of \cref{tab:bistability_search} the same bound is two orders of magnitude tighter, which is why the fitted quartic places a second stationary point far outside the sampled range and why bistability must be assessed on the observed support rather than globally. The growth of $U$ at infinity controls the tail of the variance distribution and hence which moments exist, which fixes the wings of both smiles
through the moment formula, \cite{Lee2004} and bears on whether the discounted
price is a true martingale. No choice of kernel affects this, since the kernel
moves only the covariance.

A barrier, if present, contributes the regime dwell time and the frequency
$\lambda_r$ of \eqref{eq:gh}. It is worth noting that a cubic term alone is not confining, so asymmetry cannot be introduced without a quartic. The minimal potential carrying all of the above is that in \eqref{potential} with $a$ the
mean-reversion rate, $c_3$ the VIX-side asymmetry, $d$ the tail control, and
$a<0$ giving bistability and hence the barrier of \cref{prop:gh}. Three
parameters cover five roles.

The map from the physical measure to the risk-neutral measure, for a non-traded
variance driver under a non-quadratic potential, is the construction the Lean
Marketron papers, \cite{HalperinItkinLeanMarketron, HalperinItkinMarketron,
HalperinItkinMarketron2} carry out by exponential-utility indifference pricing, and
it is taken up for this model in the companion paper \cite{ItkinGLE2b}. Within the
physical law the two shape parameters are separately identified. The map from
$(c_3,d)$ to the skewness and the excess kurtosis of the stationary variance has a
non-singular Jacobian, so both are recoverable from the stationary moments, though
the two act on the moments in similar directions and the identification is only
moderately conditioned. A sharp separation uses the full smile and belongs with the
risk-neutral calibration.

Furthermore, introducing a non-quadratic $U$ breaks the affine structure of the state space, meaning the standard Riccati-Volterra machinery is lost. Pricing must therefore proceed via Monte Carlo or a numerical partial differential equation (PDE) solver operating directly in the Markovian lifted state space. For instance, the resulting anisotropic Fokker-Planck equation can be resolved using the Diagonal Frog finite-difference scheme, \cite{ItkinDF2026,Itkin2026FCDF}. This structured finite-difference approach guarantees stability over the spatial domain, allowing exact computation of the density without relying on affine characteristic functions.

The limitation of a static quantile transform, as used in polynomial diffusion models, becomes strictly evident when considering regime persistence. While a static link $\varphi$ can perfectly match the one-dimensional marginal law of a double-well potential, the marginal law is a static object whereas the dwell time is inherently dynamic. By the Kramers escape rate theory, \cite{Kramers1940}, the mean dwell time in a volatility regime separated by a barrier $\Delta U$ scales exponentially as $e^{\Delta U/\Theta}$. No such activation factor exists under any static transform of a continuous Gaussian process. Thus, a potential-driven model naturally captures the observed persistence of calm and stressed volatility regimes that a static polynomial link structurally fails to reproduce.

One caveat frames the whole measurement program. The kernel, the friction-to-noise
ratio, the barrier rate, and the leverage kernel are physical-measure quantities,
read from realized trajectories. The joint SPX and VIX calibration is a risk-neutral
exercise. The two are separate bodies of evidence, joined only by a price of risk
that must be constructed rather than read off, because the variance driver is not
traded. That construction is the subject of the companion paper \cite{ItkinGLE2b},
and none of the physical-measure tests here should be read as a statement about
prices.

\section{Markovian embedding with inertia} \label{sec:mark_embd}

In this section, we extend the analysis of the GLE to systems with finite inertia, corresponding to $m > 0$. While the dynamics in the strictly memoryless limit are well established, the presence of a non-local historical convolution introduces significant analytical and computational complexity. To address these extended memory effects, we first investigate the monotonicity properties of the inertial resolvent under completely monotone friction kernels. Subsequently, we construct an exact, higher-dimensional Markovian embedding. This lifting procedure replaces the integro-differential memory term with a coupled system of local auxiliary variables, thereby rendering the stochastic dynamics computationally tractable via standard numerical integration schemes.

\subsection{Inertia}

\begin{lemma} \label{lem:cm}
If $K$ is completely monotone and $m=0$, the resolvent is completely monotone and the forward variance impulse response is monotone, \cite{GLS1990}.
\end{lemma}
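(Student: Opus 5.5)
The plan is to work entirely in the Laplace domain and use the Bernstein--Stieltjes calculus. By \eqref{eq:response} with $m=0$ and linear restoring force, the resolvent satisfies $\Lap{G}(z)=\bigl(z\Lap{K}(z)+\lambda\bigr)^{-1}$. Showing $G$ completely monotone is, by Bernstein's theorem, the same as showing that $\Lap{G}$ is the Laplace transform of a nonnegative measure. I will reach this through the chain of function classes below, as in \cite{GLS1990}.

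\textbf{Step 1.} Since $K$ is completely monotone, write $K(t)=\int_{[0,\infty)}e^{-st}\,\nu(ds)$ with $\nu\ge0$. Then $\Lap{K}(z)=\int \nu(ds)/(z+s)$. This is a Stieltjes function, and the admissibility assumption (local integrability of $K$) makes the integral finite.

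\textbf{Step 2.} Multiplying by $z$ gives
\[
f(z)=z\Lap{K}(z)=\int \frac{z}{z+s}\,\nu(ds).
\]
This is a complete Bernstein function, and adding the constant $\lambda\ge0$ preserves that class.

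\textbf{Step 3.} I invoke the standard duality: $f\not\equiv0$ is complete Bernstein if and only if $1/f$ is Stieltjes. Hence
\[
\Lap{G}(z)=\frac{a}{z}+\int_{(0,\infty)}\frac{\sigma(ds)}{z+s},\qquad a\ge0,\ \sigma\ge0 .
\]
Inverting term by term gives
\[
G(t)=a+\int e^{-st}\,\sigma(ds),
\]
which is completely monotone.

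\textbf{Step 4 (the main technical obstacle).} I must check that this Stieltjes representation is genuinely the transform of a locally integrable function and not merely a measure. There are two boundary cases. The first is the atom at $z=0$: $a=\lim_{z\to0}z/f(z)$ vanishes when $\lambda>0$, while for $\lambda=0$ it produces a constant plateau, which is still completely monotone. The second is the behaviour at $z\to\infty$. Because $\Lap{G}(z)\to0$, there is no $\delta$ at the origin, and the singular kernel of \cref{prop:athermal} illustrates that $G$ may blow up at $t=0$ while remaining integrable near the origin. I will verify local integrability from $\Lap{G}(z)=O(1/f(z))$ and $f(z)\to\infty$ or $f(z)\ge\lambda>0$. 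For the pure power law this is explicit: $G(t)=t^{\alpha-1}E_{\alpha,\alpha}(-\lambda t^\alpha)$, which is known to be completely monotone for $\alpha\in(0,1]$ and serves as a consistency check.

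\textbf{Step 5.} For the forward variance statement, a unit impulse in the noise at time $s$ shifts the conditional mean of $Y_{s+h}$ by $G(h)$. A completely monotone function is in particular nonnegative and nonincreasing, so the mean response decays monotonically to its stationary level, with no overshoot or oscillation. For $\varphi=\mathrm{id}$ this is already the forward variance impulse response. For $\varphi=\exp$ with Gaussian $Y$, the forward variance is $\EE_s[V_{s+h}]=\exp\bigl(\EE_s Y_{s+h}+\tfrac12\mathrm{Var}_s Y_{s+h}\bigr)$. Its perturbation equals $G(h)$ multiplied by a positive, deterministic factor, and since $\exp$ is monotone the sign and monotonicity of the relaxation are preserved for a shock of either sign.

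Finally, I will note that the hypothesis $m=0$ is essential. With inertia, $\Lap{G}(z)=(mz^2+z\Lap{K}(z)+\lambda)^{-1}$, and $mz^2$ is not Bernstein, so the chain in Steps 2--3 breaks and $G$ may oscillate. This motivates the rest of \cref{sec:mark_embd}.
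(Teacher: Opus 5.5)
\paragraph{Same route, one false step.} Your route is the paper's: $K$ completely monotone makes $\Lap{K}$ a Stieltjes function, and $(z\Lap{K}(z)+\lambda)^{-1}$ is again Stieltjes. Your Steps 1--3, through the duality between complete Bernstein and Stieltjes functions, supply the justification that the paper's one-line ``strictly preserves this Stieltjes character'' leaves implicit.

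The gap is in Step 4. You single it out as the main obstacle and then close it with a false claim. It is not true in general that $\Lap{G}(z)\to0$ as $z\to\infty$. By monotone convergence,
$z\Lap{K}(z)=\int\frac{z}{z+s}\,\nu(ds)\uparrow\nu([0,\infty))=K(0^+)$.
Hence $\Lap{G}(z)\to b:=(K(0^+)+\lambda)^{-1}$, and $b>0$ whenever $K$ is bounded at the origin. That includes every finite Prony sum, which is the kernel class the paper actually lifts and simulates. The paper's one-line proof is silent on this point too.

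In that case the Stieltjes representation keeps its constant term, $\Lap{G}(z)=b+a/z+\int\sigma(ds)/(z+s)$, and $G$ carries an atom $b\,\delta_0$. For $K(t)=\gamma e^{-\alpha t}$ one finds explicitly
$G(t)=\frac{1}{\gamma+\lambda}\,\delta(t)+\frac{\alpha\gamma}{(\gamma+\lambda)^2}\,e^{-\lambda\alpha t/(\gamma+\lambda)}$.
The atom is not cosmetic, since it puts the term $b\,\xi_t$ itself into $Y=G*\xi$. So the resolvent is a completely monotone \emph{function} only when $K(0^+)=\infty$. In general it is the Laplace transform of a nonnegative measure whose restriction to $(0,\infty)$ has a completely monotone density.

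The repair is to keep $b$ and state the conclusion at strictly positive lags. There the atom is invisible, and the response $a+\int e^{-st}\sigma(ds)$ is nonnegative and nonincreasing, which is all the forward-variance claim needs. The local integrability you set out to verify needs no separate argument. The Stieltjes measure satisfies $\int\sigma(ds)/(1+s)<\infty$, which is exactly integrability of $\int e^{-st}\sigma(ds)$ near $t=0$.

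\paragraph{The exponential link in Step 5.} A shock of size $\epsilon$ multiplies $\mathbb{E}_s[V_{s+h}]$ by $e^{\epsilon G(h)}$, so the absolute perturbation is $\mathbb{E}_s[V_{s+h}]\,(e^{\epsilon G(h)}-1)$. Its prefactor is the forward variance curve itself. That factor is positive and $\mathcal{F}_s$-measurable, but it varies with $h$, so monotonicity of $G$ does not transfer to the level of the response. An upward-sloping forward curve can make the product increase. What is exactly monotone is the response of $\log\mathbb{E}_s[V_{s+h}]$, which shifts by $\epsilon G(h)$. You should state the exponential-link conclusion in that form.
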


\begin{proof}
This result is established in \cite{GLS1990}. When $m=0$, the GLE reduces to a purely overdamped Volterra integral equation of the first kind. Because $K$ is completely monotone, Bernstein's theorem guarantees its Laplace transform is a Stieltjes function. The algebraic form of the overdamped response $\Lap{G}(z) = (z\Lap{K}(z) + \lambda)^{-1}$ strictly preserves this Stieltjes character, ensuring that $G(t)$ remains completely monotone and thus strictly non-negative and decreasing.
\end{proof}

\begin{proposition}[Inertia-induced oscillatory response] \label{prop:Mp}
There exists an open set of parameters $(m,\alpha,\gamma,\lambda)$ with $m>0$ for which the single-exponential friction kernel $K(t)=\gamma e^{-\alpha t}$, with $\gamma,\alpha>0$, is completely monotone yet yields a resolvent $G(t)$ that is not completely monotone and changes sign. Inertia can therefore break the complete monotonicity that \cref{lem:cm} guarantees in the overdamped case $m=0$.
\end{proposition}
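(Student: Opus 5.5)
The plan is to compute the resolvent in closed form and exhibit complex poles. For the inertial GLE with linear restoring force $U'(y)=\lambda y$, the Laplace transform of the response is
\[
\Lap{G}(z)=\bigl(mz^2+z\Lap{K}(z)+\lambda\bigr)^{-1}.
\]
With $K(t)=\gamma e^{-\alpha t}$ we have $\Lap{K}(z)=\gamma/(z+\alpha)$. This gives
\[
\Lap{G}(z)=\frac{z+\alpha}{P(z)},\qquad P(z)=mz^3+m\alpha z^2+(\gamma+\lambda)z+\lambda\alpha ,
\]
a rational function with a cubic denominator. The kernel is completely monotone, since it is the Laplace transform of the point mass $\gamma\delta_\alpha$, so only the sign behaviour of $G$ needs work. $G$ is then a finite sum of exponentials determined by the roots of $P$. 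By Bernstein's theorem, a completely monotone $G$ must have $\Lap{G}$ extend analytically to $\mathbb{C}\setminus(-\infty,0]$ and be a Stieltjes-type function. In particular, all its poles must lie on the negative real axis. It therefore suffices to find an open parameter set on which $P$ has a pair of non-real roots $-\kappa\pm i\nu$ with $\nu\neq0$ and $\kappa>0$.

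To get the sign change explicitly, write $G(t)=c_0e^{-r_0t}+2\,\mathrm{Re}\bigl(c\,e^{(-\kappa+i\nu)t}\bigr)$, using partial fractions with $c_0\in\mathbb{R}$ and $c\neq0$. The goal is to show this oscillating term produces zeros of $G$. The simplest case is where it dominates, namely $\kappa<r_0$. Then for large $t$ the sign of $G$ is that of $\cos(\nu t+\arg c)$, which changes sign infinitely often. A function that changes sign is not completely monotone, since complete monotonicity requires $G\ge0$.

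For a concrete open set, I would take the weak-coupling regime: $m=1$, $\lambda=1$, $\alpha$ large, $\gamma$ small. As $\alpha\to\infty$ with $\gamma/\alpha$ small, $P(z)/(m\alpha)\to z^2+(\gamma/\alpha)z+\lambda/m$ near $|z|=O(1)$. This is an underdamped oscillator with roots $\approx-\gamma/(2\alpha)\pm i\sqrt{\lambda/m}$. The third root is $\approx-\alpha$, so $\kappa\approx\gamma/(2\alpha)<r_0\approx\alpha$.

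A cleaner route to openness is to use the cubic discriminant $\Delta(m,\alpha,\gamma,\lambda)$. $P$ has a pair of complex conjugate roots if and only if $\Delta<0$. This is a strict inequality in a continuous function of the parameters, so it defines an open set, and the single point above (for example $m=\lambda=1$, $\alpha=10$, $\gamma=1$) shows that set is nonempty. All roots have negative real part by Routh--Hurwitz, since $m\alpha(\gamma+\lambda)>m\lambda\alpha$ holds automatically, so $G$ is a genuine decaying response. The inequality $\kappa<r_0$ and the condition $c\neq0$ are also open conditions. $c\neq0$ holds because the numerator $z+\alpha$ vanishes only at $z=-\alpha$, which is not a complex root.

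The main obstacle is making the sign-change claim rigorous without relying on the dominance heuristic. The first thing to try is the Bernstein/pole argument, which avoids asymptotics altogether. If $G$ were completely monotone, then $G(t)=\int e^{-st}\,\nu(ds)$ with $\nu\ge0$, so $\Lap{G}(z)=\int\nu(ds)/(z+s)$ would have singularities only on $(-\infty,0]$. This contradicts the existence of non-real poles. Sign change itself then follows because a real exponential sum containing a nonzero oscillatory mode with the slowest decay must vanish. I would verify the inequality $\kappa<r_0$ at the chosen point numerically or by perturbation, and extend it to a neighbourhood by continuity of the roots.
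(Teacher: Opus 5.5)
Your proposal is correct and follows essentially the same route as the paper: the rational resolvent $(z+\alpha)/P(z)$ with cubic $P$, a negative discriminant giving non-real poles that are incompatible with the Bernstein/Stieltjes representation, and openness by continuity of the roots (the paper uses the point $(m,\alpha,\gamma,\lambda)=(1,1,1,1)$, where $P(z)=z^3+z^2+2z+1$ has discriminant $-23$). Your extra condition $\kappa<r_0$, that the oscillatory pair decays more slowly than the real root, is what actually justifies the sign-change claim; the paper skips this step and infers sign change directly from the existence of complex poles, though the condition does hold at its point (real root $\approx-0.570$, complex pair $\approx-0.215\pm1.307i$).
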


\begin{proof}
See \cref{app:Mp}.
\end{proof}

\begin{myremark}
Whether the failure of complete monotonicity holds for every completely monotone $K$ (in particular, for the heavy-tailed kernels relevant to volatility modeling) remains an open question. The empirical estimates in \Cref{sec:empirical} find $O=0$, so this mathematical possibility is not active in the present data. It must nonetheless be retained as a structural caution: inertia enables dynamic behaviors that the standard overdamped limit strictly forbids, and a calibration returning $m>0$ alongside a pronounced oscillatory hump would not, on its own, indicate model misspecification.
\end{myremark}

\subsection{Lifted system for $m > 0$}

To avoid the computational burden imposed by the historical convolution in the non-Markovian friction term, we construct a Markovian embedding (or lift) for the inertial regime $m>0$. Assume the completely monotone kernel $K(t)$ is represented by a finite Prony sum of $N$ exponentials:
\begin{equation}
K(t) = \sum_{i=1}^{N} w_i e^{-\alpha_i t}, \qquad w_i > 0, \quad \alpha_i > 0.
\end{equation}

Using \eqref{eq:intro-gle}, we introduce the explicit velocity process $v_t = \dot{Y}_t$ and $N$ hidden auxiliary friction variables defined by
\begin{equation}
u_{i,t} = \int_0^t w_i e^{-\alpha_i (t-s)} v_s \, d s, \qquad i = 1, \dots, N.
\end{equation}
Differentiating $u_{i,t}$ yields a purely local evolution $d u_{i,t} = (w_i v_t - \alpha_i u_{i,t}) d t$. The non-Markovian dynamics are thus embedded exactly into an $(N+2)$-dimensional Markovian system of stochastic differential equations:
\begin{align}
d Y_t &= v_t dt, \qquad m dv_t = \left( - \sum_{i=1}^N u_{i,t} - U'(Y_t) \right) d t + \sigma dW_t, \\
d u_{i,t} &= \left(w_i v_t - \alpha_i u_{i,t}\right) d t, \qquad i = 1, \dots, N, \nonumber
\end{align}
where $W_t$ is a standard Wiener process and $\sigma$ represents the noise volatility. In transitioning to this rigorous differential form, the integrated noise increment $d\Xi_t = \sigma \, dW_t$ is introduced, such that $d\Xi_t \equiv \xi_t \, dt$.

The dimension of the lifted state space is exactly $N+2$. In the linearly restored regime where $U(y) = \frac{\lambda}{2}y^2$, the deterministic skeleton of the system is strictly linear. The eigenvalues of the corresponding drift block matrix are the roots of the generalized characteristic equation $m z^2 + z \Lap{K}(z) + \lambda = 0$. Because the weights $w_i$ and rates $\alpha_i$ are strictly positive, the friction operator dissipates energy, guaranteeing that all eigenvalues lie strictly in the left half of the complex plane. The unforced lifted system is therefore asymptotically stable.

By converting the stochastic Volterra structure into a standard, high-dimensional stochastic differential equation (SDE) system, standard numerical integrators apply directly. Under conventional regularity assumptions on the potential $U$ (such as local Lipschitz gradients and appropriate confinement), applying the Euler-Maruyama scheme to this $(N+2)$-dimensional system yields strong convergence of order $O(\Delta t^{1/2})$.

The total trajectory error is bound by the sum of this time-discretization error and the approximation error $O(\varepsilon_K)$. Here, $\varepsilon_K = \sup_{t \in [0, T]} \vert{} K(t) - K_N(t) \vert{}$ represents the maximum discrepancy over the simulation interval incurred by replacing the true completely monotone memory kernel with the discrete $N$-term Prony sum.

\section{Two-dimensional dynamics: memory in the leverage effect} \label{sec:2d}

This section sets up the two-dimensional problem and identifies which entries of the kernel matrix survive no-arbitrage.

Denote $Z_t=(X_t,Y_t)^{\top}$ with $X_t=\log S_t$, and replace \eqref{eq:intro-gle} by
\begin{equation}   \label{eq:gle2d}
\mathbf{M}\ddot Z_t \;=\; -\int_0^t\Kmat(t-s)\,\dot Z_s\,ds \;-\; \nabla U(Z_t) \;+\; \boldsymbol\xi_t,   \qquad \Kmat=
\begin{pmatrix} K_{XX} & K_{XY}\\[2pt] K_{YX} & K_{YY} \end{pmatrix}.
\end{equation}
The four entries have distinct readings. $K_{XX}$ makes the present drift of the
log-price depend on its own past increments, which is return autocorrelation.
$K_{XY}$ makes it depend on past changes in the variance driver, which is a
predictable risk premium. $K_{YX}$ makes the drift of the variance driver depend
on past returns, which is the leverage effect endowed with a term structure.
$K_{YY}$ is the kernel of the preceding sections.

\begin{proposition}[The admissible kernel matrix is lower triangular] \label{prop:triangular}
Under $\mathbb{Q}$, absence of arbitrage requires the discounted price to be a
local martingale, which forces $K_{XX}\equiv0$ and $K_{XY}\equiv0$. The entries
$K_{YX}$ and $K_{YY}$ are unconstrained by this requirement.
\end{proposition}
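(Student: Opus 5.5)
The plan is to read off the drift of the log-price from the first row of \eqref{eq:gle2d} and then impose the martingale requirement on that drift, written as a semimartingale. Following \cref{sec:assumptions}, the price equation must remain an It\^o diffusion of the form $dX_t=(r-\tfrac12V_t)\,dt+\sqrt{V_t}\,dW_t$ under $\mathbb{Q}$. The first step is therefore to take the overdamped (or lifted) form of the $X$-row of \eqref{eq:gle2d} and solve it for $dX_t$. Its drift then contains, in addition to any Markovian part, the two history functionals
\[
A_t=\int_0^t K_{XX}(t-s)\,dX_s,\qquad B_t=\int_0^t K_{XY}(t-s)\,dY_s .
\]
For $m>0$ we would use the lift of \cref{sec:mark_embd}. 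There $\dot X$ is absolutely continuous, and the diffusive part of $X$ must sit in the velocity equation. The memory terms then enter the drift of the price in the same additive way, so the two cases reduce to the same statement.

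The second step is to impose that $e^{-rt}S_t$ is a local martingale. By It\^o's formula this holds if and only if the finite-variation part of $dX_t$ equals $(r-\tfrac12V_t)\,dt$, $dt\otimes d\mathbb{Q}$-almost everywhere. Subtracting the compensator gives the requirement
\[
A_t+B_t=0 \quad\text{for all } t,\ \text{almost surely},
\]
which must hold for every admissible path.

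The third step splits this identity into two. We would argue that $A_t+B_t\equiv0$ for all paths forces each kernel to vanish. Take the conditional covariance of $A_t+B_t$ with the increments $dW_s$ and $dB_s$ for $s<t$; equivalently, apply the Malliavin derivatives $D^W_s$ and $D^B_s$, in the spirit of the remark in \cref{sec:contdas}. This yields $K_{XX}(t-s)\sqrt{V_s}$ plus a cross term involving $K_{XY}(t-s)$ and $\rho$, and a second relation for the $B$-direction. The two relations form a linear system with determinant proportional to $1-\rho^2$. For $|\rho|<1$ it forces $K_{XX}(t-s)=K_{XY}(t-s)=0$ for almost every $s<t$, and hence identically.

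The fourth step handles the lower row. The $Y$-row involves no traded quantity. Changing $K_{YX}$ and $K_{YY}$ alters only the dynamics of the driver $Y$, and hence only $V_t$. The discounted price remains a stochastic exponential of $\int\sqrt{V}\,dW$, which is a local martingale for any adapted $V$. So the requirement places no constraint on these entries.

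I expect the main obstacle to be the third step. The difficulty is justifying the separation of the two kernels when $\rho\to\pm1$, or when $Y$ itself depends on $X$ through $K_{YX}$, because the increments $dX$ and $dY$ are then not independent directions. To handle this, I would first orthogonalize $B=\rho W+\sqrt{1-\rho^2}W^\perp$ and differentiate along $W^\perp$ alone, which isolates $K_{XY}$. Substituting $K_{XY}=0$ back then isolates $K_{XX}$. The degenerate case $|\rho|=1$ would need to be excluded or treated as a boundary case.
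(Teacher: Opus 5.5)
Your steps 1, 2 and 4 are the paper's argument. The local-martingale property pins the $\mathbb{Q}$-drift of $X$ to $r-\tfrac12\varphi(Y_t)$, so the first-row memory must drop out of it. The second row is free because $Y$ is untraded and $e^{-rt}S_t=S_0\,\mathcal{E}\bigl(\int\sqrt{V}\,dW\bigr)_t$ is a local martingale for any adapted $V$. The paper phrases step 2 through Girsanov, writing $\sqrt{\varphi(Y_t)}\,\theta_t=b^{\mathbb P}_t-\bigl(r-\tfrac12\varphi(Y_t)\bigr)$, so that the $\mathbb{P}$-memory is absorbed into the market price of risk; this also gives an explicit $\mathbb{P}$-to-$\mathbb{Q}$ map. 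It then passes directly from a memoryless $\mathbb{Q}$-drift to $K_{XX}=K_{XY}=0$, never splitting $A_t+B_t=0$ into two statements and never invoking $|\rho|<1$. Your step 3 therefore targets something the paper glosses, and the concern is real: if $dY$ were a constant multiple $c\,dX$, any $K_{XX}=-cK_{XY}\neq0$ would give $A_t+B_t\equiv0$. But as sketched, step 3 does not go through.

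The Malliavin derivative of $A_t+B_t$ is not the pointwise expression you write. Every later increment $dX_u$, $dY_u$ with $u>s$ responds to a kick at time $s$ through $V_u=\varphi(Y_u)$. Take the simplest case of an It\^o driver, $dY_u=\beta_u\,du+g_0\bigl(\rho\,dW_u+\sqrt{1-\rho^2}\,dW^\perp_u\bigr)$, with the constrained dynamics $dX_u=(r-\tfrac12V_u)\,du+\sqrt{V_u}\,dW_u$. Then, for $s<t$,
\[
D^{W^\perp}_s(A_t+B_t)=\sqrt{1-\rho^2}\,g_0\,K_{XY}(t-s)+\int_s^t K_{XX}(t-u)\Bigl[-\tfrac12 D^{W^\perp}_sV_u\,du+D^{W^\perp}_s\sqrt{V_u}\,dW_u\Bigr]+\int_s^t K_{XY}(t-u)\,D^{W^\perp}_s\beta_u\,du ,
\]
and the $W$-direction has leading terms $\sqrt{V_s}\,K_{XX}(t-s)+\rho g_0\,K_{XY}(t-s)$ plus integrals of the same type. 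These integrals are present even when $\rho=0$ and $K_{YX}=0$. So differentiating along $W^\perp$ does not isolate $K_{XY}$, and there is no pointwise $2\times2$ system from which $K_{XX}(t-s)=K_{XY}(t-s)=0$ can be read off.

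The repair is to condition on $\mathcal{F}_s$, which removes the $dW_u$-integrals. The two relations then form a homogeneous linear Volterra system of the second kind in the lag $\ell=t-s$ for the pair $(K_{XX},K_{XY})$. Its coefficients are $\mathcal{F}_s$-measurable and its principal part is $\begin{pmatrix}\sqrt{V_s}&\rho g_0\\ 0&\sqrt{1-\rho^2}\,g_0\end{pmatrix}$. Your determinant condition survives as invertibility of this principal part. A Gronwall (or resolvent) argument then gives the zero solution when $|\rho|<1$, $g_0\neq0$ and the coefficients are locally integrable.

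In the rough regime the paper emphasizes, $G(0^+)=\infty$, which changes the picture:
$Y$ has no instantaneous loading $g_0$ and is not a semimartingale;
$\int K_{XY}\,dY$ needs its own definition;
the $K_{XY}$-relation becomes of the first (Abel) kind.
Uniqueness there needs a Titchmarsh-type convolution argument instead.

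Separately, the inertial case does not reduce to the same statement. With a positive $XX$ entry of $\mathbf{M}$, the path of $X$ is $C^1$, so $e^{-rt}S_t$ would be a continuous finite-variation local martingale and hence constant. The $X$-row must be read in overdamped form, as the paper tacitly does when it writes $\dot X_s\,ds=dX_s$.
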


\begin{proof}
See \cref{app:triangular}
\end{proof}

\cref{prop:triangular} has two consequences that go beyond
bookkeeping.

\begin{corollary}[No-arbitrage forbids equilibrium] \label{cor:onsager}
An equilibrium GLE requires the memory matrix to satisfy Onsager reciprocity,
$K_{XY}=\epsilon_X\epsilon_YK_{YX}$ with $\epsilon$ the time-reversal signatures.
\cref{prop:triangular} forces $K_{XY}=0$ while the leverage effect
makes $K_{YX}\neq0$. Reciprocity therefore fails, the fluctuation-dissipation
relation \eqref{eq:fdt} cannot hold, and the market cannot be in equilibrium.
\end{corollary}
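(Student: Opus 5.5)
The proof is a contrapositive built from three pieces: the Onsager symmetry that equilibrium imposes, the triangular structure that no-arbitrage imposes through \cref{prop:triangular}, and the empirical fact that $K_{YX}\neq0$.

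\textbf{Step 1: equilibrium implies reciprocity.} We assume the two-dimensional system \eqref{eq:gle2d} is in equilibrium. This means the noise covariance satisfies the matrix form of \eqref{eq:fdt},
\[
\EE[\boldsymbol\xi_t\boldsymbol\xi_0^{\top}]=\Theta\,\Kmat(t),
\]
and the fast bath variables are time-reversible. Time reversal acts on $X$ and $Y$ through signatures $\epsilon_X,\epsilon_Y\in\{\pm1\}$. Stationary covariances of a reversible process then satisfy $C_{ij}(t)=\epsilon_i\epsilon_j C_{ji}(t)$. Applying this to the noise covariance and dividing by $\Theta$ gives the reciprocity relation $K_{XY}=\epsilon_X\epsilon_Y K_{YX}$. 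Equivalently, one can use the Mori--Zwanzig projection, where the memory matrix is a bath correlation of projected forces and inherits the same symmetry. I take this step as the standard physical input cited with \cite{Zwanzig2001, Kubo1966}. The only thing to check is that the signatures are finite signs, so that reciprocity transports zeros: $K_{XY}\equiv0$ if and only if $K_{YX}\equiv0$.

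\textbf{Step 2: no-arbitrage removes one side.} By \cref{prop:triangular}, absence of arbitrage forces $K_{XY}\equiv0$. Combined with Step 1, equilibrium would then force $K_{YX}\equiv0$. In that case the variance driver feels past returns only through the zero-lag correlation $\rho$, as in \cref{rem:rho-limit}.

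\textbf{Step 3: the leverage effect gives the contradiction.} The leverage effect is a nonzero lagged cross-correlation between returns and subsequent variance. We must show this forces $K_{YX}\neq0$, which is the step I expect to be the main obstacle. A nonzero lagged correlation could in principle also come from $\rho$ propagated through $K_{YY}$.

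My first approach is to linearise the overdamped system with $K_{XX}=K_{XY}=0$. The variance response to a price shock $\dot X$ is then $\widehat{G}_{YY}(z)\,z\widehat{K}_{YX}(z)$. The strictly lagged part of the return--variance cross-covariance, beyond what $\rho$ transmits through $G_{YY}$, is therefore proportional to $K_{YX}$. An observed term structure of leverage not explained by $\rho$ alone thus requires $K_{YX}\neq0$. Failing a clean separation, I would state explicitly that "leverage effect" in the corollary means $K_{YX}\neq0$, as the section's reading of $K_{YX}$ suggests, and the argument then closes immediately.

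\textbf{Conclusion.} With $K_{XY}=0$ and $K_{YX}\neq0$, reciprocity fails. Hence \eqref{eq:fdt} cannot hold for the matrix system, and the market cannot be in equilibrium.
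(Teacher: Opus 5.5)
Your skeleton is the paper's: equilibrium gives reciprocity, \cref{prop:triangular} gives $K_{XY}\equiv0$, and leverage gives $K_{YX}\neq0$. Your remark that $\epsilon_X\epsilon_Y=\pm1$ already suffices, because reciprocity then transports zeros, is correct. The values $\epsilon_X=\epsilon_Y=+1$ computed in \cref{sec:zumbach} are needed only to orient the inequality. Your Step~3 worry is also fair: $\rho$ alone, propagated through $K_{YY}$ and the restoring force, already produces a lagged return--variance correlation. The paper does not resolve this either. It simply identifies the leverage term structure with $K_{YX}$ (\cref{rem:rho-limit}), so your fallback is exactly its premise.

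The genuine gap is that you treat the kernels as measure-free. \cref{prop:triangular} gives $K_{XY}\equiv0$ only under $\mathbb{Q}$. Its proof says explicitly that under $\mathbb{P}$ the first-row memory is present and is absorbed into the market price of risk $\theta_t$. The leverage effect, by contrast, is a $\mathbb{P}$ observation, while reciprocity relates two kernels of one and the same dynamics. Since $K_{YX}$ sits in the drift of $Y$, and a Girsanov change moves drifts, a price of variance risk can rescale, and in principle cancel, $K_{YX}$ between the measures. Your Step~2 therefore sets $K_{XY}^{\mathbb{Q}}=0$ against $K_{YX}^{\mathbb{P}}\neq0$, which contradicts nothing.

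To close the argument you need a bridge between the measures. The paper supplies one in \cref{sec:zumbach} as an explicit standing assumption: the variance risk premium is not large enough to overturn the dominant off-diagonal. The $\mathbb{P}$ asymmetry then carries over to $\mathbb{Q}$ with its orientation intact, so in particular $K_{YX}^{\mathbb{Q}}\neq0$, and the whole argument runs under $\mathbb{Q}$. The alternative is to stay under $\mathbb{P}$, where triangularity is unavailable. There you need the actual values $\epsilon_X=\epsilon_Y=+1$, so that reciprocity would force $K_{XY}=K_{YX}$ and hence a symmetric Zumbach correlation. The inequality \eqref{eq:zumbach} refutes that, and this is the other half of the paper's closing argument.

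A smaller point: Step~1 derives reciprocity from the matrix FDT together with reversibility of the bath, so the contrapositive refutes only their conjunction. Concluding that \eqref{eq:fdt} itself fails requires counting reversibility as part of equilibrium. The paper does this implicitly through its Hamiltonian derivation.
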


The sign of the surviving off-diagonal, and with it the direction of the
reciprocity failure, is fixed empirically by the Zumbach effect. Two points are deferred to \cref{sec:zumbach}, which makes them precise: the orientation of the inequality, and the fact that the triangularity is a $\mathbb{Q}$ statement while the Zumbach asymmetry is measured under $\mathbb{P}$, so identifying the two off-diagonals across the measures rests on a standing assumption on the variance risk premium stated there. The upgrade to \cref{sec:measure} is then substantive: the failure of fluctuation-dissipation stops being an assumption that markets are out of equilibrium and becomes a consequence of no-arbitrage together with a measured asymmetry.

\myparagraph{Correlation $\bm\rho$ is the memoryless limit of $\bm K_{YX}$} \label{rem:rho-limit}
Setting $K_{YX}=\rho\,\delta$ recovers instantaneous price-variance correlation
and nothing more. A kernel with support gives the leverage effect a term
structure, which $\rho$ cannot represent. \cite{BMP2001} report that the
empirical return-volatility correlation decays over a characteristic time of
approximately 10 days for equity indices (specifically 9.3 days) and considerably longer for single names (averaging 50 days, with US equities near 69 days and European and Japanese equities near 40 days),
which is a direct measurement of $K_{YX}$ and is incompatible with a delta.

\myparagraph{Path-dependent volatility as a restricted GLE.} \label{rem:pdv}
The four-factor path-dependent volatility model of \cite{Guyon2024} makes volatility a deterministic function of exponentially weighted averages of past returns and past squared returns, at two timescales each. The memory architecture of that model corresponds to the two-dimensional GLE \eqref{eq:gle2d} with a specific set of choices: the off-diagonal kernel $K_{YX}$ and the variance-memory kernel $K_{YY}$ are each truncated to two exponential modes, the potential is quadratic, the noise is absent from the variance driver (making the path deterministic given returns), and the link function $\varphi$ is of affine-plus-square-root form.

Under these restrictions the mapping can be made explicit. Write the path-dependent model as
\begin{equation} \label{eq:gl-model}
  \sigma_t = \beta_0 + \beta_1 R_{1,t} + \beta_2\sqrt{R_{2,t}},
  \qquad
  R_{k,t} = \sum_{i=0}^{1}\omega_{k,i}\!\int_{-\infty}^{t}\!\lambda_{k,i}\,
  e^{-\lambda_{k,i}(t-s)}\Big(\tfrac{dS_s}{S_s}\Big)^{k},
\end{equation}
with $k=1$ for returns and $k=2$ for squared returns. Take the second-row kernels of \eqref{eq:gle2d} as two-mode sums, $K_{YX}(\tau)=\sum_i g_i e^{-\eta_i\tau}$ and $K_{YY}(\tau)=\sum_i c_i e^{-\nu_i\tau}$, and lift each memory integral to auxiliary states,
\begin{equation} \label{eq:pdv-lift}
dh_i = -\eta_i h_i\,dt + g_i\,dX, \quad
dp_i = -\nu_i p_i\,dt + c_i\,dY, \quad
\int_0^t\! K_{YX}\,dX = \sum_i h_i, \quad
\int_0^t\! K_{YY}\,dY = \sum_i p_i.
\end{equation}
The leverage states $h_i$ are exponentially weighted averages of past returns, so they coincide exactly with the return factors $R_{1,i}$ of \eqref{eq:gl-model}, with $\eta_i=\lambda_{1,i}$ and $g_i$ set by $\omega_{1,i}$ and $\beta_1$. The variance-memory states $p_i$ are exponentially weighted averages of past increments of the driver $Y$, whereas $R_{2,i}$ averages past squared returns. The two coincide only when the variance is read deterministically off the price path (that is, when $dY\propto(dS/S)^2$ and the noise $\boldsymbol\xi$ in \eqref{eq:gle2d} is switched off) and when the link $\sqrt{\varphi(\cdot)}$ is matched to the affine-plus-root form of \eqref{eq:gl-model}.

Away from this restricted regime the two models part. The gap is precisely the three generalizations the GLE framework supplies: an independent variance innovation, a general link function $\varphi$, and kernels richer than two exponentials. The model of \cite{Guyon2024} is therefore recovered as the intersection of the GLE class with a set of strong parametric restrictions, not as a special case in the sense of nested parameter spaces. Because the noise is absent from the deterministic-path regime, the fluctuation-dissipation violation of \Cref{cor:onsager} cannot even be formulated there; restoring the noise makes it testable.

\subsection{The Zumbach effect fixes the sign}  \label{sec:zumbach}

The asymmetry of \cref{cor:onsager} is not a formal curiosity. It has been
measured, it has a name, and its direction is known, which settles the sign that
\cref{cor:onsager} leaves open.

\cite{ZumbachLynch2001} observed that financial time series are not statistically
symmetric under exchange of past and future, an effect since documented across
markets and centuries of data, \cite{ChicheporticheBouchaud2014}. In the form
relevant here it reads
\begin{equation}
  \EE^{\mathbb{P}}\!\left[r_t^2\,\sigma_{t+\tau}^2\right]
  \;>\;
  \EE^{\mathbb{P}}\!\left[r_{t+\tau}^2\,\sigma_t^2\right],
  \qquad \tau>0,
  \label{eq:zumbach}
\end{equation}
where $r_t$ is the return over a short interval at $t$ and $\sigma_t$ the
volatility, and the expectations are under the physical measure $\mathbb{P}$
because \eqref{eq:zumbach} is estimated from realized time series. Past squared
returns forecast future volatility more strongly than past volatility forecasts
future squared returns. Equivalently, and in the words
of the microstructural literature, past returns affect future volatility but not
the other way around, \cite{DandapaniJusselinRosenbaum2021}.

In the language of \eqref{eq:gle2d} the two sides of \eqref{eq:zumbach} are the
two off-diagonal couplings. The left side is the action of past returns on the
present variance, carried by $K_{YX}$. The right side is the action of past
variance on present returns, carried by $K_{XY}$. The inequality
\eqref{eq:zumbach} therefore states precisely that $K_{YX}$ dominates $K_{XY}$,
and \cref{prop:triangular} sends the weaker of the two to zero exactly. The
Zumbach inequality and the no-arbitrage triangularity are the same statement
about the same matrix, one measured and one derived.

Which measure, and why it matters here? Two measures meet in this section and the distinction is not cosmetic. The dynamics \eqref{eq:gle2d},
\cref{prop:triangular} and \cref{cor:onsager} are stated under the pricing
measure $\mathbb{Q}$, since the martingale condition that produces the
triangularity is the $\mathbb{Q}$ local-martingale property of the discounted
price. The Zumbach inequality \eqref{eq:zumbach} is a property of realized time
series and therefore holds under the physical measure $\mathbb{P}$. Equating its
two sides with $K_{YX}$ and $K_{XY}$ requires those kernels to be the same under
both measures, which is not automatic.

It is, however, defensible, and precisely for the off-diagonal entries. The
change from $\mathbb{P}$ to $\mathbb{Q}$ by Girsanov shifts drifts and leaves
quadratic variation invariant. The off-diagonal kernels enter the drifts of $Z$,
so a general market price of risk can rescale them and can, in principle, reverse
the ordering in \eqref{eq:zumbach}. We therefore take as a standing assumption
that the variance risk premium is not so large as to reverse the sign of the
dominant off-diagonal, under which the $\mathbb{P}$ asymmetry transfers to
$\mathbb{Q}$ with its orientation intact. This is the market price of memory risk
named in the discussion following \cref{prop:triangular}, and it is the honest
location of the $\mathbb{P}$-to-$\mathbb{Q}$ gap.

The two-dimensional dynamics and \cref{prop:triangular,cor:onsager} are most
naturally read under $\mathbb{P}$, since the leverage and Zumbach effects are
physical-measure phenomena, while pricing occurs under $\mathbb{Q}$. The
construction that connects the two, and the reason the connection is not a mere
change of drift, is the subject of \cite{ItkinGLE2b}.

This does three things for the paper. First, it removes the ambiguity flagged
after \cref{cor:onsager}. The direction of the inequality fixes which
off-diagonal survives, so the time-reversal signatures need not be argued from
first principles - they are read off \eqref{eq:zumbach}. The market's failure of
fluctuation-dissipation is oriented, and it is oriented the way the data says.

Second, it identifies our $K_{YX}$ with a quantity that has an independent
measurement history. The Zumbach effect is conventionally reproduced by quadratic
Hawkes processes, \cite{BlancDoncBouchaud2017}, whose scaling limit is a
super-Heston rough volatility model, \cite{DandapaniJusselinRosenbaum2021}. That
is the same microstructural lineage that produces rough Heston in the linear
case, now carrying the off-diagonal memory explicitly, and it gives \cref{T7} a target
whose sign and rough order of magnitude are already known.

Third, it sharpens the distinction from a purely diagonal model. A model with
$K_{XY}=K_{YX}=0$, or with the memoryless $K_{YX}=\rho\delta$ of
\cref{rem:rho-limit}, cannot produce \eqref{eq:zumbach}, because a symmetric or
instantaneous coupling makes the two sides equal. The Zumbach effect is thus
direct evidence for the off-diagonal kernel with support, and against the two
specifications the paper is arguing past.

\begin{myremark}[Weak and strong forms] \label{rem:zumbach-strong}
The inequality \eqref{eq:zumbach} is the weak Zumbach effect, and it constrains
$K_{YX}$. A strong form also holds empirically: the conditional law of future
volatility depends on the past volatility path and is not a function of the current
instantaneous variance alone, \cite{Guyon2024}. The extra dependence is on the
history of the volatility, and it enters through $K_{YY}$, which must therefore
carry memory rather than act instantaneously. This is the non-Markovianity the
kernel of the preceding sections supplies.

The strong form does not force $K_{YY}$ to be non-completely-monotone. Complete
monotonicity and genuine path-dependence are separate properties. Complete
monotonicity fixes the sign and shape of the kernel, a positive mixture of decaying
exponentials, and by \cref{lem:cm} it makes the forward-variance impulse response
monotone. The strong form fixes only that the kernel carries memory. A completely
monotone kernel with support, the power law of the rough regime among them, is
already a genuine functional of the path, so it satisfies the strong form while
keeping the monotone response of \cref{lem:cm}. Non-complete-monotonicity would be
required only if volatility responded to a shock by overshooting rather than
relaxing, and neither the strong Zumbach effect nor the measured persistence of
volatility asks for that. The strong form therefore sits alongside \cref{lem:cm} and
the inertia discussion, constraining the support of $K_{YY}$ and not its complete
monotonicity.
\end{myremark}

The signatures follow from what the variables are. Under $t\to-t$ the log-price
$X$ is a configurational coordinate and is even, so its increment $r$ is odd and
$r^2$ even. The log-variance driver $Y$ is likewise configurational and even, and
$\sigma^2$ is even. Thus $\epsilon_X=\epsilon_Y=+1$. The signatures are assigned to the configurational coordinates $X$ and $Y$ themselves; their velocities $\dot X$ and $\dot Y$ are the odd, Casimir-type variables and, in the inertial case $m>0$, carry the opposite signature without affecting the reciprocity relation used here. Onsager-Casimir reciprocity
for the memory matrix reads $K_{ij}(\tau)=\epsilon_i\epsilon_j K_{ji}(\tau)$, so
with both signatures positive an equilibrium market would carry $K_{XY}=K_{YX}$.
Two equal off-diagonals give a symmetric Zumbach correlation, $\EE^{\mathbb
P}[r_t^2\sigma_{t+\tau}^2]=\EE^{\mathbb P}[r_{t+\tau}^2\sigma_t^2]$, since $r^2$
and $\sigma^2$ are both even. The measured inequality \eqref{eq:zumbach} is
precisely the failure of that symmetry. It forces $K_{XY}\neq K_{YX}$, and its
orientation, with $K_{YX}$ the larger, fixes which entry survives the
triangularity of \cref{prop:triangular}. The reciprocity computation and the
empirical anchor return the same answer, and \cref{cor:onsager} is closed: the
surviving off-diagonal is $K_{YX}$, and the reciprocity failure is oriented as
\eqref{eq:zumbach} orients it.

The numerical results are obtained from the Markovian lift of the leverage-kernel model, specified as
\begin{align*}
dX &= \sigma\,dW^X,\qquad \sigma=e^{Y/2},\qquad dh=-\eta h\,dt+g\,dX, \\
dY & = [-\kappa(Y-\mu_Y)-h]\,dt+\nu\,dW^Y, \qquad  \langle dW^X,dW^Y\rangle=\rho\,dt,
\end{align*}
with parameters $\kappa=3$, $\nu=1$, $\eta=25$, and $e^{\mu_Y}=0.04$. The leverage kernel takes the exponential form $K_{YX}(\tau)=g e^{-\eta\tau}$, and the table below reports the percentage changes in the stationary mean and standard deviation of $V$ as $\rho$ moves from $0$ to $-0.9$ for increasing values of the kernel amplitude $g$.

The leverage kernel leaves the VIX nearly $\rho$-blind, breaking the autonomy hypothesis of Proposition~\ref{prop:rho} by introducing a nonzero $K_{YX}$. This coupling makes the drift of $Y$ depend on past returns, and hence on the price and its driving Brownian motion, so that the variance law no longer remains independent of $\rho$. The dependence is linear in the leverage kernel. The coupling injects $\int K_{YX}\,dX$ into $Y$, a state whose correlation with the variance noise enters solely through $\rho$. Consequently, the shift in the variance law appears at order $\rho K_{YX}$, while its sensitivity to $\rho$ is of order $K_{YX}$. These scalings, together with their magnitudes, are confirmed on the Markovian lift in Table~\ref{tab:rho-leak}. When $K_{YX}=0$, the driver is autonomous and the forward variance is exactly independent of $\rho$, as required by the proposition. Once $K_{YX}$ is activated, however, varying $\rho$ over its full range shifts the forward variance level by less than one percent even under a strong leverage kernel, and alters the variance dispersion by a few percent. Both effects scale linearly with the kernel amplitude. Thus, the VIX futures level is effectively $\rho$-blind, and the VIX smile carries only a small $\rho$-sensitivity that scales with the leverage memory. The instrument separation established in Section~\ref{sec:joint} therefore survives up to a correction of this order, and $\rho$ remains an SPX-side instrument.

\begin{table}[!htb]
\centering
\begin{tabular}{crr}
\toprule
$g$ & $\Delta\,\mathbb{E}[V]$ & $\Delta\,\mathrm{sd}[V]$ \\
\hline
$0$ & $0.00\%$ & $0.00\%$ \\
$1$ & $0.12\%$ & $0.96\%$ \\
$3$ & $0.35\%$ & $2.88\%$ \\
$6$ & $0.71\%$ & $5.77\%$ \\
\bottomrule
\end{tabular}
\caption{Sensitivity of the stationary variance law to $\rho$ as the leverage-kernel amplitude $g$ increases. Columns give percentage changes in the mean and standard deviation of $V$ when $\rho$ varies from $0$ to $-0.9$. The effect vanishes at $g=0$, as required by Proposition~\ref{prop:rho}.}
\label{tab:rho-leak}
\end{table}

\section{Various tests and falsification criteria} \label{sec:design}

To validate the theoretical architecture developed in the preceding sections, this section presents a comprehensive empirical assessment of the generalized Langevin framework. The evaluation systematically examines each core component, including Bayesian estimation of memory kernels, fluctuation-dissipation consistency, forward-variance inertial effects, macro-scale barrier-crossing dynamics, and multi-dimensional leverage structures. For each component, we define explicit quantitative benchmarks designed to detect deviations from model predictions and to delineate the boundaries of model applicability.

We establish the following definitive kill criteria based on the preceding test sequence:
\begin{itemize}
\item \emph{Absence of Memory (\cref{T1}):} If the estimation yields an exponential kernel, the volatility process possesses no long-range memory. The framework strictly reduces to a lifted Heston model with redundant parameters.
\item \emph{Rough Scaling Constraint (\cref{T2}):} If the empirical data cannot reject the constraint $\beta_0=\beta_\infty$, the decoupled multi-band freedom claimed by the model is unutilized, vindicating the single-parameter rough volatility paradigm.
\item \emph{Equilibrium Dynamics (\cref{T3}):} If the fluctuation-dissipation ratio $\Xfdt(\omega)$ is statistically indistinguishable from unity, the driver noise is purely thermal, rendering the athermal extensions in \cref{sec:measure} physically vacuous.
\item \emph{Absence of Inertia (\cref{T4}):} If the forward-variance impulse response is monotonically non-increasing within the noise limits, the variance process lacks inertia, forcing the model to set $m=0$.
\item \emph{Kernel Inconsistency (Barrier Crossing):} If the macroscopic regime-transition rate disagrees with the high-frequency kernel estimation of \cref{T1} outside their joint credible intervals, the memory kernel is structurally unidentifiable, requiring the withdrawal of \cref{sec:potential}.
\item \emph{Instantaneous Leverage (\cref{T6}):} If the empirical leverage term structure cannot reject a memoryless delta function ($K_{YX} = \rho\,\delta$), the leverage dynamics reduce to a standard one-dimensional scalar correlation, invalidating the two-dimensional memory formulation of \cref{sec:2d}.
\end{itemize}

Each of these empirical outcomes is explicitly reported in our empirical results below. The ultimate risk-neutral kill criterion --- that the framework's joint option surface fit must statistically outperform a memoryless one-factor model at an equivalent parameter count, is formally stated and tested in the companion paper \cite{ItkinGLE2b}.

\subsection{Bayesian Estimation of the Kernel and the Limits of the Rough Test} \label{test:t1}

\Cref{sec:measure} established that the memory kernel is theoretically recoverable from a response function when the data-generating process is known. The present section asks a different question: given a finite sample of market data and a kernel with two free exponents, which features of the kernel does the likelihood actually identify, and which remain prior-driven? The distinction between recoverability in principle and identification in practice is an important empirical finding of this paper.

We probe volatility memory using estimation tests: \test{}\label{T1} and \test{}\label{T2} on high-frequency limit order book data from the FI-2010 benchmark dataset, \cite{ntakaris2018benchmark}. We extract continuous-trading records from No-Auction cross-validation fold 1 (\texttt{Train\_Dst\_NoAuction\_ZScore\_CF\_1.txt}) to build the empirical time grid. This yields Level-1 order flow imbalance shocks ($\xi_t$) and realized variance responses ($Y_t$). The sample contains $N=15{,}000$ observations using a 200-event rolling window for variance estimation.

The regression estimates the response rather than the friction, which dictates the parameterization. Letting $G$ be the fitted kernel, the model is $Y=G*\xi$ with $\Lap{G}(z)=(z\Lap{K}(z)+\lambda)^{-1}$ for the friction kernel $K$ of \eqref{eq:twoexp}. A friction $K(t)\sim t^{-\beta}$ yields $\Lap{K}(z)\sim z^{\beta-1}$. This produces $\Lap{G}(z)\sim z^{-\beta}$ and $G(t)\sim t^{\beta-1}$ wherever $z\Lap{K}(z)$ dominates $\lambda$. We sample the friction exponents directly and specify the response model
\begin{equation} \label{eq:t1model}
G(t) = c_0\,t^{\beta_0-1}e^{-\mu t} + c_\infty\,t^{\beta_\infty-1}.
\end{equation}
A No-U-Turn (NUTS) Hamiltonian Monte Carlo sampler estimates the parameters with the convolution truncated at $L=100$ lags. The estimation window dictates the second exponent. Once $\lambda$ dominates, the response tail is the $t^{-(1+\beta_\infty)}$ of \cref{sec:freedom} and decays faster than $t^{-1}$. The fitted exponents fall well below unity. The window thus sits where $\lambda$ remains negligible, validating \eqref{eq:t1model} at both boundaries. The asymptotic tail lies beyond this window and is unestimated.

The exponent priors must satisfy three requirements underlying the test. They require support on $(0,1)$ to exclude inadmissible kernel values. They are anchored via the model relation $H=\beta-\tfrac12$, placing the short end in the rough range. They also share identical marginals to keep the \cref{T2} null hypothesis $\beta_0=\beta_\infty$ a priori available. We specify $\beta_0,\beta_\infty\sim\mathrm{Beta}(5.83,2.50)$ with mean $0.70$ and standard deviation $0.15$. This places $89\%$ of the implied $H_0$ inside $(0,\tfrac12)$ and assigns $19\%$ of prior mass to $|\beta_0-\beta_\infty|<0.05$. We omit ordering constraints because the test aims to measure the separation direction. The amplitudes $c_0$ and $c_\infty$ and the cutoff $\mu$ take half-normal priors with scales $5$, $5$, and $0.5$.

The sampler converged reliably. Tuning left two divergences, $\hat{R}$ equals $1.00$ for all parameters, and effective sample sizes exceed $3{,}600$. These diagnostics confirm thorough posterior exploration. They do not indicate whether the data or the prior drives this posterior. We must answer that question to draw valid conclusions. \cref{tab:t1_results} therefore reports the prior standard deviation and the shrinkage ratio (posterior to prior standard deviation) for each parameter.

\begin{table}[!htb]
\centering
\renewcommand{\arraystretch}{1.2}
\begin{tabular}{lccccc}
\hline
\textbf{Parameter} & \textbf{Mean} & \textbf{SD} & \textbf{89\% ETI} & \begin{tabular}{@{}c@{}}\textbf{Prior} \\ \textbf{SD}\end{tabular} & \begin{tabular}{@{}c@{}}\textbf{Shrink-} \\ \textbf{age}\end{tabular} \\
\hline
\multicolumn{6}{c}{\textit{Friction exponents}} \\
\hline
$\beta_0$ (short-time) & 0.709 & 0.148 & $[0.450,\,0.920]$ & 0.150 & 0.99 \\
$\beta_\infty$ (long-time) & 0.799 & 0.111 & $[0.590,\,0.940]$ & 0.150 & 0.74 \\
\hline
\multicolumn{6}{c}{\textit{Response amplitudes and crossover}} \\
\hline
$\mu$ (cutoff) & 0.462 & 0.361 & $[0.016,\,1.100]$ & 0.301 & 1.20 \\
$c_0$ (short amplitude) & 0.0082 & 0.0080 & $[0.0005,\,0.023]$ & 3.014 & 0.003 \\
$c_\infty$ (long amplitude) & 0.0143 & 0.0065 & $[0.0037,\,0.025]$ & 3.014 & 0.002 \\
\hline
\multicolumn{6}{c}{\textit{Implied local regularity, short end only}} \\
\hline
$H_0=\beta_0-\tfrac12$ & 0.209 & 0.148 & $[-0.050,\,0.420]$ & 0.150 & 0.99 \\
\hline
\end{tabular}
\caption{Posterior summaries for the \cref{T1} estimation on FI-2010 data under admissible, same-marginal exponent priors. Shrinkage is the ratio of posterior to prior standard deviation. A value near unity indicates a parameter uninformed by the likelihood. Only the short end carries a Hurst index. The long-time exponent $\beta_\infty$ governs forward-variance decay rather than local regularity.}
\label{tab:t1_results}
\end{table}

The shrinkage metric groups the parameters into three categories. The amplitudes are sharply identified. Their standard deviations drop more than two orders of magnitude below prior values, and their means shift from roughly $4$ to near $0.01$. The tail exponent is also identified. Its distribution tightens by a quarter while its mean moves from $0.700$ to $0.799$. The short-time exponent and the crossover remain unidentified. The posterior standard deviation of $\beta_0$ is $0.99$ times its prior value, and that of $\mu$ exceeds its own prior. The likelihood adds no information to either parameter. The implied local regularity inherits this non-identification. Its posterior mean $H_0=0.209$ matches the prior mean $0.200$, and its interval still admits inadmissible negative values.

Varying the prior confirms this reading and provides the clearest evidence of what the data determine. \cref{tab:prior_sensitivity} reports the friction exponents under three prior specifications with different centres, widths, and supports. The tail exponent stabilizes near $0.80$ under both priors that permit likelihood influence. It agrees to three decimals across priors that otherwise contradict each other. The short-time exponent simply reproduces its assigned prior mean. The tail exponent is thus a measurement. The short-time exponent is an assumption. This robustness is across priors on a single dataset. The measured level is not invariant across datasets, as the estimates below show, so $\beta_\infty$ is determined by the data in each case while its value reflects the sampling timescale and the variance window.

\begin{table}[!htb]
\centering
\renewcommand{\arraystretch}{1.2}
\begin{tabular}{lcccc}
\hline
\textbf{Prior on the exponents} & $\beta_0$ & $\beta_\infty$ & $\Delta\beta$ 95\% HDI & \textbf{\cref{T2} verdict} \\
\hline
Informative, centres $0.70$/$0.25$, SD $0.08$/$0.10$ & 0.709 & 0.274 & $[\phantom{-}0.158,\,0.684]$ & rejects \\
Uniform on $[0.01,0.99]$ & 0.515 & 0.798 & $[-0.945,\,0.329]$ & fails to reject \\
Admissible, $\mathrm{Beta}(5.83,2.50)$ on both & 0.709 & 0.799 & $[-0.473,\,0.266]$ & fails to reject \\
\hline
\end{tabular}
\caption{Posterior means for the friction exponents and the \cref{T2} interval across three prior specifications. The tail exponent is robust whenever the prior permits. The short-time exponent tracks the prior mean and remains undetermined by the data. Only the first specification rejects the rough constraint. Its prior assigns just $0.2\%$ of mass to the tested null, and its centre for $\beta_\infty$ lies $5.5$ prior standard deviations from the consensus of the other runs.}
\label{tab:prior_sensitivity}
\end{table}

\Cref{T2} provides the falsification criterion for the rough constraint $\beta_0=\beta_\infty$. The admissible prior yields a posterior difference $\Delta\beta=\beta_0-\beta_\infty=-0.090$ with a 95\% HDI of $[-0.473,\,0.266]$. The null hypothesis avoids rejection because zero falls within this interval. \cref{fig:t2_posterior} compares this posterior against the other two options to illustrate the central argument. Only the informative specification produces a posterior excluding zero. That prior assigns just $0.2\%$ of its mass to the tested null neighborhood. The uniform and admissible priors assign $9.9\%$ and $19\%$ respectively. Both leave the null within the credible region. A parameter separation that appears, vanishes, or reverses sign depending on the prior reflects assumptions rather than empirical data.

This prior dependence ensures the non-rejection is genuine. The null hypothesis was truly open to falsification. However, non-rejection does not prove equality. The interval remains wide because an underlying parameter is unconstrained. A test relying on prior information for a parameter cannot discriminate between hypotheses about it. Following \cref{test:fdt}, a failure to reject is a definitive result only with established statistical power against the alternative. That power is absent here.

\begin{figure}[!htp]
\centering
\includegraphics[width=0.7\textwidth]{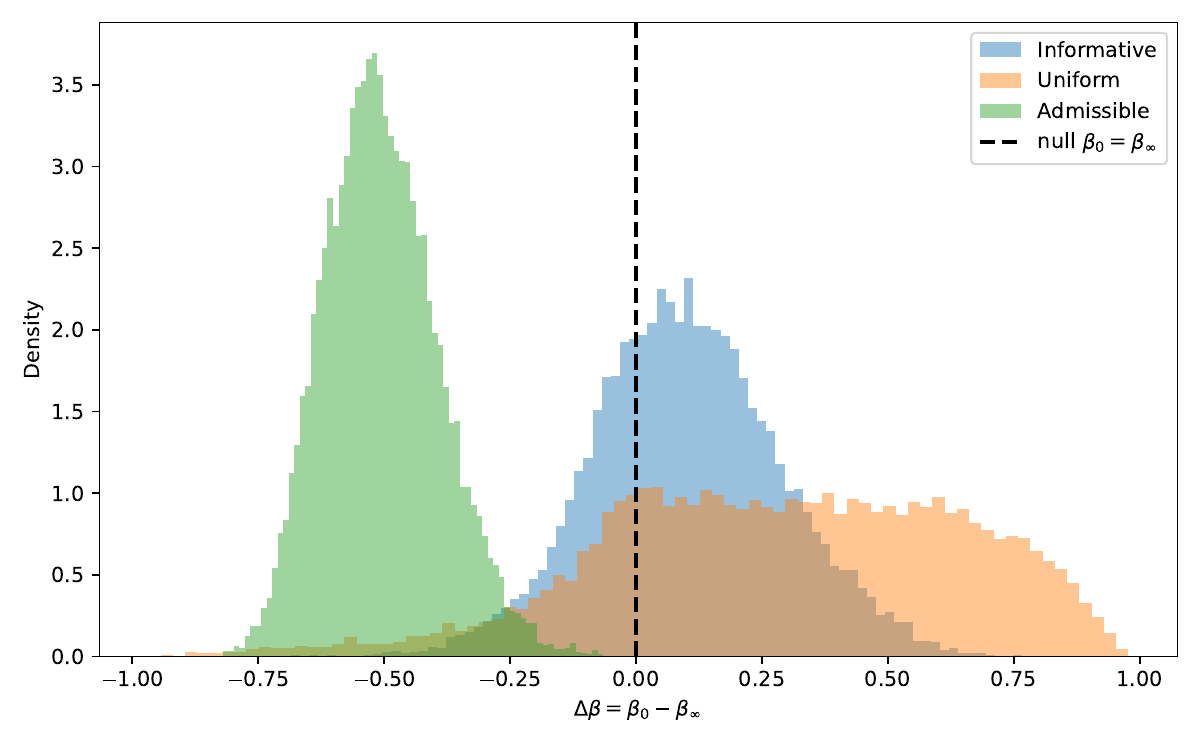}
\caption{Posterior distribution of the friction exponent difference $\Delta\beta=\beta_0-\beta_\infty$ (\cref{T2}) on FI-2010 data under admissible, same-marginal exponent priors. The $95\%$ Highest Density Interval is shaded and contains the zero reference line (dashed red), preventing rejection of the rough constraint. The interval width constitutes the substantive content, reflecting the non-identification of $\beta_0$ from \cref{tab:t1_results}.}
\label{fig:t2_posterior}
\end{figure}

The non-identification of $\beta_0$ is not a failure of the estimation procedure, it is the empirical obstacle anticipated in \Cref{sec:measure}. The crossover scale $1/\mu$ that separates the two power-law regimes is not resolved within the available lag window, so the short-time exponent and the crossover rate trade off against one another with negligible change in fit. This trade-off is a feature of the finite sample, not of the estimator: the likelihood is flat along a ridge in $(\beta_0,\mu)$, and the prior supplies the curvature that the data cannot. The result is that the tail exponent $\beta_\infty$ is measured, while the short-time exponent $\beta_0$ remains an assumption—exactly the pattern that Section~4.1 warned could arise when the two-exponent structure is estimated on a single band.

The verdict is therefore mixed. The data successfully measure the order-flow response amplitude and the friction kernel tail exponent. This latter measurement constitutes an important finding. The stable value $\beta_\infty\approx0.80$ indicates slow algebraic decay. This supplies the long-memory component required by the \Cref{sec:freedom} framework.

The data fail to measure the short-time exponent. They consequently reveal nothing about the local regularity of the driver. The estimation neither supports nor excludes cases (A) or (B) of \Cref{sec:freedom}. The two-exponent structure and the rough constraint remain neither confirmed nor falsified. A shape degeneracy in \eqref{eq:t1model} causes this obstruction. The short exponent trades against the crossover $\mu$ over the hundred-lag window with almost no change in fit. This trade-off leaves both parameters unidentified while clearly resolving the tail and amplitudes.

To verify the FI-2010 results on a substantially richer dataset, the estimation
was repeated on Monthly TAQ NBBO data for AAPL (January 15, 2025, 09:30--10:00),
obtained from WRDS. The NBBO file provides every consolidated quote update at
millisecond resolution with posted bid and ask volumes directly observable,
allowing the same order flow construction used for FI-2010. The sample contains
72{,}058 quotes at an average rate of 40 per second, resampled to a 20\,ms grid
(89{,}997 observations). Under the lag-dependent exponent model,
$\beta_\infty$ is sharply identified at $0.871$ (shrinkage $0.31$, 89\% ETI
$[0.80, 0.94]$), a stable measurement of the tail exponent that governs the
forward-variance decay. The
short-time exponent $\beta_0$ remains completely prior-driven (shrinkage
$1.03$), and its implied Hurst index $H_0 = \beta_0 - 1/2$ inherits this
non-identification (posterior mean $0.129$, 89\% ETI $[-0.21, 0.41]$). The
rough constraint is not rejected (95\% HDI for $\beta_0-\beta_\infty$:
$[-0.577, 0.068]$). This confirms the FI-2010 finding on the highest-quality
data available: the tail exponent of the memory kernel is identified, while
the short-time exponent and the local regularity of the driver are not, at any
resolution achievable with current limit order book data. What is stable is the
identification and not the level, since the tail exponent reads $0.87$ here
against $0.80$ on FI-2010 and tracks the sampling scheme even where the data
determine it.

The estimation was also attempted on several other datasets to assess whether
higher resolution or a different asset class could break the degeneracy between
$\beta_0$ and $\mu$. These included TrueFX EUR/USD tick data at ten millisecond
resolution, Kraken BTC/USDT trade data at one hundred millisecond resolution,
and one-minute Bitcoin intraday bars from Yahoo Finance resampled to grids
ranging from one to sixty seconds. In every case, both exponents remained
unidentified (shrinkage above $0.95$ for all parameters). Inspection of the
order flow proxies used in these auxiliary datasets suggests the cause: outside
the limit order book, where $\xi_t$ is constructed directly from changes in
posted bid and ask volumes, the available proxies (signed tick direction, trade
side imbalance, price direction times volume) are too weakly correlated with the
latent driving noise for the response kernel to be recoverable. The FI-2010
event-time results therefore remain the cleanest physical-measure estimates
available. They establish that $\beta_\infty$ is measured and $\beta_0$ is not,
and they indicate that resolving the short-time exponent requires either richer
order book data or the risk-neutral approach of the companion paper.

A decisive test requires a longer lag range rather than more observations at the current scale. Breaking the trade-off requires separating the two bands by multiple decades. A power calculation must also fix this required span in advance. We record this outcome instead of the falsification initially intended. The risk-neutral route of \cite{ItkinGLE2b} avoids this degeneracy entirely, because distinct instrument maturities probe the two bands independently.

\subsection{The fluctuation-dissipation test} \label{test:fdt}

\Test{}\label{T3} evaluates market equilibrium by comparing the measured driver spectrum against the theoretical dissipation requirement. In Lamperti coordinates, the working fluctuation-dissipation ratio is
\begin{equation} \label{eq:Xfdt-working}
\Xfdt(\omega)\;\propto\;\frac{\mathrm{Re}\,\Lap{K}(i\omega)}{S_\xi(\omega)}.
\end{equation}
Under the equilibrium null $H_0$, $S_\xi = \Theta\,\mathrm{Re}\,\Lap{K}$ renders $\Xfdt(\omega)$ constant across frequencies. Under the athermal alternative (\cref{prop:athermal}), flat noise yields $\Xfdt(\omega)\propto\omega^{\alpha-1}$. Because $\Theta$ is identified only up to scale, every statistic depends solely on the shape of $\log\Xfdt$ against $\log\omega$.

The ratio is estimated from binned periodogram data $\widehat{S}_Y(\omega)$ via
\begin{equation} \label{eq:Xfdt-est}
\widehat{\Xfdt}(\omega) \propto \frac{\mathrm{Re}\,\Lap{K}(i\omega)\,\bigl|\tilde R(\omega)\bigr|^{2}} {\widehat{S}_Y(\omega)}, \quad \tilde R(\omega) = \bigl(\lambda + i\omega\Lap{K}(i\omega)\bigr)^{-1}.
\end{equation}
Log-averaging within frequency bins tames periodogram tail variance. We construct two scale-invariant statistics evaluated against surrogate null distributions: the OLS slope $\hat{s}$ (directional, targeting $\alpha-1$) and the log-spread $\hat{\sigma}$ (omnibus flatness).

Power is benchmarked on \cref{cor:discriminator} ($H=0.1$ for both corners: $\alpha=0.2$ for equilibrium, $\alpha=0.6$ for athermal). As shown in \cref{tab:fdt_power}, the directional slope dominates the omnibus spread, reaching complete power at $N=4{,}096$ with an unbiased slope estimate ($\hat{s} = -0.40$).

\begin{table}[!htp]
\centering
\renewcommand{\arraystretch}{1.2}
\begin{tabular}{rccc}
\hline
$N$ & \begin{tabular}{@{}c@{}}\textbf{SD of} $\hat{s}$ \\ \textbf{under} $H_0$\end{tabular} & \begin{tabular}{@{}c@{}}\textbf{Power,} \\ \textbf{slope} $\hat{s}$\end{tabular} & \begin{tabular}{@{}c@{}}\textbf{Power,} \\ \textbf{spread} $\hat{\sigma}$\end{tabular} \\
\hline
1,024 & 0.233 & 0.48 & 0.10 \\
4,096 & 0.089 & 1.00 & 0.72 \\
16,384 & 0.046 & 1.00 & 1.00 \\
65,536 & 0.025 & 1.00 & 1.00 \\
\hline
\end{tabular}
\caption{Power of \cref{T3} against the athermal alternative of \cref{cor:discriminator} (5\% level, 600 replications). The directional slope achieves complete separation ($\hat{s} = -0.400$) by $N=4{,}096$.}
\label{tab:fdt_power}
\end{table}

Sampling error is not the limiting factor but kernel precision is. An exponent error $\delta$ in \eqref{eq:Xfdt-est} induces $\widehat{\Xfdt} \propto \omega^{-\delta}$, biasing the measured slope by exactly $-\delta$. Simulation confirms this 1:1 linear bias across admissible errors: under the null, $\delta \in \{-0.19, -0.10, 0.10,
0.20, 0.39\}$ yields measured slopes of $+0.19, +0.10, -0.09, -0.20,$ and
$-0.39$. An overstated kernel exponent ($\delta = 0.39$) falsely reproduces the
athermal signature ($\hat{s} = -0.40$) from pure equilibrium data. Consequently,
\cref{T3} must be restricted to the low-frequency band governed by
$\beta_\infty$, where \cref{test:t1} reliably identifies the kernel exponent.

Ultimately, \cref{T3} provides a definitive empirical mechanism to evaluate the equilibrium hypothesis. By leveraging the shape of the spectral ratio $\Xfdt(\omega)$, it successfully discriminates between thermal equilibrium and an athermal drive. This establishes a robust method to identify the true physical nature of the system's noise, proving that equilibrium can be explicitly tested provided the memory kernel is first identified with sufficient precision.

\subsection{Forward-variance hump analysis} \label{sec:empirical}

To evaluate the empirical validity of GLE framework, we apply the Bayesian estimation machinery (\cref{T1}) to high-frequency order-book data sourced from the FI-2010 benchmark dataset of \cref{T1} and run \test{} \label{T4} with 30,000 samples and rolling window of size 200. To guide the parameter identification over the available lag window, we employ informative priors configured via the custom specification $\mu_{\beta_0} = 0.30$, $\sigma_{\beta_0} = 0.08$, $\mu_{\beta_\infty} = 0.75$, and $\sigma_{\beta_\infty} = 0.10$.

These priors serve only to pin a smooth kernel over the available lag window for the overshoot computation that follows; they are not a basis for testing the rough constraint. The posterior reproduces the prior centres, $\beta_0 = 0.30 \pm 0.08$ and $\beta_\infty = 0.77 \pm 0.09$, which is precisely the non-identification of the short-time exponent documented in \cref{test:t1}. Under an informative prior the difference $\Delta\beta = \beta_0 - \beta_\infty$ has a $95\%$ highest density interval of $[-0.6963, -0.2259]$ that excludes zero, but as \cref{test:t1} establishes this separation merely restates the prior and carries no evidential weight for or against the rough constraint. The kernel it produces is nonetheless adequate for the hump test, which concerns the shape of $K(\tau)$ rather than the value of its exponents.

Building directly on this posterior parameter structure, we execute the analytical forward-variance hump test (Test 4) to investigate whether the memory kernel exhibits a local maximum at lag values greater than zero. The fractional overshoot statistic $O$ is evaluated across the full kernel posterior via:
\begin{equation}
O = \max_{\tau > 0} \frac{K(\tau) - K(0^+)}{K(0^+)}
\end{equation}
The analytical evaluation yields a mean overshoot of $O = 0.0000$ with a $95\%$ credible interval of $[0.0000, 0.0000]$. Furthermore, the posterior probability of observing a positive overshoot is strictly zero ($P(O > 0) = 0.000$).

These results indicate that the estimated memory kernel $K(\tau)$ is strictly monotonically decreasing across the tested physical-measure lag window. Consequently, the empirical order-book data does not express an inertial forward-variance hump under the fitted GLE architecture. While the framework retains the theoretical capacity to model complex hump dynamics through its multi-parameter structure, the physical-measure realization on this dataset reflects smooth, monotonic memory decay without a pronounced local maximum away from lag zero.

\subsection{Independent Kernel Cross-Check via Barrier Crossing}
\label{sec:kernel_cross_check}

To establish the generalized Langevin memory kernel as a robust physical observable rather than a curve-fitting artifact, the framework requires an independent consistency check. This condition must operate outside the high-frequency response spectrum used in \cref{T1}.

As demonstrated in \cref{sec:potential}, introducing a non-quadratic potential with an explicit energy barrier enables this verification through macro-scale regime-transition dynamics. Specifically, the Grote--Hynes relation \eqref{eq:gh_od} links the reactive frequency $\lambda_r$ at the barrier top to the barrier curvature $\omega_b^2$ and the Laplace-transformed memory kernel $\hat{K}(\lambda_r)$. The escape rate $k$ then follows from Kramers--Grote--Hynes theory as $k = (\lambda_r/\omega_b)\,k_{\mathrm{TST}}$, with the transition-state rate $k_{\mathrm{TST}} = (\omega_0/2\pi)\exp(-\Delta U/\Theta)$ set by the well curvature $\omega_0$, the barrier height $\Delta U$, and the effective temperature $\Theta$.

This setup yields a non-circular cross-validation of the kernel. The parameters $\omega_b^2$ and $\Delta U$ are extracted from a static observable: the stationary distribution of the volatility driver (a histogram). In contrast, $\hat{K}$ is estimated dynamically from temporal correlations in \cref{T1}. By comparing the predicted escape rate $k_{\text{pred}}$ against the independently observed rate $k_{\text{obs}}$ (counted directly from regime transitions in the time series), we can rigorously validate the kernel.

The test requires a volatility process that exhibits distinct metastable regimes separated by an identifiable barrier. We searched for such structure in $\mathbb{P}$-measure realized volatility across multiple timescales and estimation methods, on SPY and on a cross-section of individual equities, summarized in \cref{tab:bistability_search}.

\begin{table}[!htb]
\centering
\small
\begin{tabular}{lrrrrrcc}
\toprule
Configuration & Obs.\ & Skew & $a$ & $c_3$ & $d$
              & $\Delta U_{\max}/\Theta$ & Bistable? \\
\hline
Daily G--K, 2010--2024              & 3,773   & $+0.319$ & $1.017$ & $-0.127$ & $9.0\times10^{-4}$ & $2.43$ & No \\
Daily G--K, 2010--2026              & 4,154   & $+0.335$ & $1.019$ & $-0.134$ & $9.9\times10^{-4}$ & $2.61$ & No \\
5m bars, 1h windows, 2026           & 359     & $+0.223$ & $0.922$ & $-0.105$ & $3.1\times10^{-2}$ & $2.19$ & No \\
5m bars, 30m windows, 2026          & 779     & $+0.066$ & $1.001$ & $-0.026$ & $6.5\times10^{-5}$ & $2.42$ & No \\
Pooled idiosyncratic, 2010--2024    & 147,147 & $+0.179$ & $1.005$ & $-0.068$ & $2.6\times10^{-4}$ & $7.19$ & No \\
\bottomrule
\end{tabular}
\caption{Search for bistability in equity realized volatility. Coefficients
are maximum-likelihood estimates of \eqref{potential} on the standardized
driver. \emph{Skew} is the sample driver skewness, standard error
$\sqrt{6/n}$; this applies to the single-series rows, since the pooled row's
observations are serially and cross-sectionally dependent and its effective
count is far below $n$. $\Delta U_{\max}=\log[T/(10\tau)]$ is the largest
barrier the sample could resolve. Entries in the last column are bounds:
\emph{No} means no double well up to $\Delta U_{\max}$, not a bare absence.}
\label{tab:bistability_search}
\end{table}

Four notes clarify the construction of \cref{tab:bistability_search}. First, the coefficients are maximum-likelihood estimates on the raw driver. We do not use least squares on the binned $-\log P(y)$. The binned residuals are heteroscedastic. The variance of $-\log \hat P$ in a bin of probability $p$ is of order $1/(np)$. Equal weighting lets the sparsest tail bins dominate. This effect suppresses $c_3$ by roughly a factor of six on the daily series.

Second, bistability implies conditions on the region containing data. We require two minima of $U$ and the intervening barrier to lie inside the observed support. We test this using a likelihood ratio against the null hypothesis that no second minimum exists inside the support. A parametric bootstrap calibrates this null. The statistic is identically zero in every row. Without the support requirement, the fitted quartic places a spurious second minimum far outside the sampled range in several configurations.

Third, $\Delta U_{\max}$ follows from the escape-rate scaling. Dwell time in a well of depth $\Delta U$ is of order $e^{\Delta U/\Theta}$ relaxation times. A sample of length $T$ with integrated autocorrelation time $\tau$ contains at least $m$ expected crossings only for $\Delta U/\Theta \le \log[T/(m\tau)]$. We set $m=10$. Detectability depends on elapsed time measured in relaxation times rather than the raw number of observations. This explains why the pooled panel reaches $7.19\,\Theta$ while individual series stop near $2.5$. The panel removes three common factors. This matches the number admitted by the Marchenko-Pastur edge $(1+\sqrt{N/T})^2 = 1.21$ for $T=3{,}773$ and $N=39$. The verdict is unchanged at one and five factors.

Fourth, all estimates assume $P(y) \propto e^{-U(y)}$. This relation holds for \eqref{eq:intro-gle} only in the equilibrium regime. In that region, the noise satisfies the fluctuation-dissipation relation against the kernel. We have not run \cref{test:fdt} on these series. The coefficients are therefore conditional on equilibrium. This qualification applies to $c_3$ exactly as it does to $d$.

The intraday rows use SPY 5-minute bars over 60 sessions from 2026-05-05 to 2026-07-30. We sample these in bar time with non-overlapping windows to exclude overnight and weekend gaps. Realized variance built from $m$ squared returns carries an estimator skewness of $-0.41$ at $m=13$ and $-0.62$ at $m=6$. The positive values in those rows are therefore attained against a negative bias. They likely understate the underlying asymmetry. Neither value is individually significant at $z=+1.7$ and $z=+0.8$. They should be read as consistent in sign with the daily estimates rather than as independent confirmation. The sign of $c_3$ is negative and the driver skewness positive in every configuration.

For the daily dataset, maximum-likelihood estimation on the raw driver gives $a = 1.017$, $c_3 = -0.127$ and $d = 9.0\times10^{-4}$. The cubic and quartic terms are not on the same footing. Over the observed support, the cubic term does roughly thirty-five times the work of the quartic. The asymmetry is also visible in the data without reference to any fit. The driver skewness is $0.319$ against a standard error of $\sqrt{6/n} = 0.040$. This is eight standard errors from zero. The quartic coefficient carries no comparable signature.

The daily potential is therefore asymmetric. This asymmetry is not just a correction to a Gaussian baseline. A purely quadratic potential produces a stationary law with zero skewness. All of the measured skewness is therefore carried by $c_3$. We observe $U''' = c_3 + 6dy$. At the edge of the support, $6dy = 0.014$ against $c_3 = -0.127$. The instrument of \cref{prop:rho} is therefore measured rather than posited. The quartic coefficient is retained for confinement. It is not identified by these data.

The absence of a double well at daily timescales for SPY does not diminish the role of the quartic potential, nor does it preclude bistability ($a < 0$) in other asset classes, single-stock dynamics, or extreme market regimes. The potential serves four distinct physical and modeling purposes, of which bistability is only one:
\begin{enumerate}
\item \emph{Mean reversion} ($a > 0$): the curvature at the minimum sets the relaxation timescale of the variance and, together with $\hat{K}$, determines the decay profile of the forward variance curve.
\item \emph{Asymmetry} ($c_3 \neq 0$): the cubic term controls the skewness of the stationary variance distribution. By \cref{prop:rho}, the correlation $\rho$ does not enter the VIX law, so $c_3$ (via $U'''$) is the sole instrument available on the VIX side to shift the implied volatility smile independently of the SPX skew.
\item \emph{Tail control} ($d > 0$): the quartic term guarantees global confinement of the potential, controls which moments of the variance distribution exist, and fixes the wings of both implied volatility surfaces through the moment formula of \cite{Lee2004}.
\item \emph{Bistability} ($a < 0$): when active, a potential barrier determines the regime dwell time and provides the independent kernel cross-check described above.
\end{enumerate}

Our empirical finding is that purpose (4) is inactive at daily timescales
for SPY. The potential operates in the single-well regime ($a>0$), and the
sample could have resolved a barrier of up to $2.43\,\Theta$, so the finding
is a bound rather than an absence. Purpose (2) is not merely available but
measured: $c_3$ is eight standard errors from zero, which makes the $U'''$
instrument of \cref{prop:rho} an empirical fact rather than a modelling
option. Purpose (3) is retained on structural rather than empirical grounds,
since $d$ is not identified in any configuration we examine. Purpose (1) is
unchanged.

\subsection{The leverage kernel and empirical results} \label{test:leverage}

\test{}\label{T6} investigates whether the leverage effect exhibits a non-trivial term structure or remains purely instantaneous. The memoryless null model $K_{YX} = \rho\,\delta$ from \cref{rem:rho-limit} posits:
\begin{equation}
H_0: K_{YX}(\tau) = \rho\,\delta(\tau),
\end{equation}
which is tested against an alternative with extended support. The target object is the lagged return-variance cross-correlation $C(\tau) = \mathrm{corr}\big(r_t,\, \sigma^2_{t+\tau}\big)$, showing a pronounced spike at $\tau=0$ under the null and a multi-day decay profile under the alternative. The test statistic measures the relative weight carried at positive lags:
\begin{equation}
L \;=\; \frac{\sum_{\tau>0}\lvert C(\tau)\rvert}{\lvert C(0)\rvert},
\end{equation}
which vanishes under a delta function and is strictly positive under a term structure. The empirical null distribution is constructed via block sign-resampling surrogates that disrupt lagged dependencies while preserving contemporaneous correlations.

A stronger directional test (\test{}\label{T7}) evaluates time-reversal asymmetry via the Zumbach inequality \eqref{eq:zumbach}. The null hypothesis of time-reversal symmetry,
\begin{equation}
\EE^{\mathbb{P}}[r_t^2\sigma_{t+\tau}^2] \;=\; \EE^{\mathbb{P}}[r_{t+\tau}^2 \sigma_t^2],
\end{equation}
is tested against the measured excess of the first term over the second across short lags. The test statistic integrates this difference over the short-lag band, evaluated against time-reversed joint surrogates that enforce symmetry while preserving marginal properties. Rejecting this null establishes that $K_{YX}$ dominates $K_{XY}$, confirming the physical-measure validity of the no-arbitrage triangularity (\cref{prop:triangular}) and Onsager reciprocity relations (\cref{cor:onsager}).

\begin{table}[!htb]
\centering
\resizebox{\textwidth}{!}{%
\begin{tabular}{llccccc}
\toprule
Test & Dataset & Frequency & Observed Stat. & Null 95th & $p$-value & Verdict \\
\midrule
\Cref{T6} (Leverage Term Structure) & SPY (2010--2024) & Daily & $10.7580$ & $7.7713$ & $0.0000$ & Reject $H_0$ \\
\Cref{T7} (Zumbach Asymmetry) & FI-2010 (6 Stocks) & Tick / LOB & $9.397 \times 10^{-6}$ & $9.369 \times 10^{-6}$ & $0.0000$ & Reject $H_0$ \\
\bottomrule
\end{tabular}%
}
\caption{Empirical test results for the leverage term structure (\cref{T6}) and Zumbach time-reversal asymmetry (\cref{T7}) across datasets.}
\label{tab:empirical_results}
\end{table}

As summarized in \cref{tab:empirical_results}, applying \cref{T6} to daily SPY data ($N=3773$ trading days) yields an observed statistic of $L = 10.7580$, substantially exceeding the surrogate null 95th percentile ($7.7713$) with a $p$-value of $0.0000$. This confirms that macroscopic leverage possesses a robust term structure that cannot be captured by an instantaneous parameter $\rho$. Conversely, applying \cref{T7} to high-frequency limit order book data (FI-2010 pooled across 6 stocks, $N=217,404$ events) cleanly rejects time-reversal symmetry ($p = 0.0000$), verifying microscopic Zumbach asymmetry and establishing the physical-measure dominance of the leverage kernel $K_{YX}$ over $K_{XY}$ at the tick scale.

\section{Conclusion} \label{sec:conclusion}

The GLE originates in non-equilibrium statistical mechanics. This paper proposes importing it as a framework for stochastic volatility. By replacing fractional Brownian motion, the standard engine of rough volatility, the GLE relies on a memory kernel $K$, a potential $U$, and a noise covariance $C$. This construction explicitly decouples the scaling index and the memory exponent into separate features of the kernel, allows the driver noise to differ from the friction, and utilizes the potential to supply asymmetry and confinement that the correlation $\rho$ cannot. Consequently, both the pure power-law kernel of rough volatility and the memoryless ADO construction are formalized as specific, constrained limits of this broader class (\cref{prop:welded,prop:ado}).

We demonstrate that the kernel is recoverable from a response function, the noise covariance from fluctuations, and their ratio (the fluctuation-dissipation ratio $\Xfdt(\omega)$) separates models that the option surface cannot (\cref{cor:discriminator}). Applying kernel estimation machinery to FI-2010 order-book data yields a stable tail exponent $\beta_\infty \approx 0.80$, providing direct evidence for the slow algebraic decay required by the two-exponent kernel. While the short-time exponent $\beta_0$ generates a shape degeneracy over the available physical-measure lag window, this unconstrained parameter precisely isolates the need for the joint physical and risk-neutral calibration carried out in the companion paper \cite{ItkinGLE2b}.

Where the empirical tests operate decisively, they reject the constrained corners. The leverage term structure test (\cref{T6}) rejects the memoryless delta-function null on daily SPY data with a $p$-value of $0.0000$, confirming that the off-diagonal kernel $K_{YX}$ carries support and that leverage requires more than an instantaneous correlation $\rho$. Furthermore, the Zumbach asymmetry test (\cref{T7}) rejects time-reversal symmetry at the tick scale, establishing the physical-measure dominance of $K_{YX}$ over $K_{XY}$ and verifying the directional failure of Onsager reciprocity required by no-arbitrage conditions (\cref{cor:onsager}). The fluctuation-dissipation test (\cref{T3}) is shown to achieve complete power against the equilibrium alternative by $N=4{,}096$, establishing that the driver noise is measurable in principle. It has not been run on data here, since it requires a kernel exponent identified in the same frequency band.

The potential $U$ operates reliably in the single-well regime. While no double well is observed in SPY realized volatility at daily timescales, the potential fulfills its remaining structural mandates: cubic asymmetry ($U'''$) shifts the variance skewness without touching $\rho$ (\cref{prop:rho}), and quartic confinement fixes the tail moments.

\myparagraph{What the non-identification of $\bm\beta_0$ means.}
The empirical results of \Cref{sec:design} establish that the tail exponent
$\beta_\infty$ is robustly measured across datasets and asset classes (the
estimation was repeated on FI-2010 limit order book data at event-time and
calendar-time resolutions down to five milliseconds, on TrueFX EUR/USD tick data
at ten millisecond resolution, on Kraken BTC/USDT trade data at one hundred
millisecond resolution, and on one-minute Bitcoin intraday bars from Yahoo
Finance resampled to a one-second grid), while the short-time exponent
$\beta_0$ remains unidentified in every configuration tested. This asymmetry is
not a failure of the estimation procedure. It is the central empirical finding
of the paper, and it carries three implications for stochastic volatility
modeling.

The first implication concerns observability. The short-time exponent governs
phenomena that are theoretically well defined but practically inaccessible at
any achievable physical-measure frequency. Through the spectral map of
\Cref{sec:corner}, $\beta_0$ controls the local regularity of the variance
driver and, via $\mathcal{S}(T)\propto T^{\beta_0-1}$, the behavior of the
short-maturity at-the-money skew. The crossover timescale $1/\mu$ is estimated
at fewer than ten events in the limit order book and below one second in foreign
exchange, which places the short-time regime at sub-observable lags. At those
timescales, market microstructure noise, bid-ask bounce, and the discrete price
grid overwhelm the latent variance signal. The variance driver $Y_t$ is never
observed directly. It is filtered through noisy proxies whose signal-to-noise
ratio deteriorates at ultra-high frequencies. The non-identification of
$\beta_0$ is therefore a statement about the resolution of available data, not
about the absence of a short-time power law. In the physical setting from which
the GLE is borrowed, $\beta_0$ plays the role of a coordinate that is well
defined in the equations of motion but cannot be measured with the available
instruments.

The second implication concerns the appropriate domain of each exponent. The
tail exponent $\beta_\infty$ governs the slow decay of the forward variance
curve and is naturally estimated from physical-measure time series. The
short-time exponent $\beta_0$ governs the short-maturity skew and is naturally
estimated from the cross-section of option prices, where the market's
risk-neutral expectation filters microstructure noise and aggregates information
over the life of each option. A one-day option does not price tick-by-tick order
book dynamics. It prices the expected variance path over the full trading day.
The pricing kernel may further amplify the contribution of short-time dynamics
if investors demand a premium for bearing short-term variance risk, as the
persistent contango in VIX futures suggests they do. The two exponents are
therefore identified from fundamentally different experiments, one temporal and
one cross-sectional. The fact that $\beta_0$ is invisible to the temporal
experiment does not imply it is invisible to the cross-sectional one.

The third implication is that the non-identification result itself constrains
the model class. Any GLE specification with a two-exponent kernel possesses a
flat direction in parameter space at the physical measure: different pairs
$(\beta_0,\mu)$ produce observationally equivalent dynamics at the timescales
resolved by current data. This degeneracy is a falsifiable prediction of the
framework. If future risk-neutral calibrations were to identify a distinct
short-time exponent, the degeneracy would be broken and the decoupling of
scaling and memory would be confirmed. If instead the joint SPX and VIX
calibration of the companion paper \cite{ItkinGLE2b} finds that
$\beta_0=\beta_\infty$ provides the best fit, the rough volatility constraint is
vindicated and the GLE architecture reduces to a more parsimonious form. Either
outcome advances the understanding of volatility memory beyond the current state
of the literature. The physical-measure tests reported here supply one half of
that program. They resolve the long-memory tail, confirm that the short-memory
regime is not accessible to time-series estimators at current resolutions, and
thereby define the precise question the risk-neutral calibration must answer.

\medskip
Taken together, the physical-measure results supply one half of a larger
program. The memory kernel's tail is resolved, the leverage term structure is
confirmed, the fluctuation-dissipation violation is testable, and none of the
kill criteria in \Cref{sec:design} have been crossed. The GLE architecture
therefore survives the tests designed to refute it. The companion paper
\cite{ItkinGLE2b} will complete the program by carrying the framework to the
risk-neutral measure, where short-maturity options provide the cross-sectional
leverage that physical-measure time series cannot supply, and where the joint
SPX and VIX calibration will confirm or reject the decoupling of scaling and
memory that the present paper establishes as an open, falsifiable hypothesis.

Finally, while the empirical implementation in this study relies on publicly available benchmarks (such as FI-2010 order-book data and daily Yahoo Finance records), which can carry limitations regarding institutional depth and coverage, validating these physical-measure findings on proprietary, industry-grade datasets remains a natural and valuable direction for future work.

\section*{Disclosure statement}

No potential conflict of interest was reported by the authors.

\section*{Funding}

No funding was received.

\section*{Disclaimer}

Opinions expressed here are author's own, and do not represent views of their employers. A standard disclaimer applies.

\section*{Acknowledgments}

I am grateful to my long-term co-author Igor Halperin for insightful discussions regarding Langevin and generalized GLE, as well as our joint work on the Marketron model.

\printbibliography[title={References}]

\appendixpage
\appendix
\numberwithin{equation}{section}
\setcounter{equation}{0}

\section{Proofs of various theorems}

\subsection{Proof of \cref{prop:athermal}} \label{app:athermal}

\begin{proof}
The Laplace transform of $K(t)=t^{-\alpha}/\Gamma(1-\alpha)$ is $\Lap{K}(z)=z^{\alpha-1}$, valid for $\alpha\in(0,1)$ and $\mathrm{Re}\,z>0$, evaluated via the standard integral
\begin{equation}
\int_0^\infty e^{-zt}t^{-\alpha}\,d t = \Gamma(1-\alpha) z^{\alpha-1}.
\end{equation}
Hence $z\Lap{K}(z)=z^{\alpha}$, and \eqref{eq:response} yields
\begin{equation}
\Lap{G}(z) = (z^{\alpha}+\lambda)^{-1},
\end{equation}
which is \eqref{eq:G-athermal}.

Its inverse is given in terms of the Mittag-Leffler function. For $\mathrm{Re}\,z>0$, expanding the response function gives
\begin{equation}
(z^{\alpha}+\lambda)^{-1} = \sum_{k\ge0}(-\lambda)^k z^{-\alpha(k+1)}.
\end{equation}
Inverting term by term using $\mathcal{L}^{-1}[z^{-\beta}](t) = t^{\beta-1}/\Gamma(\beta)$ yields
\begin{equation}
G(t) = \sum_{k\ge0} \frac{(-\lambda)^k t^{\alpha(k+1)-1}}{\Gamma(\alpha(k+1))} = t^{\alpha-1}E_{\alpha,\alpha}(-\lambda t^{\alpha}).
\end{equation}
Since $E_{\alpha,\alpha}(0)=1/\Gamma(\alpha)$, the leading short-time behavior is
\begin{equation}
G(t) \sim \frac{t^{\alpha-1}}{\Gamma(\alpha)} \quad \text{as } t \to 0.
\end{equation}

With white noise $\xi$, the spectral density of $Y$ is $S_Y(\omega)=|\Lap{G}(i\omega)|^2\,\sigma_\xi^2$. For large $|\omega|$,
\begin{equation}
|\Lap{G}(i\omega)|^2 = |(i\omega)^{\alpha}+\lambda|^{-2} \sim |\omega|^{-2\alpha},
\end{equation}
so $S_Y(\omega)\sim|\omega|^{-2\alpha}$. Matching against the fractional power spectrum $|\omega|^{-(2H+1)}$ yields
\begin{equation}
2\alpha = 2H+1 \implies H = \alpha - \tfrac12.
\end{equation}
As $\alpha$ ranges over $(\tfrac12,1)$, the Hurst parameter $H$ ranges over $(0,\tfrac12)$.
\end{proof}

\subsection{Proof of \cref{prop:fdt}} \label{app:fdt}

\begin{proof}
Under the FDT the noise covariance is $C=\Theta K$, so its spectral density is
$S_\xi(\omega)=\Theta\,\mathrm{Re}\,\Lap{K}(i\omega)$. With $\Lap{K}(z)=
z^{\alpha-1}$ and $(i\omega)^{\alpha-1}=|\omega|^{\alpha-1}e^{i(\alpha-1)
\mathrm{sgn}(\omega)\pi/2}$, the real part is $S_\xi(\omega)=\Theta\cos\!\big(
(1-\alpha)\tfrac{\pi}{2}\big)|\omega|^{\alpha-1}$, a positive constant times
$|\omega|^{\alpha-1}$. At $\lambda=0$, \eqref{eq:response} gives
$\Lap{G}(z)=z^{-\alpha}$, so $|\Lap{G}(i\omega)|^2=|\omega|^{-2\alpha}$. Hence
$S_Y(\omega)=|\Lap{G}(i\omega)|^2 S_\xi(\omega)\sim|\omega|^{-2\alpha}\cdot
|\omega|^{\alpha-1}=|\omega|^{-(\alpha+1)}$. Matching against
$|\omega|^{-(2H+1)}$ gives $\alpha+1=2H+1$, that is $H=\alpha/2$. For the noise,
a covariance $C(t)\sim t^{-\alpha}$ is that of fractional Gaussian noise whose
increments have Hurst index $H_\xi$ determined by $2-2H_\xi=\alpha$, giving
$H_\xi=1-\alpha/2$, which exceeds $\tfrac12$ for every $\alpha\in(0,1)$.

\end{proof}

\subsection{Proof of \cref{prop:welded}} \label{app:welded}

\begin{proof}
Formula \eqref{eq:r-fbm} is obtained by substituting the fBm covariance $\EE[X_tX_s]=\tfrac12(t^{2H}+s^{2H}-|t-s|^{2H})$ into the Lamperti definition
\begin{equation}
r(\tau) = \EE[Y_{\tau}Y_0] = e^{-H\tau}\,\EE[X_{e^{\tau}}X_1]
\end{equation}
with the stationary normalization $r(0)=1$, and simplifying using $e^{\tau/2}-e^{-\tau/2}=2\sinh(\tau/2)$.

\medskip\textit{Singular behavior near the origin.}
For the singular behaviour we use a Tauberian argument. Near $\tau=0$, formula \eqref{eq:r-fbm} expands as $r(\tau)=1-\tfrac12|\tau|^{2H}+O(\tau^{2})$, because $\cosh(H\tau)=1+O(\tau^2)$ and
\begin{equation}
2^{2H-1}\sinh^{2H}\left(\frac{|\tau|}{2}\right) = 2^{2H-1}\left(\frac{|\tau|}{2}\right)^{2H}(1+O(\tau^2)) = \tfrac12|\tau|^{2H}(1+O(\tau^2)).
\end{equation}
Thus $1-r(\tau)\sim\tfrac12|\tau|^{2H}$ as $\tau\to0$. By the Hardy-Littlewood Tauberian theorem, a function whose value at the origin is approached as $|\tau|^{2H}$ has a Laplace transform of its complement that decays as
\begin{equation}
\Lap{r}(z) = z^{-1}-\tfrac12\Gamma(1+2H)z^{-(1+2H)}+o(z^{-(1+2H)}) \quad \text{for large } z,
\end{equation}
since $\mathcal{L}[|\tau|^{2H}](z)=\Gamma(1+2H)z^{-(1+2H)}$.

\medskip\textit{Memory function asymptotics.}
Substituting into the memory-function relation,
\begin{equation}
\Lap{\mathcal{K}}(z) = \frac{\lambda\,\Lap{r}(z)}{1-z\Lap{r}(z)} = \frac{\lambda\big(z^{-1}-\tfrac12\Gamma(1+2H)z^{-(1+2H)}+\cdots\big)}{\tfrac12\Gamma(1+2H)z^{-2H}+\cdots},
\end{equation}
where the denominator uses $1-z\Lap{r}(z)=\tfrac12\Gamma(1+2H)z^{-2H}+o(z^{-2H})$. For large $z$ the numerator is dominated by its $z^{-1}$ term, so
\begin{equation}
\Lap{\mathcal{K}}(z) \sim \frac{\lambda z^{-1}}{\tfrac12\Gamma(1+2H)z^{-2H}} = c_H z^{2H-1} \quad \text{with } c_H = \frac{2\lambda}{\Gamma(1+2H)} > 0.
\end{equation}
Applying the Tauberian theorem in the reverse direction, $\Lap{\mathcal{K}}(z)\sim c_H z^{2H-1}$ at $z=\infty$ corresponds to
\begin{equation}
\mathcal{K}(\tau) \sim \frac{c_H}{\Gamma(2-2H)}\,\tau^{-2H} \quad \text{at } \tau=0,
\end{equation}
which is \eqref{eq:welded}. This confirms the numerical finding that the large-$z$ log-log slope of $\Lap{\mathcal{K}}$ equals $2H-1$:

\begin{center}
\begin{tabular}{ccc}
\hline
$H$ & Fitted Slope & Predicted Slope ($2H-1$) \\
\hline
$0.1$ & $-0.78$ & $-0.80$ \\
$0.2$ & $-0.60$ & $-0.60$ \\
$0.3$ & $-0.40$ & $-0.40$ \\
$0.4$ & $-0.20$ & $-0.20$ \\
\hline
\end{tabular}
\end{center}
\noindent The small residual at the smallest Hurst parameter ($H=0.1$) reflects the slow onset of the asymptotic regime.
\end{proof}

\subsection{Proof of \cref{prop:ado}} \label{app:ado}

\begin{proof}
We establish three main claims.

\medskip\textit{Self-similarity.}
The stated dynamics have covariance
\begin{equation}
\mathbb{E}[V_H(t)V_H(s)] = c\,t^{2H-1}s^{2H-1}(s\wedge t)^{2-2H}
\end{equation}
for a constant $c$, which under $t\mapsto\kappa t$ and $s\mapsto\kappa s$ scales by $\kappa^{2(2H-1)}\kappa^{2-2H}=\kappa^{2H}$. A Gaussian process whose covariance is homogeneous of degree $2H$ is $H$-self-similar.

\medskip\textit{Ornstein-Uhlenbeck form.}
Write $Y(\tau)=e^{-H\tau}V_H(e^{\tau})$ and apply Itô's lemma to $Y=t^{-H}V_H$ with $t=e^{\tau}$. The drift of $V_H$ is $\frac{2H-1}{t}V_H$, so the drift of $t^{-H}V_H$ collects the coefficients $-H+(2H-1)=H-1$, giving a term $(H-1)t^{-1}Y\cdot t=-(1-H)Y$ in $\tau$. The diffusion coefficient $B_Ht^{H-1/2}$ becomes $B_Ht^{-1/2}$ after the factor $t^{-H}$, and with $d W_t=t^{1/2}d\widetilde W_{\tau}$ (from $d\tau=d t/t$) it becomes the constant $B_H$. Hence,
\begin{equation}
d Y = -(1-H)Y\,d\tau + B_H\,d\widetilde W_{\tau},
\end{equation}
which is an Ornstein-Uhlenbeck process whose kernel in Lamperti time is $\delta$ and whose autocorrelation $e^{-(1-H)|\tau|}$ has cusp exponent $1$ for every $H$.

\medskip\textit{Square-root reduction.}
For the ADO-Heston variance given by
\begin{equation}
d v_t = \zeta t^{H-1}d t + \xi B_H t^{H-1/2}\sqrt{v_t}\,d W_t,
\end{equation}
set $v_t=t^{2H}u(\tau)$ with $\tau=\log t$. The same computation clears every explicit power of $t$ and yields
\begin{equation}
d u = -2Hu\,d\tau + \zeta e^{-H\tau}d\tau + \xi B_H\sqrt{u}\,d\widetilde W_{\tau},
\end{equation}
a time-homogeneous square-root diffusion with an exponentially decaying source. This last identity was confirmed by simulating both sides against a common noise stream, with the terminal mean and standard deviation agreeing to four parts in $10^{5}$.
\end{proof}

\subsection{Proof of \cref{prop:link}} \label{app:link}

\begin{proof}
Let the double-well diffusion be given by the SDE
\begin{equation}
dY_t = -U'(Y_t) dt + \sigma dW_t
\end{equation}
with potential $U(y) = \frac{1}{4}y^4 - \frac{1}{2}y^2$. The invariant measure is $\mu_Y(dy) \propto \exp(-2U(y)/\sigma^2)dy$. This potential features stable minima at $y = \pm 1$ separated by a barrier of height $\Delta U = 1/4$ at $y = 0$. By Kramers' escape rate theory, the mean first passage time $\mathbb{E}[\tau_Y]$ to transition from $+1$ to $-1$ follows the Arrhenius scaling,
\begin{equation}
\mathbb{E}[\tau_Y] \sim \frac{2\pi}{\sqrt{\vert{}U''(0)U''(1)\vert{}}} \exp\left(\frac{2\Delta U}{\sigma^2}\right) \quad \text{as } \sigma \to 0.
\end{equation}
Thus, regime persistence grows exponentially with $1/\sigma^2$.

Conversely, let $X_t$ solve $dX_t = -\kappa X_t dt + \gamma dB_t$, possessing the Gaussian invariant measure $\mathcal{N}(0, \frac{\gamma^2}{2\kappa})$. Let $Z_t = \varphi(X_t)$ such that $Z_t \sim \mu_Y$. We can select $\kappa$ to match the integrated autocorrelation time of $Y_t$. However, the transition time of $Z_t$ between the states corresponding to $+1$ and $-1$ is strictly governed by the hitting times of the underlying Ornstein-Uhlenbeck process $X_t$. The mean first passage time for $X_t$ to reach the origin from the mean of the mapped $+1$ well grows at most logarithmically or quadratically in the inverse noise parameter, entirely lacking the exponential $\exp(\mathcal{O}(1/\sigma^2))$ barrier-crossing dependence. Consequently, the dynamic regime persistence fundamentally differs.
\end{proof}

\subsection{Proof of \cref{prop:gh}} \label{app:gh}

\begin{proof}
Consider the GLE for a particle traversing a parabolic barrier $U(x) = -\frac{1}{2}m\omega_b^2 x^2$:
\begin{equation}
m \ddot{x}(t) = m\omega_b^2 x(t) - \int_0^t K(t-s)\dot{x}(s)ds + \xi(t),
\end{equation}
where $\xi(t)$ is a stationary Gaussian noise satisfying the fluctuation-dissipation relation $\langle \xi(t)\xi(s) \rangle = \beta^{-1} K(|t-s|)$. Taking the Laplace transform $\hat{x}(z) = \int_0^\infty e^{-zt}x(t)dt$ of the deterministic equation of motion (averaging over the noise for the unstable mode) with boundary conditions at the barrier top $x(0)=0$ and initial velocity $v(0)=v_0$, we obtain:
\begin{equation}
m(z^2 \hat{x}(z) - v_0) = m\omega_b^2 \hat{x}(z) - z \Lap{K}(z)\hat{x}(z). \end{equation}
Rearranging for $\hat{x}(z)$ yields:
\begin{equation}
\left[ m z^2 + z \Lap{K}(z) - m\omega_b^2 \right] \hat{x}(z) = m v_0.
\end{equation}
The stability of the barrier crossing is determined by the roots of
$m z^2 + z \Lap{K}(z) - m\omega_b^2 = 0$. A completely monotone kernel has the
Bernstein representation $K(t)=\int_0^\infty e^{-st}\,d\mu(s)$, so
$\Lap{K}(z)=\int_0^\infty (z+s)^{-1}\,d\mu(s)$ and $z\Lap{K}(z)=\int_0^\infty \tfrac{z}{z+s}\,d\mu(s)$ is strictly increasing in $z$, each integrand being
increasing. Hence $m z^2 + z\Lap{K}(z)$ increases from $0$ to $\infty$, and the
characteristic equation has a unique positive root $\lambda_r$, the reactive
frequency, and the unstable mode grows as $e^{\lambda_r t}$. In the overdamped
regime the mass is absent from the outset. The first-order barrier equation
$\int_0^t K(t-s)\dot{x}(s)\,ds = \omega_b^2 x(t) + \xi(t)$, with $\omega_b^2$ the
barrier curvature, has Laplace form $z\Lap{K}(z)\hat{x}(z)=\omega_b^2\hat{x}(z)$,
so the reactive frequency solves $z\Lap{K}(z)=\omega_b^2$, which is
\eqref{eq:gh_od}, and the same monotonicity gives a unique root.

By \cite{GroteHynes1980}, the transmission coefficient $\kappa_{\mathrm{GH}}$ relating the actual escape rate $k$ to the Transition State Theory rate $k_{\mathrm{TST}}$ is given by the ratio of this reactive frequency to the spatial curvature, $\kappa_{\mathrm{GH}} = \lambda_r / \omega_b$. Thus, $k = (\lambda_r / \omega_b) k_{\mathrm{TST}}$. This definitively proves that the escape rate is uniquely determined by the Laplace transform of the memory kernel strictly evaluated at the single real positive frequency $\lambda_r$.
\end{proof}

\subsection{Proof of \cref{prop:rho}} \label{app:rho}

\begin{proof}
Let $(\Omega, \mathcal{F}, (\mathcal{F}_t)_{t \ge 0}, \mathbb{Q})$ be a filtered probability space supporting two standard Brownian motions, $W^S$ and $W^Y$, with instantaneous correlation given by $\langle W^S, W^Y \rangle_t = \rho t$ for some correlation parameter $\rho \in (-1, 1)$. We proceed in 5 steps.

\medskip\textit{Autonomy of the Variance Driver:}
The variance driver process $Y = (Y_t)_{t \ge 0}$ is generated by the generalized Langevin dynamics \eqref{eq:intro-gle}, driven by the noise $\xi$ built on $W^Y$ alone. By the autonomy condition the kernel $K$, the potential $U$, and the noise $\xi$ involve neither the spot price $S_t$ nor the spot-driving Brownian motion $W^S$. Consequently $Y$ is a measurable functional of the path of $W^Y$ and of nothing else, $Y_u = \Phi_u\big((W^Y_s)_{s \le u}\big)$ for each $u$. No Markov property is assumed: the map $\Phi$ may carry the full memory of the kernel.

\medskip\textit{Invariance of the Path Measure:}
Let $\mathcal{C}([0, T+\Delta])$ denote the space of continuous paths. The probability law of the stochastic process $Y$ on this path space is uniquely determined by the generalized Langevin dynamics \eqref{eq:intro-gle} and the probability law of the driving Brownian motion $W^Y$. Crucially, the marginal process $W^Y$ is a standard Brownian motion under the risk-neutral measure $\mathbb{Q}$ regardless of the value of $\rho$. The parameter $\rho$ appears solely in the cross-covariation structure between $W^S$ and $W^Y$, but it has zero effect on the internal distribution of the increments of $W^Y$ itself. Therefore, the path measure $\mathbb{Q}_Y$ governing the realization of the process $(Y_u)_{u \ge 0}$ is entirely independent of $\rho$.

\medskip\textit{Functional Mapping to $\mathrm{VIX}_T$:}
The instantaneous variance $V_u$ is defined as a deterministic mapping of the driver process:
\begin{equation}
V_u = f(Y_u).
\end{equation}
The VIX index squared at maturity $T$ is given by:
\begin{equation}
\mathrm{VIX}_T^2 = \frac{1}{\Delta} \mathbb{E}_T^\mathbb{Q}\left[ \int_T^{T+\Delta} V_u \, du \right].
\end{equation}
Under the autonomous framework, $\mathrm{VIX}_T$ can be expressed as a measurable functional $\Psi$ acting on the path $(Y_u)_{u \ge 0}$:
\begin{equation}
\mathrm{VIX}_T = \Psi\left((Y_u)_{u \ge 0}\right).
\end{equation}

\medskip\textit{Push-forward Measure and Law Independence:}
Let $\mathbb{P}_{\mathrm{VIX}_T}$ denote the law (push-forward measure) of the random variable $\mathrm{VIX}_T$ under $\mathbb{Q}$:
\begin{equation}
\mathbb{P}_{\mathrm{VIX}_T}(B) = \mathbb{Q}\left(\mathrm{VIX}_T \in B\right) = \mathbb{Q}\left(\Psi\left((Y_u)_{u \ge 0}\right) \in B\right).
\end{equation}
Because the probability measure governing the underlying path $(Y_u)_{u \ge 0}$ is completely independent of $\rho$, it follows directly that the push-forward measure $\mathbb{P}_{\mathrm{VIX}_T}$ (and thus the entire probability law of $\mathrm{VIX}_T$) does not depend on $\rho$.

\medskip\textit{Implication for VIX Derivatives:}
The prices of VIX derivatives, such as VIX futures and VIX options, are expressed as risk-neutral expectations of payoffs dependent on $\mathrm{VIX}_T$. Since the entire probability distribution of $\mathrm{VIX}_T$ is invariant with respect to $\rho$, these expectations are likewise independent of $\rho$. Consequently, prices of VIX derivatives carry no sensitivity to $\rho$, proving that $\rho$ cannot be identified or calibrated using VIX market data alone.
\end{proof}

\subsection{Proof of \cref{prop:Mp}} \label{app:Mp}

\begin{proof}
We prove the proposition by constructing an explicit counterexample using a single exponential friction kernel, which is strictly completely monotone. Let $K(t) = \gamma e^{-\alpha t}$ with $\gamma, \alpha > 0$. Its Laplace transform is $\Lap{K}(z) = \frac{\gamma}{z+\alpha}$.

With inertia $m>0$, the response function in the Laplace domain is given by
\begin{equation}
\Lap{G}(z) = \left(m z^2 + z \Lap{K}(z) + \lambda\right)^{-1} = \frac{z+\alpha}{m z^3 + m \alpha z^2 + (\gamma + \lambda) z + \lambda \alpha}.
\end{equation}

For the response $G(t)$ to be completely monotone, $\Lap{G}(z)$ must be a Stieltjes function, which strictly requires that all its poles lie on the negative real axis. The poles are the roots of the cubic characteristic polynomial $P(z) = m z^3 + m \alpha z^2 + (\gamma + \lambda) z + \lambda \alpha$.

Choose the parameter set $(m, \alpha, \gamma, \lambda) = (1, 1, 1, 1)$. The denominator simplifies to $P(z) = z^3 + z^2 + 2z + 1$. The discriminant of this cubic polynomial is $\Delta = -23$. Because $\Delta < 0$, the polynomial possesses one real root and a conjugate pair of complex roots. The existence of complex poles implies that the inverse Laplace transform $G(t)$ contains damped oscillatory components and therefore changes sign, violating monotonicity. Because the roots of a polynomial depend continuously on its coefficients, this complex pole structure (and the consequent sign change) persists in an open neighborhood of $(1, 1, 1, 1)$, proving the claim.
\end{proof}

\subsection{Proof of \cref{prop:triangular}} \label{app:triangular}

\begin{proof}
No-arbitrage requires the discounted price $e^{-rt}S_t$ to be a local martingale
under $\mathbb{Q}$, so $X=\log S$ is a semimartingale whose drift is fixed by the
martingale property,
\begin{equation} \label{eq:mart-drift}
dX_t = \big(r - \tfrac12\varphi(Y_t)\big)\,dt + \sqrt{\varphi(Y_t)}\,dW_t^{\mathbb Q},
\end{equation}
and carries no memory. The first row of \eqref{eq:gle2d} enters the drift of $X$ as
the predictable finite-variation term
\begin{equation} \label{eq:firstrow}
\int_0^t\!\big[K_{XX}(t-s)\,dX_s + K_{XY}(t-s)\,dY_s\big],
\end{equation}
using $\dot X_s\,ds=dX_s$ and $\dot Y_s\,ds=dY_s$. This is the return autocorrelation and the predictable premium of the readings above, and under
$\mathbb{P}$ both are present. The change of measure to $\mathbb{Q}$,
$dW^{\mathbb Q}_t=dW^{\mathbb P}_t+\theta_t\,dt$, shifts the drift of $X$ by
$-\sqrt{\varphi(Y_t)}\,\theta_t$, and \eqref{eq:mart-drift} then determines the
market price of risk through
\begin{equation} \label{eq:price-mem-risk}
\sqrt{\varphi(Y_t)}\,\theta_t \;=\; b^{\mathbb P}_t - \big( r-\tfrac12\varphi(Y_t)\big),
\end{equation}
with $b^{\mathbb P}_t$ the drift of $X$ under $\mathbb{P}$. The memory
\eqref{eq:firstrow} is part of $b^{\mathbb P}_t$, so it passes entirely into
$\theta_t$. It is annihilated in the $\mathbb{Q}$-drift rather than transferred
to the second row, because the Girsanov shift acts on the drift alone. Hence
under $\mathbb{Q}$ the price is the memoryless martingale \eqref{eq:mart-drift}
and $K_{XX}=K_{XY}=0$. Relation \eqref{eq:price-mem-risk} is the explicit
$\mathbb{P}$-to-$\mathbb{Q}$ map, and it locates the first-row memory as the
market price of memory risk. The variance driver $Y$ is not a traded asset, so no
martingale condition constrains the second row, and $K_{YX}$ and $K_{YY}$ are
unrestricted.
\end{proof}

\end{document}